\def\eqref#1{equation~\ref{#1}}
\def\1{\bm{1}}
\DeclareMathAlphabet{\mathsfit}{\encodingdefault}{\sfdefault}{m}{sl}
\SetMathAlphabet{\mathsfit}{bold}{\encodingdefault}{\sfdefault}{bx}{n}
\newtheorem{definition}{\textbf{Definition}}
\newtheorem{theorem}{\textbf{Theorem}}
\newtheorem{proposition}{\textbf{Proposition}}
\newtheorem{lemma}{\textbf{Lemma}}
\newenvironment{proof}{{\noindent\it Proof.}\quad}{\hfill $\square$\par}
\newenvironment{proof-sketch}{{\noindent\it Proof sketch.}\quad}{\hfill $\square$\par}
\newtheorem{remark}{Remark}
\title{Can Reinforcement Learning Solve \\ Asymmetric Combinatorial-Continuous \\Zero-Sum Games?}
\author{Yuheng Li, Panpan Wang \& Haipeng Chen  \\
Department of Data Science\\
College of William \& Mary\\
\texttt{\{yli95,pwang12,hchen23\}@wm.edu}
}
\begin{document}

\maketitle
\begin{abstract}
    There have been extensive studies on learning in zero-sum games, focusing on the analysis of the existence and algorithmic convergence of Nash equilibrium (NE). Existing studies mainly focus on symmetric games where the strategy spaces of the players are of the same type and size. For the few studies that do consider asymmetric games, they are mostly restricted to matrix games. In this paper, we define and study a new practical class of asymmetric games called two-player Asymmetric Combinatorial-Continuous zEro-Sum (ACCES) games, featuring a combinatorial action space for one player and an infinite compact space for the other. Such ACCES games have broad implications in the real world, particularly in combinatorial optimization problems (COPs) where one player optimizes a solution in a combinatorial space, and the opponent plays against it in an infinite (continuous) compact space (e.g., a nature player deciding epistemic parameters of the environmental model). Our first key contribution is to prove the existence of NE for two-player ACCES games, using the idea of essentially finite game approximation. Building on the theoretical insights and double oracle (DO)-based solutions to complex zero-sum games, our second contribution is to design the novel algorithm, Combinatorial Continuous DO (CCDO), to solve ACCES games, and prove the convergence of the proposed algorithm. Considering the NP-hardness of most COPs and recent advancements in reinforcement learning (RL)-based solutions to COPs, our third contribution is to propose a practical algorithm to solve NE in the real world, CCDORL (based on CCDO) and provide the novel convergence analysis in the ACCES game. Experimental results across diverse instances of COPs demonstrate the empirical effectiveness of our algorithms. The code of this work is available at \url{https://github.com/wmd3i/CCDO-RL}.
\end{abstract}
\section{Introduction}

Zero-sum games depict a game theoretic paradigm among adversarial players, where the increase in one player's rewards inevitably leads to a decrease in the other's \citep{lipton1994simple}. It is prevalent in various real-world domains such as board games \citep{ghory2004reinforcement}, poker \citep{zinkevich2007regret}, and price games \citep{kakkar2022blockchain}. Since solving NE plays a vital role in the game theory, from fictitious play \citep{brown1951iterative}, and double oracle (DO) \citep{mcmahan2003planning} to Policy-Space Response Oracles \citep{lanctot2017unified}, numerous algorithms have endeavored to find NE while providing a theoretical analysis of algorithmic convergence and approximation \citep{jafari2001no, waugh2015unified, balandat2016minimizing, dinh2022online, tang2023regret}. 


One way to classify zero-sum games is based on the symmetry of the players' strategy spaces \citep{amir2008symmetric, cox2013provision, stella2018bio}. A \textit{symmetric} game describes a scenario where players can be interchanged \citep{cheng2004notes}, meaning that all the players have the same strategy set and payoff matrices. Otherwise, the game is \textit{asymmetric}. Symmetric games have been well-studied in terms of both theories \citep{tuyls2018symmetric, hefti2017equilibria} and applications \citep{bichler2021learning, altman2011symmetric}, partly due to their simple and structural properties. However, in many scenarios such as Leduc Poker \citep{tuyls2018generalised}, network security games \citep{wilder2018equilibrium}, and cash-in-transit VRP \citep{ghannadpour2020new}, the strategy spaces of the players are asymmetric.

Despite the extensive literature on asymmetric games, most current studies remain confined to relatively traditional backgrounds such as the Battle of the Sexes game \citep{tuyls2018symmetric} and Leduc Poker \citep{tuyls2018generalised}. As a kind of strategy space that is ubiquitous in real-world applications, much less exploration has been made toward asymmetric games with combinatorial strategy spaces, except for some sporadic studies like min-max traveling distance of multi-VRP \citep{narasimha2013ant}, security scheduling with attacker \citep{jain2011double}, max-min influence maximization \citep{chen2016robust}, etc. Typically, these studies assume finite action spaces for all players in asymmetric game settings. These studies neglect another broad class of asymmetric games where the other player's strategy space is not only asymmetric but also infinite compact (e.g., real-valued vector intervals). Such infinite compact strategy spaces in asymmetric games have broad implications in the real world, which can be interpreted as the physical or environmental parameters of COPs, such as the attractive degrees to targets in the security game \citep{xu2021robust}, uncertain network edge weights in influence maximization problems \citep{kalimeris2019robust} and  
 unknown outer condition effect on the charging demand in facility location problems \citep{an2020battery, TIRKOLAEE2020340}, and customer demand in routing problems \citep{FLORIO20231081}.

Formally, we define this class of games as a two-player Asymmetric Combinatorial-Continuous zEro-Sum (ACCES) game with dynamics of simultaneous move and static form. Player 1’s strategy space is combinatorial, while Player 2’s is infinite and compact with a continuous utility function. As an illustrative example (more examples in Section \ref{sec_exp}), we consider a patrolling game between a defender (Player 1) and an attacker (Player 2). To prevent attacks from the attacker, the defender chooses a feasible route to patrol a subset $\Pi$ of all targets $\{1, 2, …, N\}$ meanwhile satisfying the total distance constraint $L_{all}$ because of limited patrol time. For the attacker, the strategy is the attack probability vector $\{p_1, p_2, …, p_N\}$ for the target set. Besides, each target $i \in \{1, 2, …, N\}$ has its own value $v_i$. The utility function for the defender is the expectation of successfully protected target values, i.e. $ U_d = \sum_{i=1}^N p_i v_i \mathbb{I}_{\Pi}$. The attacker’s utility function is then: $U_a = - U_d$.


For this new class of games, our key research question is:

\textit{``Whether and how can we solve asymmetric combinatorial-continuous zero-sum games?"}

This question can be decomposed into the following sub-questions: 

\textbf{1) Does NE exist?} Before finding solutions to ACCES games, the first question is whether such games are guaranteed to have NE. Due to the asymmetry of the game, especially the different, less-structured properties of the strategy sets (combinatorial-continuous), this question is much less straightforward than established results in matrix games \citep{nash1950non}, market games \citep{peck1992market}, and continuous games \citep{glicksberg1952further, fan1952fixed, reny2005non}, for which the existence of NE has already been proven.



\textbf{2) Is there any algorithm that can converge to NE?} If the existence of NE holds, the next question is then to find an algorithm that can converge to the NE. 
Due to the infinite strategy set of one player, common equilibrium-seeking algorithms \citep{mcmahan2003planning, dinh2022online} in matrix games lose their convergence guarantees because they rely on the finiteness of strategy sets to terminate iterations. On the other hand, classic algorithms in continuous games \citep{balandat2016minimizing, Adam2021DOcontin} do not work for ACCES games because the continuity of the unity function no longer holds with the discrete (combinatorial) strategy set of the other player.

\textbf{3) Is there a practical algorithm that we can actually implement in the real world?} While it is critical to understand the theoretical questions above, an equally important practical question is -- how can we design efficient and practical algorithms to actually solve the ACCES games? This is extremely challenging as even a sub-problem of finding the best response for the combinatorial strategy space of one player is known to be NP-hard, let alone the entire ACCES game. 


We give a YES answer to each of the three sub-questions. Our main contributions are as follows. 

1) We are the first to summarize and define the class of ACCES games, elucidating its rationale and practical significance via examples from min-max games and security games. 

2) We prove the existence of mixed NE in ACCES games through the finite-game approximation, which relies on two important properties that have yet to be established, the weakly sequential compactness and continuity of expected utility function. To address this gap, we prove these properties in Section 4, which provides further insight into developing solution algorithms for ACCES games. 


3) We propose two solution algorithms, CCDO and CCDO-RL, for solving the ACCES games. CCDO extends the idea of double oracle (DO)-based solutions from zero-sum finite games \citep{mcmahan2003planning} to ACCES games, while with different convergence guarantee results. Due to the NP-hardness in most COPs, it is infeasible to find the exact best response for the combinatorial player in a limited time. Therefore it's critical to consider the solution algorithm and convergence analysis with approximate best responses (ABRs). We bridge this gap by proposing CCDO-RL which adopts RL as an efficient sub-routine to compute the ABRs, inspired by recent advancements in applying reinforcement learning to learn fast, effective, and generalizable solutions for COPs \citep{khalil2017learning,nazari2018reinforcement,deudon2018learning,bengio2020machine,Chen2021rl4im,berto2023rl4co}. Furthermore, novel convergence analysis of CCDO-RL is studied, along with different ABRs' influence on convergence have been discussed in Section 5. 


4) We validate our algorithms on three distinct instances of ACCES games -- adversarial covering salesman problem (ACSP), adversarial capacitated vehicle routing problem (ACVRP), and patrolling game (PG). Empirical results are well aligned with our theoretical insights: our proposed CCDO-RL algorithm can learn to converge to NE in all game instances. For the player with the combinatorial strategy space, our algorithm is better than baselines regardless of the type of adversary or problem size, especially in terms of generalizability.

\section{Related Work}


\textbf{Symmetric and asymmetric games.} Symmetric games are initially proposed by \citep{von1947theory} and studied under the context of non-cooperative \citep{nash1950non}, economic \citep{hammerstein1994game}, and two-person \citep{washburn2014two} games. \citet{amir2008symmetric} focus on pure strategy equilibrium with supermodular payoff functions. \citet{fey2012symmetric} studies symmetric games only with asymmetric equilibria. A few studies extend the theories on symmetric games to asymmetric settings \citep{cox2013provision,tuyls2018generalised}, or transform asymmetric games to symmetric ones \citep{tuyls2018generalised}. These studies usually concentrate on a specific type of classic game such as metric games or poker.
\citet{narasimha2013ant} and \citet{carlsson2009solving} are among the first to consider asymmetric games involving a combinatorial player in a variant of traveling salesman problem with multiple vehicles. \citet{jain2011double} studies a security game where the defender decides the location of sources and the attacker chooses a path to find the source. \citet{xu2014solving} extends the idea of \citet{jain2011double} and consider discrete time domains and moving targets. In the covering problem, \citet{rahmattalabi2019exploring} considers the failure of some nodes and models it as one zero-sum game. All of these studies are limited to finite games.



\textbf{Equilibrium learning in zero-sum games.} DO \citep{mcmahan2003planning} is a powerful tool for solving complex strategic normal-form games by iteratively expanding the players' strategy sets and efficiently finding equilibria. The idea has been extended toward better NE computation, different forms of games, convergence rate, etc. \citet{mcaleer2020pipeline} and \citet{zhou2023efficient} focus on accelerating the computation of the approximate equilibrium. Different diversity metrics are proposed by \citep{balduzzi2019open, perez2021modelling, liu2021towards, yao2024policy} to find more effective and various strategies. In extensive-form games, \citet{mcaleer2021xdo} works to achieve linear convergence to approximate equilibrium and \citet{tang2023regret} studies sample complexity. Except for DO and its variants, NE learning in zero-sum settings remains appealing in periodic games \citep{fiez2021online}, polymatrix games \citep{cai2016zero}, and Markov games \citep{zhu2020online}, etc. As far as we know, they are all limited to matrix games in theories related to the existence and convergence of NE although \citet{mcaleer2021xdo} conduct experiments on continuous-action games by Deep RL. \citet{balandat2016minimizing, Adam2021DOcontin} study the NE convergence of continuous games but two players are symmetric.


\textbf{RL for COPs.} RL has emerged as an effective and generalizable method to solve COPs, where the underlying idea is to decompose the original combinatorial action selection in COPs into a sequence of greedily selected individual actions, using a deep RL policy or value function that is usually represented via various function approximation methods such as graph neural networks \citep{khalil2017learning,joshi2019efficient, manchanda2020gcomb}, recurrent \citep{bello2016neural}, and attention networks \citep{kool2018attention}. Search algorithms, such as active search \citep{hottung2021efficient}, Monte Carlo tree search \citep{fu2021generalize}, and multiple rollouts \citep{kwon2020pomo}, are further integrated into these frameworks to enhance the solution qualities of RL algorithms during inference time. Integrating representation learning and search algorithms, RL has shown promising abilities to learn efficient and generalizable solutions to complex COPs. This motivates us to adopt RL as the backbone method to compute the COPs in a subgame of the ACCESS games.


\section{Preliminaries}

\subsection{Two-Player Asymmetric Combinatorial-Continuous zEro-Sum (ACCES) Games}

Most two-player zero-sum strategic games are described as a payoff matrix $\Pi_{n\times m}$ where the rows and columns represent pure strategies for the two players. This does not hold for ACCES games because the strategy space for the continuous player is infinite. Hence, we provide the first formal formulation of ACCES games.


Formally, we represent a two-player ACCES game a tuple $\{X, Y, u\}$, where $X$ is the combinatorial but finite space, and $Y$ is a compact and infinite metric space, as the pure strategy space for players 1 and 2 respectively. $u$ is the utility function mapping the joint strategy space $X \times Y$ to a scalar $\mathbb{R}$, with the continuity on $Y$ when fixing $x \in X$. The utility function of Player 1 is $u$, and for Player 2 is $-u$. For the security patrolling game exemplified in the introduction, $X$ should be all routes that satisfy the distance constraint, $Y$ is the real vector interval $[0, 1]$ for the attack probability $p_i$ on each target $i=1,...,N$, and $u$ is the expectation of successfully attacked target negative values. 


The mixed strategy in the combinatorial-continuous game is defined separately because two players own entirely different forms of strategy spaces. For Player 1, the set of \textit{mixed strategies} can be written as $\bigtriangleup_{X} \triangleq \{p = [p(x_1), ..., p(x_{|X|})] | \sum_{i=1}^{|X|} p(x_i) = 1, p(x_i) \geq 0 \}$, where $p(x_i)$ is regarded as the chosen probability of the pure strategy $x_i \in X$. For Player 2, a mixed strategy is a Borel probability measure $q$ on $Y$ which can be seen as a probability distribution function $q:\mathcal{F} \rightarrow [0, 1]$, where $\mathcal{F}$ is $\sigma$- algebra of $Y$. The set of mixed strategies of Player 2 is denoted by $\bigtriangleup_{Y}$. Every mixed strategy in $\bigtriangleup_{X}$ corresponds to a distribution on all feasible routes in the security patrolling game, and that in $\bigtriangleup_{Y}$ passes as a cumulative distribution function defined on $[0, 1]^N$.

Due to the infiniteness of strategy space $\mathcal{Y}$, the support of $q$ may be infinite. Given a mixed strategy $(p, q) \in \bigtriangleup_{X} \times \bigtriangleup_{Y} \triangleq \bigtriangleup$, the \textit{expected utility function} of Player 1 can be defined as 
\begin{eqnarray}
    \begin{aligned}
        U(p, q) = \sum_{x \in X} \int_{y \in Y} p(x) u(x,y) dq.
    \end{aligned}
\end{eqnarray}
Correspondingly, the expected utility function of Player 2 is $-U(p, q)$.

\subsection{Nash Equilibrium in Two-Player ACCES Games}
In two-player ACCES games, a mixed strategy pair $(p^*, q^*)$ is \textit{Nash equilibrium (NE)} if and only if
\begin{eqnarray}
    \begin{aligned}
        U(p, q^*) \leq U(p^*, q^*) \leq U(p^*, q), \forall p \in \bigtriangleup_{X}, q \in \bigtriangleup_{Y}.
    \end{aligned}
\end{eqnarray}

Additionally, we denote \textit{$\epsilon$- NE} as a mixed strategy pair $(p^*, q^*)$ which satisfies
\begin{eqnarray}
    \begin{aligned}
        U(p, q^*) - \epsilon \leq U(p^*, q^*) \leq U(p^*, q) + \epsilon, \forall p \in \bigtriangleup_{X}, q \in \bigtriangleup_{Y}.
    \end{aligned}
\end{eqnarray}

\textit{Best response} $\mathbb{BR}_i(\pi_{-i})$ defines the best pure strategy for Player $i$ for a fixed mixed strategy $\pi_{-i}$ of the other player $-i$. In ACCES games, the set of best responses for the two players are:
\begin{eqnarray}
    \begin{aligned}
        \mathbb{BR}_1(q) = \{x \in X| U(x, q) = \max_{x' \in X} U(x', q) \},
        \mathbb{BR}_2(p) = \{y \in Y| U(p, y) = \min_{y' \in Y} U(p, y') \}.
    \end{aligned}
\end{eqnarray}
In many situations, finding the best response is inherently difficult, especially in most combinatorial optimization problems which are $NP$-hard. 
Approximate or heuristic algorithms are often used to sacrifice solution accuracy for faster computation. We use \textit{$\epsilon$- best response} $\mathbb{BR}_i^{\epsilon}(\pi_{-i})$ to define the solution that is no worse than the ground truth best response by $\epsilon$:
\begin{eqnarray}
    \begin{aligned}
        \mathbb{BR}_1^{\epsilon}(q) = \{x \in X| U(x, q) \geq \max_{x' \in X} U(x', q) - \epsilon \}, 
        \mathbb{BR}_2^{\epsilon}(p) = \{y \in Y| U(p, y) \leq \min_{y' \in Y} U(p, y') +\epsilon \}.
    \end{aligned}
\end{eqnarray}

\section{The Existence of Nash Equilibrium}

We first study the existence of Nash equilibrium in ACCES games, which is a critical step before designing any actual solutions. For a two-player ACCES game $\mathcal{G}=\{X, Y, u\}$, the strategy set $X$ of Player 1 is finite consisting of certain permutations/combinations of nodes, although the number of the strategy set is possibly exponentially large. In contrast, the strategy set $Y$ of Player 2 is an infinite and compact set. Although the utility function of Player 2 are continuous on $Y$ when fixing $x \in X$, but its strategy's infiniteness disqualifies the finite condition of matrix games and makes the convergence to NE less straightforward. Meanwhile, the discreteness of $X$ destroys the continuity of the utility function on $X \times Y$.
To see whether the existence of mixed strategy NE still holds in ACCES games, we must better understand the game structure. Our thought flow is as follows.

On a high level, we first prove Proposition \ref{prop1} of weakly sequential compactness in the mixed strategy product space of ACCES games. Then, the continuity of the expected utility function on the product space, which contributes to the existence proof and the following convergence of algorithms in Section 5, is proven in Proposition \ref{prop2}. Note that these are two key technical novelties that not only are critical intermediate steps for the proof of the existence of NE, but also build the foundation of the analysis of convergence to NE of our proposed algorithms in Section \ref{sec_alg}.

\begin{proposition} \label{prop1} {\normalfont[Weakly Sequential Compactness.]}
    Set the ACCES game is $\mathcal{G} = (X, Y,u)$, where $X$ is finite, $Y$ is a nonempty compact metric space, and the utility function $u$ is continuous on $Y$ fixing $x \in X$. Then the joint mixed strategy space $\bigtriangleup \triangleq \bigtriangleup_X \times \bigtriangleup_Y$ is weakly sequentially compact. 
\end{proposition}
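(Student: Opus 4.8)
The plan is to establish weak sequential compactness of each factor separately and then combine them through a nested subsequence argument. Recall that a sequence $(p_n, q_n)$ in $\bigtriangleup$ should be said to converge weakly to $(p, q)$ precisely when $p_n \to p$ in $\bigtriangleup_X$ and $\int_Y f \, dq_n \to \int_Y f \, dq$ for every $f$ in $C(Y)$, the space of continuous real functions on $Y$; so the goal is to show that every sequence in $\bigtriangleup$ admits a subsequence converging in this sense. The combinatorial factor is immediate: since $X$ is finite, $\bigtriangleup_X$ is the standard probability simplex in $\mathbb{R}^{|X|}$, which is closed and bounded, hence compact by Heine--Borel, so every sequence $(p_n)$ has a convergent subsequence (in finite dimensions weak and norm convergence coincide).

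The substantive work lies in the continuous factor $\bigtriangleup_Y$, the set of Borel probability measures on the compact metric space $Y$. Here I would invoke the standard machinery of weak convergence of measures. Because $Y$ is a compact metric space, $C(Y)$ is a separable Banach space, and by the Riesz representation theorem its dual is the space of finite signed Radon measures on $Y$, with $\bigtriangleup_Y$ sitting as a weak$^*$ closed subset of the unit ball. The Banach--Alaoglu theorem makes this unit ball weak$^*$ compact, and separability of $C(Y)$ renders the weak$^*$ topology on it metrizable, so weak$^*$ compactness upgrades to weak$^*$ \emph{sequential} compactness. Equivalently, one may appeal directly to Prokhorov's theorem: tightness of $\bigtriangleup_Y$ is automatic because $Y$ itself is compact, yielding relative sequential compactness, and the weak limit of probability measures is again a probability measure. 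Either route gives that every sequence $(q_n)$ in $\bigtriangleup_Y$ has a subsequence converging weakly to some $q \in \bigtriangleup_Y$.

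Finally I would combine the two factors. Given an arbitrary sequence $(p_n, q_n) \in \bigtriangleup$, first extract a subsequence along which $p_n$ converges in $\bigtriangleup_X$; then, restricting to that subsequence and applying the result for $\bigtriangleup_Y$, extract a further subsequence along which $q_n$ converges weakly. Along this final subsequence both coordinates converge, so $(p_n, q_n)$ converges weakly in $\bigtriangleup$, establishing weak sequential compactness of the product.

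The step I expect to be the main obstacle is the continuous factor: pinning down the correct notion of weak convergence on $\bigtriangleup_Y$ and justifying that the weak$^*$ topology on the probability measures is simultaneously compact and metrizable, so that compactness implies sequential compactness. Verifying that the weak$^*$ limit is genuinely a probability measure rather than a sub-probability measure, i.e.\ that no mass escapes, is the crux, and this is exactly where compactness of $Y$ (or the resulting automatic tightness) is used; by contrast the finite factor and the subsequence bookkeeping are routine.
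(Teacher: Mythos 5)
Your proof is correct, and it reaches the same conclusion by a route that differs from the paper's in two substantive ways. First, the paper handles the product structure by invoking a cited lemma (its Lemma \ref{lem1}, a Billingsley-type theorem: if $X \times Y$ is separable, then joint weak convergence of product measures is equivalent to coordinate-wise weak convergence), and therefore spends a step verifying separability of $X \times Y$; you instead take coordinate-wise convergence as the notion of weak convergence on $\bigtriangleup$ and run a nested subsequence extraction, which sidesteps that lemma entirely. This is legitimate, and in fact the two notions coincide here for an elementary reason you could make explicit: since $X$ is finite, any bounded continuous $f$ on $X \times Y$ integrates against $p_n \times q_n$ as the finite sum $\sum_x p_n(x)\int_Y f(x,y)\,dq_n$, so coordinate-wise convergence immediately gives convergence of these integrals. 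Noting this is worthwhile because the proposition is later used for continuity of $U(p,q)=\sum_x \int_Y p(x)u(x,y)\,dq$, which is an integral against the product measure. Second, where the paper cites external results for weak sequential compactness of $\bigtriangleup_Y$ (properties of mixed strategies in continuous games, and Proposition 1 of the Adam et al.\ reference), you prove it from scratch via Riesz representation, Banach--Alaoglu, and metrizability of the weak$^*$ topology from separability of $C(Y)$ (or equivalently Prokhorov with tightness automatic from compactness of $Y$). Your version is therefore self-contained and makes visible exactly where compactness of $Y$ prevents mass from escaping; the paper's version is shorter but outsources precisely this point to the literature. Both treatments of the finite factor $\bigtriangleup_X$ are identical.
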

\begin{proof-sketch}
    To prove the product space $X \times Y$ is weakly sequential compact, we just need to prove two parts, weakly sequential compactness and separability of $X, Y$ based on Lemma \ref{lem1}. See the full proof in Appendix \ref{Appendix_A1}.
\end{proof-sketch}

\begin{proposition} \label{prop2} {\normalfont[Continuity of Expected Utility Function.]}
    The expected utility function $U(p,q) \triangleq \sum_{x\in X}\int_{y\in Y} p(x)u(x,y)dq$ is continuous on the joint mixed strategy space $\bigtriangleup$, $\forall p \in \bigtriangleup_X, q \in \bigtriangleup_Y$.
\end{proposition}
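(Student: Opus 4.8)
The plan is to exploit the product structure $\bigtriangleup = \bigtriangleup_X \times \bigtriangleup_Y$ together with the finiteness of $X$, reducing the claim to the continuity of each scalar integral $q \mapsto \int_Y u(x,y)\,dq$. Since $X$ is finite, $\bigtriangleup_X$ carries the ordinary Euclidean topology of the probability simplex; the relevant topology on $\bigtriangleup_Y$ is the weak topology of Borel probability measures already used in Proposition \ref{prop1}, and because $Y$ is a compact metric space this topology is metrizable (e.g. via the L\'evy--Prokhorov metric), so it suffices to prove \emph{sequential} continuity. Concretely, I would take an arbitrary convergent sequence $(p_n, q_n) \to (p, q)$ in $\bigtriangleup$, which under the product topology means $p_n \to p$ in $\bigtriangleup_X$ and $q_n \to q$ in the weak topology on $\bigtriangleup_Y$, and show $U(p_n, q_n) \to U(p, q)$.

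First I would rewrite the expected utility as a finite combination of one-dimensional integrals,
\begin{equation}
U(p,q) = \sum_{x\in X} p(x)\, g_x(q), \qquad g_x(q) \triangleq \int_{Y} u(x,y)\,dq .
\end{equation}
For each fixed $x \in X$, the map $u(x,\cdot)$ is continuous on the compact set $Y$, hence bounded and continuous; by the very definition of weak convergence of measures, $q_n \to q$ weakly implies $g_x(q_n) \to g_x(q)$, so each $g_x$ is continuous on $\bigtriangleup_Y$. I would also record the uniform bound $|g_x(q)| \le M_x \triangleq \max_{y\in Y}|u(x,y)| < \infty$, which exists by compactness of $Y$ and will be needed to control one of the cross terms below.

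Next I would control the difference through the standard add-and-subtract decomposition
\begin{equation}
p_n(x) g_x(q_n) - p(x) g_x(q) = p_n(x)\bigl(g_x(q_n) - g_x(q)\bigr) + \bigl(p_n(x) - p(x)\bigr) g_x(q).
\end{equation}
Since $p_n(x) \in [0,1]$ is bounded and $g_x(q_n) \to g_x(q)$, the first term tends to $0$; since $p_n(x) \to p(x)$ and $|g_x(q)| \le M_x$, the second term tends to $0$ as well. Summing over the finitely many indices $x \in X$ and applying the triangle inequality then yields $U(p_n,q_n) \to U(p,q)$, which establishes joint continuity.

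The main obstacle is not the algebra but the topological bookkeeping: one must fix the correct topology on $\bigtriangleup_Y$ (the weak topology consistent with Proposition \ref{prop1}) and justify that continuity reduces to sequential continuity via the metrizability of the space of Borel probability measures on a compact metric space. Once that is in place, the compactness of $Y$ supplies exactly the boundedness of $u(x,\cdot)$ required both to apply the weak-convergence definition to $g_x$ and to tame the cross terms, while the finiteness of $X$ makes the interchange of the finite sum with the limit immediate.
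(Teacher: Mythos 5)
Your proof is correct and follows essentially the same route as the paper's: the same add-and-subtract decomposition of $U(p_n,q_n)-U(p,q)$ into a term controlled by $p_n\to p$ (using boundedness of $u$, which follows from continuity on compact $Y$ and finiteness of $X$) and a term controlled by weak convergence $q_n\Rightarrow q$ tested against the bounded continuous functions $u(x,\cdot)$, with the finite sum over $X$ passing through the limit. If anything, your version is tidier on the topological bookkeeping: you fix the weak topology on $\bigtriangleup_Y$ and invoke its metrizability to reduce to sequential continuity, whereas the paper first runs an $\epsilon$--$\delta$ argument in an ad hoc sup-metric $\rho_2$ on measures (which is strictly finer than the weak topology) before giving the weak-convergence argument that actually matches Proposition \ref{prop1}.
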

\begin{proof-sketch}
    We prove the continuity of the expected utility function by definition. First, define the metric distance on mixed strategy sets $\bigtriangleup_{X}, \bigtriangleup_{Y}$ and their product space $\bigtriangleup_{X} \times \bigtriangleup_{Y}$. Following this, the distance between two mixed strategy pairs $(p, q)$ and $(p', q')$ can be scaled to the distance sum between $p, p'$ and $q, q'$ because of the compactness of $Y$, the continuity of utility function on $Y$, and Proposition \ref{prop1}. The full proof is provided in Appendix \ref{Appendix_A1}.
\end{proof-sketch}

Via Proposition \ref{prop2} and the continuity of $U$ on $Y$, the following two statements hold:
\begin{itemize}
    \item When $p_n \Rightarrow p$ in $\bigtriangleup_X$, $q_n \Rightarrow q$ in $\bigtriangleup_Y$, $U(p_n, q_n) \rightarrow U(p, q).$
    \item When $p_n \Rightarrow p$ in $\bigtriangleup_X$, $y_n \rightarrow y$ in $Y$, $U(p_n, y_n) \rightarrow U(p, y).$
\end{itemize}


Secondly, for the proof of equilibrium existence, we build on the idea in \citep{Myerson1991GameT} which approximates the strategy spaces by finite grids. To describe the approximation and the feasibility of approximation by finite games, we first introduce definitions of $\alpha$-approximate games and essentially finite games. Based on these definitions, we establish Propositions \ref{epsilon_alpha}, and \ref{ess_appro_exist}, where the proofs are provided in Appendix \ref{Appendix_A1}.
\begin{definition}{\normalfont[$\alpha$-Approximate Game.]}
    Assume there exist two strategic games $\mathcal{G}=\langle X, Y, u \rangle$, $\mathcal{G}'= \langle  X, Y,  u' \rangle$ in which $u$ and $u'$ are bounded and measurable utility functions. If every joint strategy $(x, y) \in X \times Y$, $|u(x, y)-u'(x, y)| \leq \alpha$, then $\mathcal{G}'$ is an $\alpha$-approximation of $\mathcal{G}$. 
\end{definition}
\begin{definition}{\normalfont[Essentially Finite Game.]}
    The game $\mathcal{G} = \langle X, Y, u \rangle$ is essentially finite, if and only if there exists some finite strategic game $\hat{\mathcal{G}} = \langle \hat{X}, \hat{Y},  \hat{u} \rangle$ and measurable functions $f_1: X \rightarrow \hat{X}$, $f_2: Y \rightarrow \hat{Y}$ s.t. $u(x, y) = \hat{u}(f_1(x), f_2(y)), \forall (x, y) \in X \times Y.$

\end{definition}


\begin{proposition}\label{epsilon_alpha} {\normalfont[Approximate NE of ACCES.]}
    $\mathcal{G}'=\langle X, Y, \Tilde{u} \rangle$ is $\alpha$-approximation of $\mathcal{G}=\langle X, Y, u \rangle$, where $\mathcal{G}$ is an ACCES game. $(p^*, q^*)$ is an $\epsilon$-equilibrium of $\mathcal{G}'$, then $(p^*, q^*)$ is an $(\epsilon + 2\alpha)$-equilibrium of $\mathcal{G}$.
\end{proposition}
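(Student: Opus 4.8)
The plan is to reduce the claim to a single elementary estimate --- that replacing $u$ by $\tilde u$ perturbs every expected utility by at most $\alpha$ --- and then to chain the $\epsilon$-equilibrium inequalities of $\mathcal{G}'$ through this perturbation. Throughout, write $U$ and $\tilde U$ for the expected utility functions built from $u$ and $\tilde u$ respectively via $U(p,q) = \sum_{x\in X}\int_{y\in Y} p(x)u(x,y)\,dq$.

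First I would establish the uniform bound $|U(p,q) - \tilde U(p,q)| \le \alpha$ for every $(p,q) \in \bigtriangleup$. By linearity,
$$U(p,q) - \tilde U(p,q) = \sum_{x\in X}\int_{y\in Y} p(x)\bigl(u(x,y) - \tilde u(x,y)\bigr)\,dq,$$
the pointwise hypothesis $|u(x,y) - \tilde u(x,y)| \le \alpha$ bounds the integrand in absolute value by $\alpha$, and $p$, $q$ are probability measures so that $\sum_{x\in X} p(x)\int_{y\in Y} dq = 1$. Boundedness and measurability of $u, \tilde u$, which are part of the definition of an $\alpha$-approximate game, guarantee that all integrals are finite and the estimate is legitimate.

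Second, using that $(p^*, q^*)$ is an $\epsilon$-equilibrium of $\mathcal{G}'$, I would chain the inequalities in each direction. For the upper side, for any $q \in \bigtriangleup_Y$,
$$U(p^*, q^*) \le \tilde U(p^*, q^*) + \alpha \le \tilde U(p^*, q) + \epsilon + \alpha \le U(p^*, q) + \epsilon + 2\alpha,$$
where the two outer steps use the uniform bound and the middle step uses the $\epsilon$-equilibrium condition of $\mathcal{G}'$. Symmetrically, for any $p \in \bigtriangleup_X$,
$$U(p, q^*) \le \tilde U(p, q^*) + \alpha \le \tilde U(p^*, q^*) + \epsilon + \alpha \le U(p^*, q^*) + \epsilon + 2\alpha.$$
These are precisely the two defining inequalities of an $(\epsilon + 2\alpha)$-equilibrium of $\mathcal{G}$.

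I do not expect a serious obstacle; essentially all the content lives in the first step. The one point requiring care is that the bound $|U - \tilde U| \le \alpha$ must hold \emph{uniformly} over all mixed strategies, including measures $q$ with infinite support on $Y$, which is why the argument is phrased at the level of the integral rather than over finitely many pure strategies. Given the measurability and boundedness assumptions this estimate is immediate, and the remainder is a two-line triangle-inequality chaining in each direction.
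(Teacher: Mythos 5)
Your proof is correct and follows essentially the same route as the paper's: first establish the uniform bound $|U(p,q)-\tilde U(p,q)|\le\alpha$ over all of $\bigtriangleup$ by integrating the pointwise bound against the mixed strategies, then chain through the $\epsilon$-equilibrium inequalities of $\mathcal{G}'$ via the triangle inequality. If anything, your one-sided chaining is slightly cleaner than the paper's write-up, which phrases the conclusion as two-sided bounds of the form $|U(p^*,q^*)-U(p^*,q)|\le \epsilon+2\alpha$ even though the middle term $\tilde U(p^*,q^*)-\tilde U(p^*,q)$ is only bounded above (not in absolute value) by $\epsilon$; the one-sided inequalities you derive are exactly what the definition of an $(\epsilon+2\alpha)$-equilibrium requires.
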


\begin{proposition}\label{ess_appro_exist} {\normalfont[Essentially Finite of ACCES.]}
    For an ACCES $\mathcal{G}$, $\forall \alpha >0$, there exists an essentially finite strategic game $\hat{\mathcal{G}}=\langle X, \hat{Y}, \hat{u}\rangle$, s.t. $\hat{\mathcal{G}}$ is $\alpha$-approximation of $\mathcal{G}$.
\end{proposition}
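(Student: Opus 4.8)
The plan is to build the finite set $\hat{Y}$ and a measurable ``rounding'' map $f_2\colon Y \to \hat{Y}$ directly from a finite net of the compact space $Y$, take $\hat{X}=X$ with $f_1=\mathrm{id}_X$, and define $\hat{u}$ by evaluating the original utility at the net points. The essentially finite game that approximates $\mathcal{G}$ then carries the utility $\tilde{u}(x,y) \triangleq \hat{u}(f_1(x), f_2(y)) = u(x, f_2(y))$ on the original product space $X \times Y$; by construction $\tilde{u}$ factors through the finite game $\hat{\mathcal{G}} = \langle X, \hat{Y}, \hat{u}\rangle$, so the game $\langle X, Y, \tilde{u}\rangle$ is essentially finite, and the approximation quality is controlled by how finely the net resolves $Y$.

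First I would establish a uniform modulus of continuity across the finite family $\{u(x, \cdot)\}_{x \in X}$. For each fixed $x$, the map $u(x, \cdot)$ is continuous on the compact metric space $Y$, hence uniformly continuous, so there is $\delta_x > 0$ with $d(y,y') < \delta_x \Rightarrow |u(x,y) - u(x,y')| < \alpha$. Because $X$ is finite, $\delta \triangleq \min_{x \in X}\delta_x > 0$ is a single modulus valid for every $x$ simultaneously. This is precisely the step where the finiteness of $X$ is essential, and it is the main obstacle in the sense that the whole construction would fail for an infinite combinatorial player without extra uniformity assumptions.

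Next I would exploit compactness of $Y$. Total boundedness yields a finite cover by open balls $B(y_1, \delta/2), \dots, B(y_k, \delta/2)$. To convert this cover into a measurable partition I would disjointify it, setting $C_1 = B(y_1, \delta/2)$ and $C_j = B(y_j, \delta/2) \setminus \bigcup_{i<j} B(y_i, \delta/2)$; each $C_j$ is Borel and has diameter below $\delta$. Putting $\hat{Y} = \{y_1, \dots, y_k\}$ and defining $f_2(y) = y_j$ for $y \in C_j$ gives a measurable map (the cells are Borel) satisfying $d(y, f_2(y)) < \delta$ for all $y$. Setting $\hat{u}(x, y_j) = u(x, y_j)$ then makes $\langle X, Y, \tilde{u}\rangle$ essentially finite by construction.

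Finally, the approximation bound is immediate from the uniform modulus: for every $(x,y) \in X \times Y$,
\begin{equation}
    |\tilde{u}(x,y) - u(x,y)| = |u(x, f_2(y)) - u(x,y)| < \alpha,
\end{equation}
since $d(f_2(y), y) < \delta$. Hence $\langle X, Y, \tilde{u}\rangle$ is an $\alpha$-approximation of $\mathcal{G}$ in the sense of the $\alpha$-approximate game definition, and it factors through the finite game $\hat{\mathcal{G}} = \langle X, \hat{Y}, \hat{u}\rangle$, completing the claim. The two delicate points are securing a common $\delta$ across all $x$ (resolved by finiteness of $X$) and producing a genuine Borel partition so that $f_2$ is measurable (resolved by the disjointification trick); everything else is a routine combination of uniform continuity and total boundedness.
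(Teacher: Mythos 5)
Your proof takes essentially the same route as the paper's: uniform continuity of each $u(x,\cdot)$ on the compact space $Y$, a common modulus $\delta=\min_{x\in X}\delta_x>0$ via the finiteness of $X$, a finite ball cover of $Y$, and an approximate utility defined by evaluating $u$ at the ball centers. If anything, your version is slightly more complete: the disjointification into Borel cells makes the rounding map $f_2$ well-defined and measurable (as the definition of an essentially finite game requires), whereas the paper leaves the assignment of points lying in overlapping balls implicit.
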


\begin{proposition}\label{epsi_equil_converge} {\normalfont[Convergence of Approximate ACCES NE.]}
    $\mathcal{G}$ is an ACCES game, for each $n$, $(p_n, q_n)$ is $\epsilon_n$-NE of $\mathcal{G}$, $(p_n, q_n) \Rightarrow (p^*, q^*)$, $\epsilon_n \rightarrow \epsilon$, then $(p^*, q^*)$ is an $\epsilon$-equilibrium of $\mathcal{G}$.
\end{proposition}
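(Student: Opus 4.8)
The plan is to pass to the limit in the defining inequalities of the $\epsilon_n$-NE, using the continuity of $U$ established in Proposition \ref{prop2} together with the two convergence statements recorded immediately after it. Fix an arbitrary deviation pair $(p, q) \in \bigtriangleup_X \times \bigtriangleup_Y$; the goal is to verify the two inequalities $U(p, q^*) - \epsilon \le U(p^*, q^*)$ and $U(p^*, q^*) \le U(p^*, q) + \epsilon$. Since each $(p_n, q_n)$ is an $\epsilon_n$-NE of $\mathcal{G}$, testing its defining condition against the fixed deviations $p$ and $q$ gives, for every $n$,
\[
U(p, q_n) - \epsilon_n \le U(p_n, q_n) \le U(p_n, q) + \epsilon_n.
\]

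Next I would identify and justify the three limits appearing in this chain. Because $(p_n, q_n) \Rightarrow (p^*, q^*)$, the joint continuity of $U$ (Proposition \ref{prop2}) yields $U(p_n, q_n) \to U(p^*, q^*)$. For the leftmost term the first argument is held fixed at $p$ while $q_n \Rightarrow q^*$; continuity of $U$ in the second argument — which is subsumed by Proposition \ref{prop2}, or follows directly since $y \mapsto u(x,y)$ is bounded and continuous on the compact set $Y$, so weak convergence of $q_n$ gives $\int u(x,\cdot)\,dq_n \to \int u(x,\cdot)\,dq^*$ and hence $U(p, q_n) \to U(p, q^*)$. Symmetrically, holding the second argument fixed at $q$ while $p_n \Rightarrow p^*$ gives $U(p_n, q) \to U(p^*, q)$. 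Finally $\epsilon_n \to \epsilon$ by hypothesis.

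I would then let $n \to \infty$ in the displayed chain, using that non-strict inequalities are preserved under limits. The left half passes to $U(p, q^*) - \epsilon \le U(p^*, q^*)$ and the right half to $U(p^*, q^*) \le U(p^*, q) + \epsilon$. Since $(p, q)$ was arbitrary, both inequalities hold for all $p \in \bigtriangleup_X$ and all $q \in \bigtriangleup_Y$, which is precisely the definition of $(p^*, q^*)$ being an $\epsilon$-equilibrium of $\mathcal{G}$.

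I do not anticipate a genuine obstacle. The only point requiring care is that the two sides of the $\epsilon_n$-NE inequality are tested against \emph{fixed} deviations $p, q$ while the profile $(p_n, q_n)$ itself moves, so the relevant convergence is of the "one argument fixed, the other converging weakly" type; this is exactly what the joint continuity of Proposition \ref{prop2} supplies. The supporting ingredients — boundedness of $u$ on the compact space $Y$ to guarantee the weak-convergence limit, and stability of weak inequalities under limits — are routine.
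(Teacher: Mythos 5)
Your proof is correct and takes exactly the route the paper intends: the paper gives no detailed argument for this proposition, asserting only that it ``holds naturally'' from Proposition \ref{prop2} and Myerson's Chapter 3, and your write-up supplies precisely the missing routine details — testing the $\epsilon_n$-NE inequalities against fixed deviations $(p,q)$ and passing to the limit via the continuity of $U$ under weak convergence. No gaps.
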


Based on Chapter 3 in \cite{Myerson1991GameT} and Proposition \ref{prop2}, Proposition \ref{epsi_equil_converge} holds naturally. On account of Proposition \ref{epsilon_alpha}, \ref{ess_appro_exist}, and \ref{epsi_equil_converge}, the existence of equilibrium can be obtained. We have provided a further discussion of the existence of NE for $N$-player ACCES games in Appendix \ref{Appendix_A2}.

\begin{theorem}\label{existence} \textbf{\emph{[Existence of NE]}}
    $\mathcal{G}=\langle X, Y, u \rangle$, where $X$ is finite space, $Y$ is nonempty compact metric space, $u: X \times Y \rightarrow \mathbb{R}$ is a continuous utility function on $Y$ when fixing $x \in X$. Game $\mathcal{G}$ has a mixed strategy Nash equilibrium.
\end{theorem}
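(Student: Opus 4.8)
The plan is to realize $\mathcal{G}$ as a weak limit of finite games, following the essentially-finite-approximation scheme of \cite{Myerson1991GameT} and assembling the propositions already proved above. Concretely, I would construct a sequence of approximate equilibria with vanishing tolerance and extract a convergent subsequence whose limit is an exact equilibrium.

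First I would fix a null sequence $\alpha_n \downarrow 0$, say $\alpha_n = 1/n$. By Proposition \ref{ess_appro_exist}, for each $n$ there is an essentially finite game $\hat{\mathcal{G}}_n = \langle X, \hat{Y}_n, \hat{u}_n \rangle$ that is an $\alpha_n$-approximation of $\mathcal{G}$; here $X$ is untouched, since it is already finite, and the continuous space $Y$ is collapsed through a measurable map onto a finite set of representatives. The reduced game on $X \times \hat{Y}_n$ is a genuinely finite two-player zero-sum game, so by Nash's theorem \citep{nash1950non} it possesses a mixed-strategy equilibrium; lifting this equilibrium back along the measurable reduction produces a pair $(p_n, q_n) \in \bigtriangleup$ that is a $0$-equilibrium of $\hat{\mathcal{G}}_n$. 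Some care is needed to verify that the lifted second-player strategy is a legitimate Borel probability measure on $Y$, which holds because each finite action indexes a measurable cell of $Y$ onto which the corresponding mass may be assigned.

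Next I would move these exact equilibria of the approximations back to $\mathcal{G}$. Since $(p_n, q_n)$ is a $0$-equilibrium of the $\alpha_n$-approximation $\hat{\mathcal{G}}_n$, Proposition \ref{epsilon_alpha} shows it is a $2\alpha_n$-equilibrium of $\mathcal{G}$. Proposition \ref{prop1} guarantees that $\bigtriangleup$ is weakly sequentially compact, so $(p_n, q_n)$ has a subsequence converging weakly to some $(p^*, q^*) \in \bigtriangleup$, while the associated tolerances $\epsilon_n = 2\alpha_n$ tend to $0$.

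Finally I would apply Proposition \ref{epsi_equil_converge}, which rests on the continuity of the expected utility $U$ from Proposition \ref{prop2}, to the convergent subsequence, concluding that the weak limit $(p^*, q^*)$ is a $0$-equilibrium and hence a mixed-strategy Nash equilibrium of $\mathcal{G}$. The analytic core of the argument is already carried by Propositions \ref{prop1} and \ref{prop2}; within the theorem itself, the step demanding the most care is the second one, namely producing the equilibrium of each essentially finite game and lifting it to a valid element of $\bigtriangleup_Y$, so that the vanishing-tolerance limit argument applies verbatim.
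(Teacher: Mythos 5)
Your proposal follows exactly the paper's own argument: take $\alpha_n \to 0$, invoke Proposition~\ref{ess_appro_exist} to get essentially finite $\alpha_n$-approximations, apply Nash's theorem to obtain their equilibria, use Proposition~\ref{epsilon_alpha} to view these as $2\alpha_n$-equilibria of $\mathcal{G}$, extract a weakly convergent subsequence via Proposition~\ref{prop1}, and conclude with Proposition~\ref{epsi_equil_converge}. The only difference is that you make explicit the lifting of the finite-game equilibrium to a Borel measure on $Y$, a point the paper's proof passes over silently; this is a correct and welcome refinement, not a deviation.
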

\begin{proof-sketch}
    For any sequence $\{\alpha_k\} \rightarrow 0$, there exists an essentially finite game sequence $\{\mathcal{G}_k\}$ (Prop \ref{ess_appro_exist}) such that its NE $\{(p_k^*, q_k^*)\}$ is $2\alpha_k$- NE of the initial game $\mathcal{G}$ (Prop\ref{epsilon_alpha}). We can prove that the sequence $\{(p_k^*, q_k^*)\}$ converges and its convergent point $(p^*, q^*)$ is the NE of the game $\mathcal{G}$ (Prop \ref{epsi_equil_converge}).
\end{proof-sketch}

\begin{remark}
    The proving idea of approximating by finite games is one feasible and concise way to prove Theorem \ref{existence}. After analyzing the basic properties in Proposition \ref{prop1} and \ref{prop2}, the existence of NE can also be proved by the fixed point theorem in \citep{glicksberg1952further} while going a little bit of a detour to fit the problem into its proof framework. 
\end{remark}

\section{CCDO \& CCDO-RL}

In this section, we introduce the Combinatorial-Continuous Double Oracle (CCDO) algorithm and prove its convergence in Section \ref{sec_alg}, and propose the practical version of CCDO, CCDO-RL with the convergence analysis in Sections \ref{sec_ccdorl} and \ref{secCCDOA_prf} respectively. CCDO has a similar algorithmic framework to DO but differs significantly in the convergence analysis. Moreover, we consider one phenomenon that never occurs in DO but is common in the COPs: the performance of the approximate best response (ABR) to another player's mixed policy is even worse than that of NE. To handle this phenomenon, we design a CCDO approximate (CCDOA) algorithm, and further propose CCDO-RL (Algorithm \ref{alg:CCDORL}), where we use RL as the underlying oracle solver for both players in the CCDOA framework inspired by recent advancements in using RL to solve COPs. We provide a further convergence analysis on CCDO-RL, and examine how different ABRs influence convergence. 



\subsection{CCDO and its Convergence}\label{sec_alg}

DO is originally proposed to solve NE in the large zero-sum matrix games \citep{mcmahan2003planning}. The key idea is to iteratively compute the mixed NE in the subgame and expand the subgame by the best response (BR) to the current NE of the subgame. We adopt the same algorithmic framework and propose CCDO, to solve the NE in ACCES games (see Algorithm \ref{alg:CCDO}). The difference with DO and its variants ODO/XDO is the stopping criterion, Line 6 in Algorithm \ref{alg:CCDO}. The original part in DO is subgame sets both remain unexpanded, i.e. $X_{k+1} = X_k, Y_{k+1} = Y_k$ which naturally holds on finite games but cannot be possibly guaranteed in ACCES games, even if iterating infinite times because of $Y$'s infiniteness.

We should not ignore that DO and its variants ODO/XDO can only guarantee convergence under a finite action space because the subgame can become the original game by adding the best response over a finite number of iterations. The infinity of the strategy space not only invalidates this guarantee but also fundamentally alters the structure of convergence analysis. Besides, the two players need to be analyzed separately in the proof because of the asymmetry of ACCES games.

The convergence guarantee of CCDO applies to a broader range of zero-sum games, not only in matrix games but also in the ACCES game. In other words, CCDO is the extensive version of DO. First, we prove the convergence of CCDO, providing the basis for the subsequent convergent proof of CCDOA and CCDO-RL. The full proof is provided in Appendix \ref{Appendix_B}.


\begin{theorem}\label{CCDO}
    Given a two-player ACCESS game $\mathcal{G} = (X, Y, u)$, where $X$ is finite, $Y$ is a nonempty compact set, and the utility function $u$ is continuous in $Y$ when fixing the strategy in $X$, we have
    
    1. When $\epsilon = 0$, every weakly convergent subsequence in the subgame equilibrium sequence $\{(p_k^*, q_k^*)\}$ converges to the equilibrium of the whole game, possibly in an infinite number of iterations. 
    
    2. When $\epsilon > 0$, Algorithm \ref{alg:CCDO} converges to an $\epsilon$- equilibrium in a finite number of epochs.
\end{theorem}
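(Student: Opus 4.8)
The plan is to reduce both parts to a single convergence statement about the \emph{exploitability} of the subgame equilibria. For a subgame equilibrium $(p_k^*,q_k^*)$ write $v_k := U(p_k^*,q_k^*)$ and define the two best-response gaps $\delta_k := \max_{x\in X} U(x,q_k^*) - v_k$ and $\gamma_k := v_k - \min_{y\in Y} U(p_k^*,y)$, both nonnegative by weak duality, and set $e_k := \delta_k + \gamma_k = \max_{x\in X}U(x,q_k^*) - \min_{y\in Y}U(p_k^*,y)$. Unwinding the definitions, a pair $(p,q)$ is a NE of $\mathcal G$ exactly when its exploitability is $0$, and is an $\epsilon$-NE exactly when both gaps are at most $\epsilon$ (so $e\le\epsilon$ suffices). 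The core lemma I would prove is: along \emph{every} weakly convergent subsequence $(p_{k_j}^*,q_{k_j}^*)\Rightarrow(p^*,q^*)$ one has $e_{k_j}\to 0$. Given this, Part 1 follows because the finite maximum over $X$ and the attained minimum over the compact $Y$ are continuous in the mixed strategies by Proposition \ref{prop2}, so the limit's exploitability equals $\lim_j e_{k_j}=0$ and $(p^*,q^*)$ is a NE; Part 2 follows because weak sequential compactness (Proposition \ref{prop1}) guarantees such a subsequence exists, forcing $\liminf_k e_k = 0$, so the $\epsilon$-stopping criterion on Line 6 is met at some finite epoch and the returned pair is an $\epsilon$-NE.

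The structural fact I would exploit throughout is \emph{monotone availability}: since $X_k\subseteq X_{k+1}$ and $Y_k\subseteq Y_{k+1}$, a strategy added at iteration $k$ is playable in every later subgame, and because $(p_m^*,q_m^*)$ is a NE of the subgame on $X_m\times Y_m$ we have $U(x,q_m^*)\le v_m$ for all $x\in X_m$ and $U(p_m^*,y)\ge v_m$ for all $y\in Y_m$. For Player 1 (finite $X$) this immediately controls $\delta$: fix $\eta>0$ and suppose $\delta_{k_j}\ge\eta$ for infinitely many $j$. Finiteness of $X$ forces some best response $\bar x$ to be added infinitely often with gap $\ge\eta$; but once $\bar x$ has been added it lies in every later $X_m$, so availability gives $U(\bar x,q_m^*)\le v_m$, contradicting the gap $U(\bar x,q_m^*)\ge v_m+\eta$ at a later repetition. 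Hence only finitely many $j$ have $\delta_{k_j}\ge\eta$, and since $\eta$ is arbitrary, $\delta_{k_j}\to 0$.

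The main obstacle is the Player 2 (infinite, continuous) side, where the pigeonhole argument fails because the added best responses $y_{k+1}$ are in general distinct points of $Y$. Here I would invoke compactness of $Y$ to pass to a further subsequence with $y_{k_j+1}\to\bar y\in Y$, and argue by a careful double limit. Suppose $\gamma_{k_j}\ge\eta$ for infinitely many $j$. For fixed $j$ and any later $j'$ the point $y_{k_j+1}$ is available in subgame $k_{j'}$, so $U(p_{k_{j'}}^*,y_{k_j+1})\ge v_{k_{j'}}$; letting $j'\to\infty$ and using continuity (Proposition \ref{prop2}, with a fixed second argument, together with $v_{k_{j'}}\to v^*:=U(p^*,q^*)$) yields $U(p^*,y_{k_j+1})\ge v^*$; then letting $j\to\infty$ and using continuity of $U(p^*,\cdot)$ on $Y$ gives $U(p^*,\bar y)\ge v^*$. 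On the other hand the persistent gap $U(p_{k_j}^*,y_{k_j+1})\le v_{k_j}-\eta$ passes to the limit (again by Proposition \ref{prop2}, now with $p_{k_j}^*\Rightarrow p^*$ and $y_{k_j+1}\to\bar y$ simultaneously) to give $U(p^*,\bar y)\le v^*-\eta$, a contradiction. Thus $\gamma_{k_j}\to 0$. The delicate points are the correct ordering of the two limits and verifying that each transition is licensed by the precise continuity statements following Proposition \ref{prop2}.

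Combining the two sides gives $e_{k_j}=\delta_{k_j}+\gamma_{k_j}\to 0$, which establishes the core lemma and hence Part 1. For Part 2 I would finally note that $e_k\le\epsilon$ implies both $\delta_k\le\epsilon$ and $\gamma_k\le\epsilon$ by nonnegativity, i.e.\ a genuine $\epsilon$-NE of the full game, and that the hypothesis ``$e_k>\epsilon$ for all $k$'' is incompatible with $\liminf_k e_k=0$, so Algorithm \ref{alg:CCDO} must terminate after finitely many epochs.
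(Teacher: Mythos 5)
Your proposal is correct and takes essentially the same route as the paper's proof: the same pigeonhole/stabilization argument exploiting the finiteness of $X$, the same compactness-plus-continuity double-limit argument for the continuous player (with the added best responses remaining "available" in later subgames), and the same deduction of finite termination for $\epsilon>0$ from the existence of a weakly convergent subsequence via Propositions \ref{prop1} and \ref{prop2}. The only difference is organizational: you prove "exploitability $e_{k_j}\to 0$" first and recover the NE property of the limit by continuity, whereas the paper first shows the limit of the subsequence satisfies the NE inequalities and then deduces that the duality gap vanishes; given Proposition \ref{prop2}, the two orderings are equivalent.
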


\subsection{CCDOA and CCDORL}\label{sec_ccdorl}

Due to the NP-hardness of most COPs, finding the exact BR for the combinatorial player is computationally impractical. Hence, we use a more practical approximate version, CCDOA (Algorithm \ref{alg:CCDOA} of Appendix \ref{Appendix_B}), to solve the approximate NE. However, the approximation of BR may cause circumstances where the utility of the approximate best response is lower than that of NE in the subgame which never happens in CCDO. This issue not only has a tricky effect on the convergence analysis but adds computational overhead to solving NE and the memory burden of saving strategies, and may even prolong the iteration round. To address this, two discriminants (lines 6-13 in Algorithm \ref{alg:CCDOA}) are added to guarantee the optimality of the two ABRs $x_{k+1}^{\epsilon_1}, y_{k+1}^{\epsilon_2}$ in the subgame $ \mathcal{G}_k = (X_k, Y_k, u)$. 

There have been studies using RL to learn a generalized policy for certain COPs in graphs \citep{khalil2017learning, nazari2018reinforcement, bengio2020machine, Ou2021rldis, Feng2025hrl4sco}. The key idea is to decompose the node selection into a sequence and learn a heuristic policy for sequentially choosing nodes. The RL policy is usually trained on seen training graphs with the hope of generalizing to unseen test graphs of similar characteristics. Such generalization has been further enhanced via graph embedding techniques such as Structure to Vector \citep{Dai2016s2v} and Graph Convolutional Networks \citep{kipf2016semi} as the underlying value/policy network. The adversary's task is to choose optimal parameters in COPs. RL would also be a useful method to enhance the adversary's generalizability and solvability for diverse instances of the CO problem.
Hence, we propose CCDO-RL (see Algorithm \ref{alg:CCDORL}), a practical implementation of CCDOA where we use RL and graph embedding techniques as the underlying method to find the approximate BR for each player (Line 4 of Algorithm \ref{alg:CCDORL}). The mixed NE is solved by the supported enumeration algorithm \citep{roughgarden2010algorithmic}, utilizing the Nashpy implementation \citep{knight2018nashpy}.

\begin{algorithm}[ht]
    \caption{Combinatorial-Continuous Double Oracle Reinforcement Learning Algorithm}
    \label{alg:CCDORL}
    \renewcommand{\algorithmicrequire}{\textbf{Input:}}
    \renewcommand{\algorithmicensure}{\textbf{Output:}}
    \begin{algorithmic}[1]
        \REQUIRE Game $\mathcal{G} = (X, Y, u)$, $\epsilon \geq 0$.  
        \ENSURE $\sigma_k^*$.   
        \STATE  Initialize strategy set $\Pi_{1,0}$, $\Pi_{2,0}$.
        \REPEAT
            \STATE Solve the mixed equilibrium $\sigma_k^*$ in the subgame $(\Pi_{1,k}, \Pi_{2,k})$.
            \STATE Find the approximate best response by RL algorithms $(\pi_{1,k}^*, \pi_{2,k}^*)$. \\
            \STATE $\Pi_{1,k+1} = \Pi_{1,k} \cup \{\pi_{1,k}^*\}$, $\Pi_{2,k+1} = \Pi_{2,k} \cup \{\pi_{2,k}^*\}$.
            \IF{$U(\pi_{1,k}^*, \sigma_{2,k}^*) \leq U(\sigma_k^*)$} 
                \STATE $\Pi_{1,k+1} = \Pi_{1,k}$. 
            \ELSIF{$U(\sigma_{1,k}^*, \pi_{2,k}^*) \geq U(\sigma_k^*)$} 
                \STATE $\Pi_{2,k+1} = \Pi_{2,k}$.
            \ENDIF
        \UNTIL{$U(\pi_{1,k}^*, \sigma_{2,k}^*) - U(\sigma_{1,k}^*, \pi_{2,k}^*)) \leq \epsilon$}
    \end{algorithmic}
\end{algorithm}
\vspace{-\baselineskip}

\subsection{Convergence of CCDOA and CCDO-RL}\label{secCCDOA_prf}

Next, we consider the convergence guarantee of CCDOA and CCDO-RL with ABRs (full proof in Appendix \ref{Appendix_B}). Apart from the convergence analysis, the influence of ABRs with different degrees of approximation on the number of algorithm inner iterations is also explained further. Additionally, the computational complexity of CCDO-RL is provided in Theorem \ref{thm_complex} in Appendix \ref{Appendix_B}.


\begin{theorem}\label{CCDOA}
    Given $\mathcal{G} = (X, Y, u)$, where $X$ is finite, $Y$ is a nonempty compact set, and the utility function $u$ is continuous in $Y$ when fixing the strategy in $X$, with $\epsilon_1$ best response oracle for Player 1 in $X$ and $\epsilon_2$ best response oracle for Player 2 in $Y$, we have
    
    1. When $\epsilon>0$, for any form of approximate best response oracles, CCDOA and CCDO-RL can converge to a finitely supported $(\epsilon+\epsilon_1+\epsilon_2)$- equilibrium in a finite number of iterations.

    2. When $\epsilon=0$, if the approximate response oracle for Player 2 has a uniform lower bound for every mixed strategy in $\bigtriangleup_{X}$, i.e. 
    \begin{eqnarray}\label{y_br_condi}
        \begin{aligned}
            \forall p \in \bigtriangleup_{X}, \exists \epsilon_Y, s.t. U(p, BR^{\epsilon_2}_{2}(p)) \geq \min_{y\in Y}U(p, y)+\epsilon_Y,
        \end{aligned}
    \end{eqnarray}
    then CCDOA and CCDO-RL must converge to an $(\epsilon+\epsilon_1+\epsilon_2)$- equilibrium in a finite iterations.

    3.  When $\epsilon=0$, if CCDOA and CCDO-RL produce infinite iterations, every weakly convergent subsequence converges to an $\epsilon_1$- equilibrium.
\end{theorem}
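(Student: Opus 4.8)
The plan is to treat the two approximate oracles abstractly: write $a_k = U(\pi_{1,k}^*,\sigma_{2,k}^*)$ and $b_k = U(\sigma_{1,k}^*,\pi_{2,k}^*)$ for the two best-response values produced at iteration $k$, $v_k = U(\sigma_k^*)$ for the subgame value, and $s_k = a_k - b_k$ for the quantity tested by the stopping rule. The first, oracle-agnostic step is to certify the \emph{quality} of the output whenever the loop halts. Since $\pi_{1,k}^*$ is an $\epsilon_1$-best response to $\sigma_{2,k}^*$ and $\pi_{2,k}^*$ an $\epsilon_2$-best response to $\sigma_{1,k}^*$, one has $a_k \ge \max_{p}U(p,\sigma_{2,k}^*)-\epsilon_1$ and $b_k \le \min_{y\in Y}U(\sigma_{1,k}^*,y)+\epsilon_2$, so the full-game exploitability $G_k := \max_{p}U(p,\sigma_{2,k}^*)-\min_{y\in Y}U(\sigma_{1,k}^*,y)$ obeys $G_k \le s_k+\epsilon_1+\epsilon_2$. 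As $v_k$ lies between $\min_y U(\sigma_{1,k}^*,y)$ and $\max_p U(p,\sigma_{2,k}^*)$, the stopping condition $s_k\le\epsilon$ immediately gives that $\sigma_k^*$ is a finitely supported $(\epsilon+\epsilon_1+\epsilon_2)$-equilibrium; this handles the approximation claim in parts 1 and 2 at once, leaving only the \emph{termination} questions.

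For part 1 ($\epsilon>0$) the key structural reduction is that Player 1 can enlarge its set at most $|X|$ times: the discriminant adds $\pi_{1,k}^*$ only when $a_k>v_k$, and $a_k>v_k$ forces $\pi_{1,k}^*\notin X_k$ (inside the subgame $U(\cdot,\sigma_{2,k}^*)\le v_k$). Hence there is a $K$ beyond which only Player 2 grows its set and the branch $a_k\le v_k$ always fires. Suppose the loop never stops. Using weak sequential compactness of $\bigtriangleup$ (Proposition~\ref{prop1}), compactness of $Y$ and finiteness of $X$, pass to a subsequence along which $\sigma_{1,k_j}^*\to p^*$, $\sigma_{2,k_j}^*\Rightarrow q^*$, $\pi_{1,k_j}^*\equiv x^*$ and $\pi_{2,k_j}^*\to y^*$; by the continuity statements following Proposition~\ref{prop2}, $a_{k_j}\to U(x^*,q^*)$, $b_{k_j}\to U(p^*,y^*)$ and $v_{k_j}\to U(p^*,q^*)$. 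The branch condition gives $U(x^*,q^*)\le U(p^*,q^*)$, while the fact that the previously added $\pi_{2,k_{j-1}}^*$ already belongs to $\Pi_{2,k_j}$ gives $v_{k_j}\le U(\sigma_{1,k_j}^*,\pi_{2,k_{j-1}}^*)\to U(p^*,y^*)$, i.e. $U(p^*,q^*)\le U(p^*,y^*)$. Chaining these, $s_{k_j}\to U(x^*,q^*)-U(p^*,y^*)\le 0$, contradicting $s_k>\epsilon>0$. This ``available-strategy squeeze'' is the heart of the proof and the main obstacle, since it is exactly where the infiniteness of $Y$ would otherwise break the finite-DO argument; it is defused by combining the add-rule with Propositions~\ref{prop1} and \ref{prop2}.

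For part 3 ($\epsilon=0$ with infinitely many iterations) I would run the same extraction. Now $s_k>0$ for every $k$ yet $\limsup s_{k_j}\le 0$, so the squeeze is tight: $U(x^*,q^*)=U(p^*,q^*)=U(p^*,y^*)$. Passing the oracle guarantees to the limit (continuity of $q\mapsto\max_x U(x,q)$ and of $p\mapsto\min_y U(p,y)$ on the compact $Y$) shows $x^*$ is an $\epsilon_1$-best response to $q^*$ and $y^*$ an $\epsilon_2$-best response to $p^*$, whence $\max_p U(p,q^*)\le U(p^*,q^*)+\epsilon_1$ and $\min_q U(p^*,q)\ge U(p^*,q^*)-\epsilon_2$. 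Thus every weak limit $(p^*,q^*)$ is an equilibrium up to the oracle errors, with the Player 1 deviation controlled by $\epsilon_1$ as claimed (and the Player 2 deviation by $\epsilon_2$).

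Part 2 ($\epsilon=0$ under the uniform bound~(\ref{y_br_condi})) I would obtain by showing the loop cannot remain in the ``improving'' regime. Write $d_k=v_k-\min_{y}U(\sigma_{1,k}^*,y)\ge0$. The crucial consequence of~(\ref{y_br_condi}) is a built-in halting trigger: once $d_k<\epsilon_Y$ one has $b_k\ge\min_y U(\sigma_{1,k}^*,y)+\epsilon_Y>v_k\ge a_k$, so $s_k<0\le\epsilon$ and the loop exits. It therefore suffices to prove that $d_k<\epsilon_Y$ is reached after finitely many steps. Arguing by contradiction from non-termination, for all large $k$ we get $b_k<a_k\le v_k$, so Player 2 adds a strictly improving response each round and $d_k\ge\epsilon_Y$ persists; I would then combine the available-strategy squeeze of part 1 with~(\ref{y_br_condi}) to contradict the persistence of this uniform gap. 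The genuine crux of part 2 is quantifying that strictly improving additions cannot accumulate indefinitely, i.e. turning the uniform gap $\epsilon_Y$ into uniform progress of the bounded value sequence $\{v_k\}$; this is the one place where the soft compactness argument of part 1 must be upgraded to a quantitative one.
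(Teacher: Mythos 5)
Your stopping-time certificate and your proof of item 1 are correct, and they follow the same skeleton as the paper's argument (weak sequential compactness from Proposition~\ref{prop1}, the continuity statements following Proposition~\ref{prop2}, and finiteness of $X$ to freeze Player 1's subgame set after at most $|X|$ additions). In fact your treatment of termination is \emph{sounder} than the paper's: the paper proves item 1 by bounding the tested gap by the exact best-response gap and then citing the limit (\ref{term_lim}), but (\ref{term_lim}) and (\ref{y_lim}) were derived in the proof of Theorem~\ref{CCDO} for runs in which the \emph{exact} best responses are themselves added to the subgames; for CCDOA iterates, where only the approximate responses enter $Y_k$, those limits can fail. Your squeeze — $a_{k_j}\le v_{k_j}$ from the add-rule, together with $v_{k_j}\le U(\sigma_{1,k_j}^*,\pi_{2,k_{j-1}}^*)\to U(p^*,y^*)$ from the previously-added strategy — bounds the tested gap itself and avoids this transfer entirely.

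The shortfall is in items 2 and 3. For item 3 you prove a $\max\{\epsilon_1,\epsilon_2\}$-equilibrium (Player 1's deviation within $\epsilon_1$, Player 2's within $\epsilon_2$), whereas the theorem claims an $\epsilon_1$-equilibrium, i.e.\ both deviations within $\epsilon_1$; and item 2 you explicitly leave as an unresolved ``quantitative'' crux. The paper closes both with one additional device: for $k$ beyond the index $K_0$ of (\ref{finite_prf}), non-termination plus the add-rule give $\min_{y}U(p_k^*,y)\le U(p_k^*,y_{k+1}^{\epsilon_2})\le U(p_k^*,q_k^*)$, and the paper asserts that both ends converge to the \emph{same} limit $U(p^*,q^*)$, so that $\triangle_k^y\to0$; this removes the $\epsilon_2$ in item 3, and under hypothesis (\ref{y_br_condi}) it turns the sandwich $\min_y U(p_k^*,y)+\epsilon_Y\le U(p_k^*,y_{k+1}^{\epsilon_2})\le U(p_k^*,q_k^*)$ into the contradiction $\epsilon_Y\le 0$, giving item 2.

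However, that asserted identity $\lim_k\min_y U(p_k^*,y)=\lim_k U(p_k^*,q_k^*)$ is again (\ref{y_lim}), valid only when exact best responses are added, and your inability to derive it is not an oversight: it is false for approximate oracles, and with it items 2 and 3 themselves fail as stated. Take $X=\{x_0\}$, $Y=[0,1]$, $u(x_0,y)=y$, $Y_1=\{1\}$, $\epsilon=0$, $\epsilon_1=0$ (Player 1's oracle is trivially exact), $\epsilon_2=\tfrac12$, and let Player 2's oracle return $y_{k+1}^{\epsilon_2}=\tfrac14+2^{-(k+2)}$ at iteration $k$. Each output has value at most $\tfrac12$, so it is a legitimate $\epsilon_2$-best response; it lies strictly below the current subgame value $v_k$ (for $k\ge2$, $v_k$ is the previous output), so it is always added and the tested gap $v_k-b_k=2^{-(k+2)}$ stays positive: the run is infinite. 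Yet this oracle satisfies (\ref{y_br_condi}) with $\epsilon_Y=\tfrac14$, contradicting the finite-termination claim of item 2; and $q_k^*\Rightarrow\delta_{1/4}$, whose limit $(x_0,\delta_{1/4})$ is exactly a $\tfrac14$-equilibrium (Player 2 can deviate to $y=0$) and hence not the claimed $\epsilon_1$-equilibrium, though it is a $\max\{\epsilon_1,\epsilon_2\}$-equilibrium, consistent with your bound. So your proposal proves item 1 correctly and, for items 2 and 3, proves what is actually true; the remaining distance to the stated theorem reflects a defect in the theorem and in the paper's own proof (the illegitimate transfer of (\ref{y_lim}) from Theorem~\ref{CCDO} to Algorithm~\ref{alg:CCDOA}), not a repairable gap in your argument.
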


The convergence result for $\epsilon > 0$ in Theorem \ref{CCDOA} Item 1 is similar to Theorem \ref{CCDO} Item 2, converging to the approximate NE in a finite number of iterations. If the iteration continues indefinitely, the approximation of NE found by CCDO-RL depends solely on that of Player 1's ABR, i.e. $\epsilon_1$. 
When $\epsilon = 0$, CCDO may continue for an infinite number of iterations in the same problem setting. In contrast, CCDO-RL can terminate in finite rounds if the approximate error of ABRs for Player 2 is bounded below by $\epsilon_Y$ or the condition in Remark 2 is satisfied by Player 1 or 2.

\begin{remark}
      Except for the uniform lower bound for Player 2 in (\ref{y_br_condi}), if two absolute differences between BR and ABR do not converge to zero, including divergence and convergence to a positive number, i.e.
      $$\max_{x \in X}U(x, q_k^*) - U(x^{\epsilon_1}_{k+1}, q_k^*) \nrightarrow 0 \quad \mbox{or} \quad U(p_k^*, y^{\epsilon_2}_{k+1})-\max_{y \in Y}U(p_k^*, y) \nrightarrow 0, $$
        then CCDOA and CCDO-RL must terminate in a finite number of iterations, even if $\epsilon = 0$
\end{remark}

\section{Experiments}\label{sec_exp}




With theoretical guarantees on the existence and convergence of NE for ACCES games, we are also interested in how our proposed algorithm CCDO-RL works empirically. To evaluate this, we conduct experiments of CCDO-RL on three distinct ACCES game instances introduced in Section \ref{sub_exp_ins} and analyze the performance of CCDO-RL in Section \ref{sub_train_eval}. Section 6.2.1 aims to empirically demonstrate the convergence (Figures \ref{fig_exploit_20} and \ref{fig_exploit_50}) of the algorithm CCDO-RL over realistic CO problems, and show its consistency with Theorem \ref{CCDOA}. Section 6.2.2 intends to show the average reward (to seen training graphs) as well as the generalizability (to unseen test graphs) of the combinatorial player in real-world ACCES games (shown in Tables \ref{tab_aver}, and \ref{tab_gene}).

\subsection{Three Instances of ACCES Games} \label{sub_exp_ins}


We consider a certain COP which is parameterized with $\{\theta_{i}\}$, where $i$ is the index of nodes (such as a target in security games) -- e.g., such parameters can be interpreted as attack probability of targets.
In real-world applications, we often need to estimate such parameters before solving the COPs. Unfortunately, the estimation $\{\hat{\theta}_{i}\}$ often bears a gap to the true value $\{\theta_{i}\}$, which derives from e.g. environment (aleatoric) uncertainty, model (epistemic) uncertainty, or an attacker trying to manipulate the defender's utility. We use a generic model to formulate this gap:
\begin{equation}\label{linrob}
    \theta_{i} = \hat{\theta}_{i} + y \cdot \tau_{i},
\end{equation}
where $y$ represents the strategy of the nature/attacker, $\tau_{i}$ is the environment factors like weather and transportation conditions, or human subjective factors like the preference of the attacker. 
Such abstraction can represent a wide range of ACCES games, such as facility location covering problems \cite{an2020battery, TIRKOLAEE2020340}, CVRP \cite{vehiclerouting.ch8,dinh2018exact, FLORIO20231081}, security patrolling (OP) \citep{xu2021robust}, and influence maximization problem \cite{kalimeris2019robust}. We describe three instances of ACCES games based on the model (\ref{linrob}).


\textbf{Adversarial Covering Salesman Problem (ACSP):} In a map of cities, every city $i$ has a coverage $\theta_{i}$. A salesman finds the shortest path such that all cities are visited or covered, with $\theta_{i}$ influenced by physical factors $\tau_i$ and transportation parameters $y$ based on Eq.(\ref{linrob}). The salesman is Player 1 where $X$ consists of the feasible paths of the salesman. Nature is Player 2 with $Y$ = $[0, 1]^K \ni y, K \in \mathbb{N}$. The utility function of Player 1 $u$ is the opposite of the total traveling distance.

\textbf{Adversarial Capacitated Vehicle Routing Problem (ACVRP):} A vehicle with a constrained capacity of goods finds the shortest path under the worst case with the $i_{th}$ customer's demand $\theta_i$ changed by environmental factors $\tau_i$ and weather parameter $y$ on Eq.(\ref{linrob}). The vehicle is Player 1 where $X$ is the set of the feasible path $x$. Nature is Player 2 where $Y$ is $[0, 1]^K \ni y, K \in \mathbb{N}$. The utility function of Player 1  $u$ is the opposite of total delivery distance satisfying all the demands of customers.

\textbf{Patrolling Game (PG):} The patrolling game is described in the introduction.

For all the problem instances, we run our algorithm on two problem sizes: 20 nodes and 50 nodes. The detailed description and problem parameters of the three game instances are in Appendix \ref{app_ex_para_set}.




\subsection{Performance of CCDO-RL}\label{sub_train_eval}

Two aspects are evaluated for the performance of CCDO-RL, i.e., i) Convergence to NE (Section \ref{sub_per_conver}) exploring whether CCDO-RL can compute the NE, and ii) Protagonist policy's average reward and generalizability (Section \ref{sub_per_rob}). Generalizability refers to the ability of RL models trained on previously seen graphs (problem instances), to perform well on a new set of unseen test graphs. The model’s usability is enhanced by generalizability, rather than focusing solely on the average reward, which is a critical motivation in the literature on RL for COPs \citep{khalil2017learning, kool2018attention}.

For all the problems, CCDO-RL adopts the REINFORCE algorithm with an attention-based encoder-decoder framework \citep{kool2018attention} (used as an inductive graph representation component) to learn a generalizable COP solver for Player 1 (protagonist), and PPO to train a policy for Player 2 (adversary) whose strategy space is continuous. CCDO-RL is trained on a set of 10,000 graphs (with 20 or 50 nodes). The hyperparameters of CCDO-RL are specified in Appendix \ref{app_ex_para_set} (Table \ref{tab_hyper_ccdorl}). Our code is included as supplementary material and will be open-sourced for ease of reproduction. 



\subsubsection{Convergence to NE} \label{sub_per_conver}

Exploitability is a common metric to describe the closeness to true NE by calculating the sum of performance distances between each new best response and subgame NE, i.e. $\sum_{i=1,2} U(\pi_{i,k}^{br}, \sigma_{-i,k}) - U(\sigma)$ in the general two-player game. Since our game is zero-sum, the calculation is as follows:
\begin{equation*}
   \text{Exploitability}(\sigma) = \max_{\pi_1 \in \Sigma_1} U(\pi_1, \sigma_{2}) - \min_{\pi_2 \in \Sigma_2} U(\sigma_1, \pi_2).
\end{equation*}
From Figure \ref{fig_exploit_20}, we can see that CCDO-RL can converge to approximate NE in 25 iterations or less (in the PG setting), reaching 0.05 in ACSP, 0.10 in ACVRP, and 0.03 in PG with 20 nodes. Similar results are observed in problems with 50 nodes (see Figure \ref{fig_exploit_50} in Appendix \ref{app_exp}). These results validate the effectiveness of CCDO-RL in finding the NE for various types of games.

\vspace{-\baselineskip}
\begin{figure}[htbp]
	\centering
    \subfigure[ACSP20]{
    \label{csp20_nashconv}
    \includegraphics[scale=0.20]{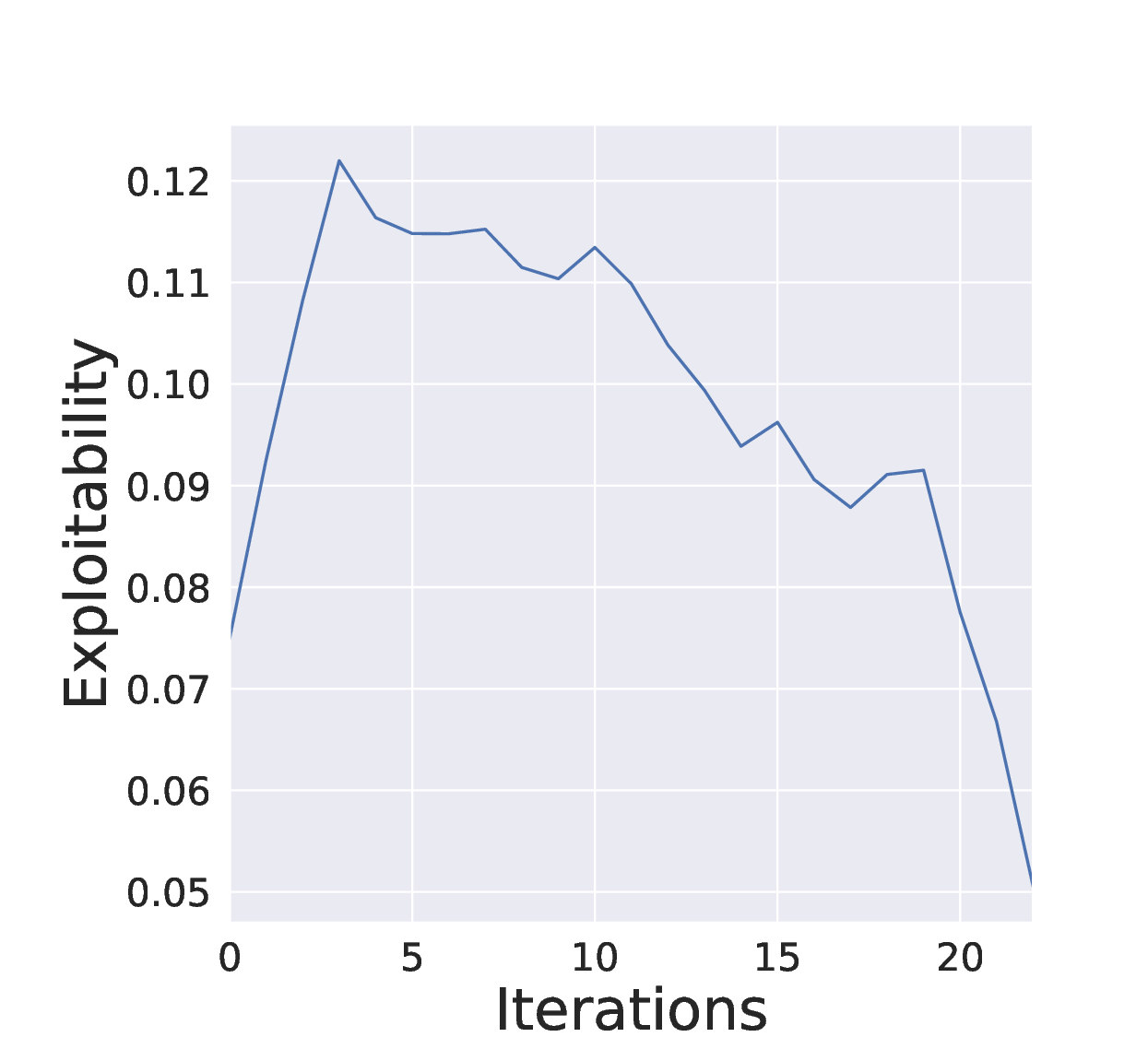}
    }
    \subfigure[ACVRP20]{
    \label{cvrp20_nashconv}
    \includegraphics[scale=0.20]{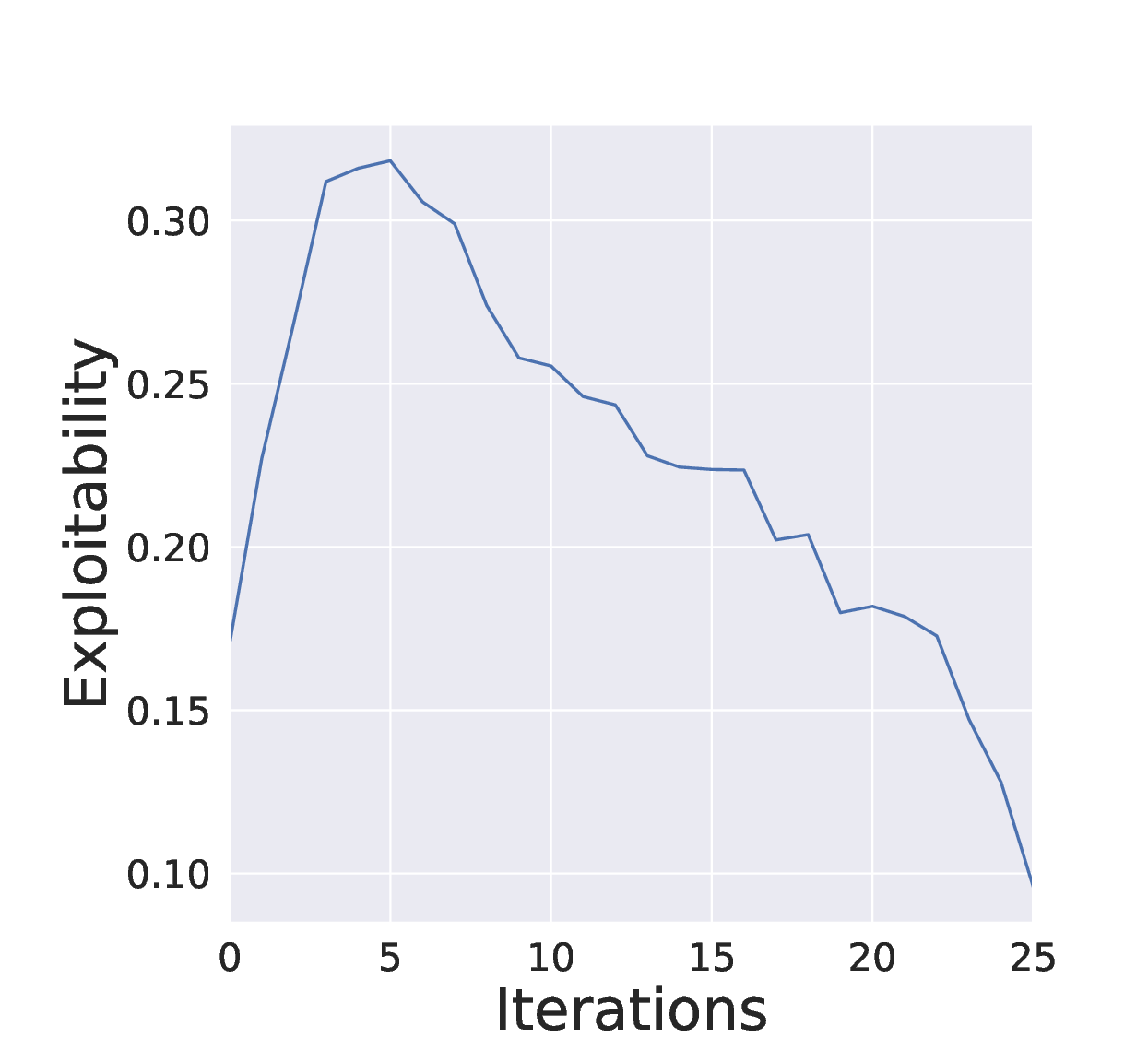}
    }
    \subfigure[PG20]{
    \label{opsa20_nashconv}
    \includegraphics[scale=0.20]{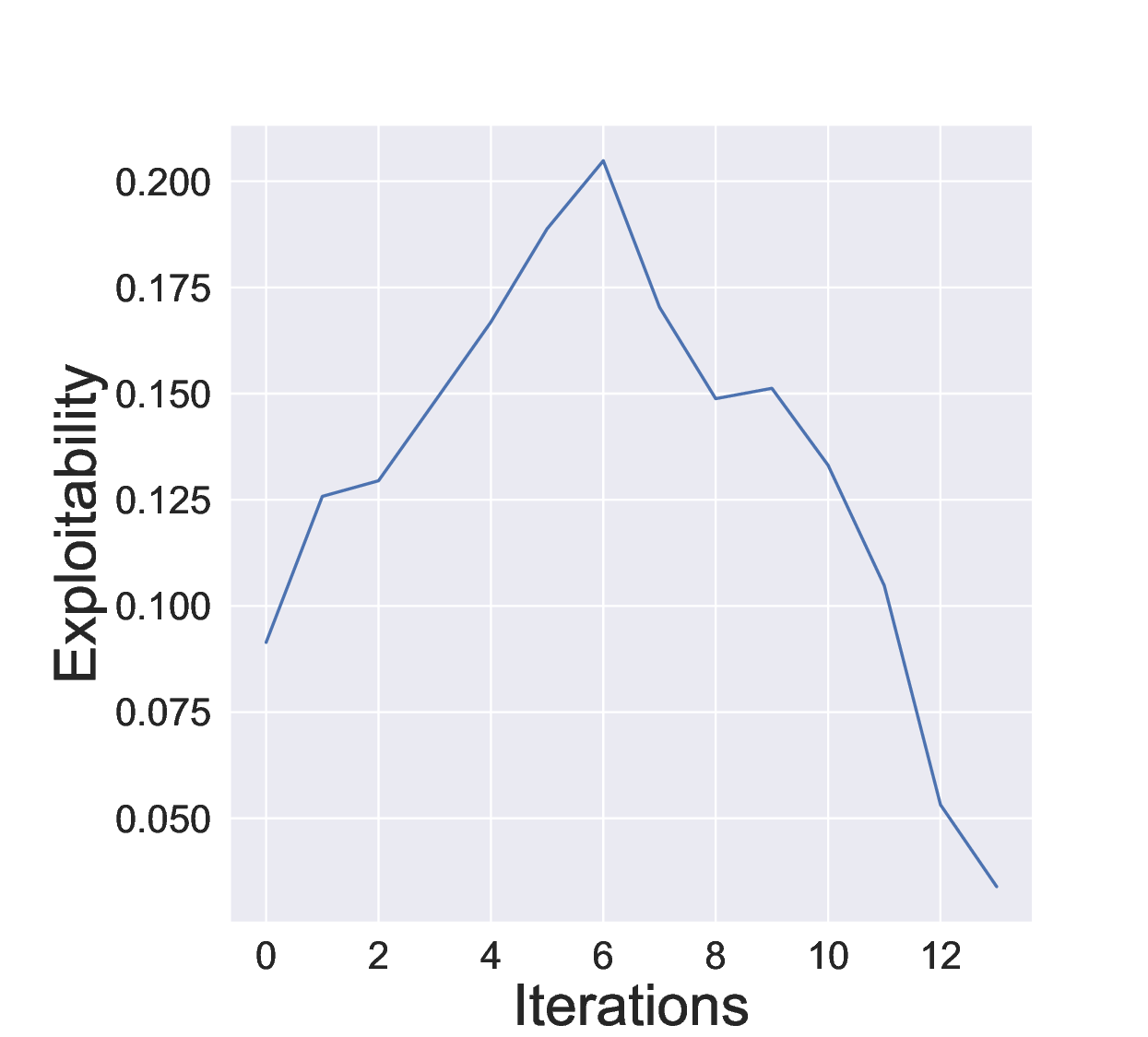}
    }
    \caption{Exploitability curve of CCDO-RL on three games of 20 nodes}
    \label{fig_exploit_20}
\end{figure}
\vspace{-\baselineskip}
\subsubsection{Average reward and Generalizability of Combinatorial player} \label{sub_per_rob}



\textbf{Evaluation.} The learned policies are tested on 200 graphs, with 100 being randomly selected from the 10,000 training graphs (to show the average reward), and the other 100 being unseen graphs (to test policy generalization). We evaluate the performance of the protagonist with the adversary under three COPs. For each COP, the performance is considered both on the 20-node and 50-node map.

\textbf{Baselines.} There are heuristic algorithms for each game instance (Heuristic in Table \ref{tab_aver} and \ref{tab_gene}) and a single-player RL algorithm. For ACVRP, we adopt the Tabu Search algorithm (Tabu) \citep{li2020improved} as the heuristic algorithm, which is widely applied in the routing problem. For ACSP, the common benchmark local search algorithm, LS2 \citep{golden2012generalized}, is used. For PG, we choose the greedy algorithm as the baseline. The "RL against Stoc" algorithm in Tables \ref{tab_aver} and \ref{tab_gene} is identical to the protagonist model in CCDO-RL but trained in environments with stochastic adversarial perturbations.


\textbf{Average Reward.}  As illustrated in Table \ref{tab_aver}, our algorithm achieves a better average reward than baselines (10.08\% improvement on average of all settings against two baselines), regardless of CO instance or problem size, when confronting the adversary trained by CCDO-RL. In the setting of CSP-20 nodes, the average reward is improved by 46.98\% compared to the heuristic and by 7.14\% compared with the RL against Stoc. For the 50-node setting, the improvements are 45.91\% and 5.28\% respectively. Similarly, the improvements in contrast to Heuristic and RL against Stoc are as follows: 1.72\% and 3.01\%  for CVRP-20 nodes, 0.75\% and 4.46\% for CVRP-50 nodes, 4.17\% and 1.48\% for PG-20 nodes, and 10.60\% and 4.38\% for PG-50 nodes.

\textbf{Generalizability.} From Table \ref{tab_gene}, CCDO-RL continues to achieve a better average reward when facing the adversary, demonstrating that the learned RL policies generalize well to unseen graphs. Even though the non-RL baselines do have access to the graph structures and other problem information of the unseen problem instances, CCDO-RL can obtain comparable performances without re-training on the new problem instances. The improvements versus Heuristic and RL against Stoc are 46.61\% and 7.02\% for CSP-20 nodes, 42.24\% and 3.94\% for CSP-50 nodes, 1.12\% and 1.56\% for CVRP-20 nodes, 0.90\% and 5.05\% for CVRP-50 nodes, 5.35\% and 2.40\% for PG-20 nodes, and 12.17\% and 10.33\% for PG-50 nodes. Even when confronting the stochastic adversary, CCDO shows superior generalizability compared to two baselines across three COPs, with average improvements of 6.31\%, 3.42\%, and 3.95\% respectively. Detailed results are provided in Appendix \ref{app_exp} (Tables \ref{tab_csp_full_20} - \ref{tab_op_full_50}). 

\begin{remark}
    In CO problems (or more broadly, operations research and economics), it is known that achieving solution quality improvements against strong baselines (e.g., the RL methods trained with a stochastic adversary) is very challenging, and the margins are usually small \citep{kool2018attention}, sometimes even less than 1\%. However, these “tiny” marginal improvements in profits keep small business owners in the real world alive. Last, the improvement depends a lot on the problem settings, and we show that sometimes the improvement can be much more significant.
\end{remark}
\vspace{-\baselineskip}

\begin{table}[ht]
  \caption{Average reward against CCDO-RL's adversary (on seen graphs)}
  \vspace{\baselineskip}
  \label{tab_aver}
  \centering
  \small
  \begin{tabular}{lllllll}
    \toprule
    \multirow{2}{*}{method} & \multicolumn{2}{c}{ACSP (Mean$\pm$Std)} & \multicolumn{2}{c}{ACVRP (Mean$\pm$Std)} & \multicolumn{2}{c}{PG (Mean$\pm$Std)} \\
    \cmidrule(r){2-3} \cmidrule{4-5} \cmidrule(r){6-7}
                            & 20 nodes & 50 nodes & 20 nodes & 50 nodes & 20 nodes & 50 nodes\\
    \midrule
    Heuristic & 6.13$\pm$1.20 & 7.55$\pm$1.42 & 7.65$\pm$1.23  & 13.38$\pm$1.70 & 2.64$\pm$1.03 & 4.53$\pm$1.84   \\
    RL against Stoc    & 3.50$\pm$0.47  & 4.55$\pm$0.62  & 7.55$\pm$1.16  & 13.90$\pm$1.63 & 2.71$\pm$0.90 & 4.80$\pm$2.18   \\
    CCDO-RL   & $\pmb{3.25}$$\pm$0.42 & $\pmb{4.31}$$\pm$0.51  & $\pmb{7.42}$$\pm$1.21  & $\pmb{13.28}$$\pm$1.52 &  $\pmb{2.75}$$\pm$0.87 & $\pmb{5.01}$$\pm$1.91  \\
    \bottomrule
  \end{tabular}
\end{table}
\vspace{-\baselineskip}

\begin{table}[htp]
  \caption{Generalizability against CCDO-RL's adversary (on unseen graphs)}
  \vspace{\baselineskip}
  \label{tab_gene}
  \centering
  \small
  \begin{threeparttable}
  \begin{tabular}{lllllll}
    \toprule
    \multirow{2}{*}{method} & \multicolumn{2}{c}{ACSP (Mean$\pm$Std)} & \multicolumn{2}{c}{ACVRP (Mean$\pm$Std)} & \multicolumn{2}{c}{PG (Mean$\pm$Std)} \\
    \cmidrule(r){2-3} \cmidrule{4-5} \cmidrule(r){6-7}
                            & 20 nodes & 50 nodes & 20 nodes & 50 nodes & 20 nodes & 50 nodes\\
    \midrule
    Heuristic & 6.20$\pm$1.33 & 7.60$\pm$1.37   & 7.64$\pm$1.30  & 13.27$\pm$1.87 & 2.43$\pm$0.98 & 4.19$\pm$1.69    \\
    RL against Stoc  & 3.56$\pm$0.37  & 4.57$\pm$0.58  & 7.67$\pm$1.30  & 13.85$\pm$1.53 &  2.50$\pm$0.95 & 4.26$\pm$2.17 \\
    CCDO-RL   & $\pmb{3.31}$$\pm$0.35 & $\pmb{4.39}$$\pm$0.52  & $\pmb{7.55}$$\pm$1.28  & $\pmb{13.15}$$\pm$1.59 & $\pmb{2.56}$$\pm$0.92 & $\pmb{4.70}$$\pm$1.94\\

    \bottomrule
  \end{tabular}
  \begin{tablenotes}
      \footnotesize
      \item[1] For the average reward of ACSP and ACVRP, smaller is better while for that of PG larger is better.
  \end{tablenotes}
  \end{threeparttable}
\end{table}
\vspace{-\baselineskip}

\section{Conclusion \& Limitations}
Drawing insights from existing literature and real-world applications, we define a new class of games called ACCES games. We prove the existence of NE for ACCES games, providing a fundamental basis for solution algorithms. Two NE solvers are introduced, namely CCDO and its practical version CCDO-RL, along with original theoretical analysis and ABRs' impact on convergence.
Empirical results show that CCDO-RL can converge to approximate NE in a small number of iterations. The protagonist policy obtained via CCDO-RL has better average rewards against adversarial perturbations and shows great generalizability on unseen graphs. A potential limitation of our method is scalability -- our experiments mainly focus on small COPs (20 and 50 nodes). While scalability is not the focus of this study, it does remain unexplored and deserves more investigation. Our work also opens up a new area of research centering on ACCES games, and more broadly asymmetric games such as the uniqueness of NE, as well as more efficient and practical algorithms.
\newpage
\section{Acknowledgments}
This research was supported by the National Science Foundation under IIS-2348405. The authors acknowledge William \& Mary Research Computing for providing computational resources and/or technical support that have contributed to the results reported within this paper.

\bibliography{reference_conference}
\bibliographystyle{apalike}

\appendix
\newpage
\appendix
\section{Proofs and Analysis in Section 4}
\setcounter{proposition}{0}
\setcounter{equation}{0}
\setcounter{theorem}{0}
\subsection{Proofs in Section 4}\label{Appendix_A1}

\begin{definition}\label{def3} {\normalfont[Weakly Convergence.]}
    Suppose $S$ is a space, probability measures $P_n$ weakly converges to $P$, written by $P_n \Rightarrow P$, if for every bounded continuous function $f$, $$\lim_{n \rightarrow \infty} \int_S f dP_n \rightarrow \int_S f dP.$$
\end{definition} 
If any sequence in a set has a weakly convergent subsequence, the set is weakly sequentially compact.
\begin{lemma}\label{lem1}
    $\mathcal{S'},\mathcal{S''}$ are uncorrelated general metric spaces and $P', P''$ are the probability measure on $\mathcal{S'},\mathcal{S''}$ respectively. Define $\mathcal{T} \triangleq S' \times S''$ as the product space of $S'$ and $S''$. if $\mathcal{T}$ is separable, then $P_n' \times P_n'' \Rightarrow P' \times P''$ if and only if $P_n' \Rightarrow P'$ and $P_n'' \Rightarrow P''$. \rm{(\citep{Myerson1991GameT}, Theorem 2.8)}
\end{lemma}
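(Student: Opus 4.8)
The plan is to prove the two implications separately, dispatching the forward (``only if'') direction quickly and concentrating the real work on the converse. For the forward direction, assume $P_n' \times P_n'' \Rightarrow P' \times P''$ on $\mathcal{T}$. The coordinate projections $\pi': \mathcal{T} \to S'$ and $\pi'': \mathcal{T} \to S''$ are continuous, and weak convergence is preserved under pushforward by a continuous map: for any bounded continuous $g$ on $S'$ the composition $g\circ\pi'$ is bounded continuous on $\mathcal{T}$, so $\int_{S'} g\,dP_n' = \int_{\mathcal{T}} (g\circ\pi')\,d(P_n'\times P_n'') \to \int_{\mathcal{T}}(g\circ\pi')\,d(P'\times P'') = \int_{S'} g\,dP'$. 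Hence $P_n'\Rightarrow P'$, and symmetrically $P_n''\Rightarrow P''$. This costs almost nothing and uses only the continuous mapping theorem together with the fact that the $\pi'$-pushforward of a product measure is its first marginal.

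For the converse, assume $P_n'\Rightarrow P'$ and $P_n''\Rightarrow P''$. Since $\mathcal{T}$ is a metric space, by the Portmanteau theorem it suffices to verify $\int_{\mathcal{T}} f\,d(P_n'\times P_n'') \to \int_{\mathcal{T}} f\,d(P'\times P'')$ for every bounded \emph{uniformly} continuous $f$. First I would telescope the difference as $A_n+B_n$, where
\[
A_n = \int f\,d(P_n'\times P_n'') - \int f\,d(P_n'\times P''), \qquad B_n = \int f\,d(P_n'\times P'') - \int f\,d(P'\times P'').
\]
For $B_n$, set $h(x)=\int_{S''} f(x,y)\,dP''(y)$; dominated convergence with the boundedness and continuity of $f$ shows $h$ is bounded and continuous on $S'$, and Fubini gives $B_n = \int_{S'} h\,dP_n' - \int_{S'} h\,dP' \to 0$ directly from $P_n'\Rightarrow P'$.

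The crux is $A_n=\int_{S'}\phi_n\,dP_n'$ with $\phi_n(x)=\int_{S''} f(x,y)\,d(P_n''-P'')(y)$, where both the integrand and the integrating measure move with $n$, so the pointwise decay $\phi_n(x)\to 0$ (from $P_n''\Rightarrow P''$) is not by itself enough. The two observations I would exploit are: (i) uniform continuity of $f$ makes $\{\phi_n\}$ equicontinuous, since $|\phi_n(x_1)-\phi_n(x_2)|\le 2\,\omega_f\!\big(d_{S'}(x_1,x_2)\big)$ with $\omega_f$ the modulus of continuity of $f$, so $\phi_n\to 0$ uniformly on every compact subset of $S'$; and (ii) the weakly convergent sequence $\{P_n'\}$ is tight, so for any $\delta>0$ there is a compact $K\subseteq S'$ with $\sup_n P_n'(S'\setminus K)\le\delta$. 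Splitting the integral over $K$ and its complement and using $\|\phi_n\|_\infty\le 2\|f\|_\infty$ yields $\limsup_n |A_n|\le 2\|f\|_\infty\,\delta$; letting $\delta\to 0$ gives $A_n\to 0$, and combining with $B_n\to 0$ establishes product weak convergence.

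The hard part is exactly step (ii): upgrading weak convergence of $\{P_n'\}$ to tightness. This is where the separability hypothesis on $\mathcal{T}$ (hence on $S'$) is needed, via Prokhorov's theorem, and it is the one place the argument is genuinely delicate on a general metric space. I would therefore invoke Prokhorov in the separable (complete) setting, or simply specialize to the compact case relevant here, where $Y$ is compact and $\bigtriangleup_X$ is a finite simplex, so both factors are compact and tightness is automatic; this matches the statement as quoted from \citep{Myerson1991GameT}.
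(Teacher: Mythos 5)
The paper does not actually prove this lemma: it is quoted verbatim as an external result (it is Theorem~2.8 of Billingsley's \emph{Convergence of Probability Measures}, miscited in the paper as Myerson), and it is then used as a black box in the proof of Proposition~1, where both factors are compact ($\bigtriangleup_X$ a finite-dimensional simplex, $\bigtriangleup_Y$ the mixed strategies over a compact metric space). So there is no paper proof to compare against; what can be assessed is whether your argument establishes the quoted statement. Your ``only if'' direction (pushforward under the coordinate projections) is correct and standard. Your ``if'' direction --- telescoping into $A_n + B_n$, killing $B_n$ by weak convergence of $P_n'$ against the bounded continuous function $h$, and killing $A_n$ by equicontinuity of $\{\phi_n\}$ plus tightness of $\{P_n'\}$ --- is a correct and complete proof \emph{provided} the tightness in your step (ii) is available, i.e., for Polish or compact factors. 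Since the paper only ever applies the lemma with compact factors, your patched version suffices for everything the paper needs.

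However, as a proof of the lemma as stated --- arbitrary metric spaces whose product is merely separable --- step (ii) is a genuine gap, not just a delicate point. The converse half of Prokhorov's theorem (weak convergence, or even relative compactness, implies tightness) requires completeness in addition to separability; on an incomplete separable metric space a weakly convergent sequence need not be tight. Concretely, take $E \subset [0,1]$ with outer Lebesgue measure $1$ and inner measure $0$, metrized as a subspace of $[0,1]$; the trace of Lebesgue measure is a Borel probability measure $\mu$ on $E$ that is not tight, because every compact subset of $E$ is a Lebesgue-null compact subset of $[0,1]$. Then even the constant sequence $P_n' \equiv \mu$ is weakly convergent but not tight, and your splitting of the integral over $K$ and $S' \setminus K$ cannot be carried out. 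The lemma is nevertheless true in that generality; the cited theorem is proved by a different route that bypasses tightness entirely: one checks $P_n'(A)\,P_n''(B) \to P'(A)\,P''(B)$ on the $\pi$-system of rectangles $A \times B$ whose sides are continuity sets of $P'$ and $P''$, and uses separability of $\mathcal{T}$ to write every open subset of the product as a countable union of such rectangles, so that convergence on this class determines weak convergence. If you want a self-contained proof of the full statement, that $\pi$-system argument is the one to reproduce; under the compactness (or completeness) hypotheses relevant to this paper, your argument is complete as written.
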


\begin{proposition} {\normalfont[Weakly Sequential Compactness.]}
    Suppose the ACCES game is $\mathcal{G} = (X, Y,u)$, where $X$ is finite, $Y$ is a nonempty compact metric space, and the utility function $u$ is continuous on $Y$ fixing $x \in X$. Then the joint mixed strategy space $\bigtriangleup \triangleq \bigtriangleup_X \times \bigtriangleup_Y$ is weakly sequentially compact. 
\end{proposition}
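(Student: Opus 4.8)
The plan is to reduce the weak sequential compactness of the product $\bigtriangleup = \bigtriangleup_X \times \bigtriangleup_Y$ to that of each factor, invoking Lemma \ref{lem1}. Concretely, given any sequence of mixed strategy pairs $(p_n, q_n)$ in $\bigtriangleup$, I want to extract a single subsequence along which $p_n \Rightarrow p$ and $q_n \Rightarrow q$ simultaneously, and then conclude $p_n \times q_n \Rightarrow p \times q$ through Lemma \ref{lem1}. Two ingredients are needed: (i) the separability of the product space $X \times Y$, which licenses the application of Lemma \ref{lem1}, and (ii) the weak sequential compactness of each of $\bigtriangleup_X$ and $\bigtriangleup_Y$ individually.

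First I would verify separability. Since $X$ is finite it is trivially separable and compact, and since $Y$ is a compact metric space it is separable (every compact metric space admits a countable dense subset). A finite product of separable metric spaces is separable, so $X \times Y$ is a separable metric space, and the hypothesis of Lemma \ref{lem1} is met.

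Next I would establish weak sequential compactness of the two factors separately. For $\bigtriangleup_X$: because $X$ is finite, a mixed strategy $p$ is just a point of the probability simplex in $\mathbb{R}^{|X|}$, and weak convergence of measures on the finite set $X$ coincides with coordinatewise convergence of these probability vectors. The simplex is closed and bounded, hence compact, so by Bolzano--Weierstrass any sequence $p_n$ has a subsequence converging to some $p \in \bigtriangleup_X$, which is exactly weak sequential compactness of $\bigtriangleup_X$. For $\bigtriangleup_Y$ this is the substantive step: since $Y$ is a compact metric space, $C(Y)$ is separable, and $\bigtriangleup_Y$, viewed as a subset of the dual $C(Y)^*$, is weak* closed and contained in the unit ball. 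By Banach--Alaoglu it is weak* compact, and separability of $C(Y)$ makes the weak* topology metrizable on this bounded set; a compact metrizable space is sequentially compact, so $\bigtriangleup_Y$ is weakly sequentially compact. Equivalently, one may invoke Prokhorov's theorem, observing that compactness of $Y$ makes the whole family $\bigtriangleup_Y$ tight, hence relatively weakly sequentially compact, and that it is weakly closed.

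Finally I would assemble the argument by nested extraction. Given $(p_n, q_n)$, first pass to a subsequence with $p_n \Rightarrow p$ using the result for $\bigtriangleup_X$, then to a further subsequence with $q_n \Rightarrow q$ using the result for $\bigtriangleup_Y$; along this common subsequence both coordinates converge, and Lemma \ref{lem1}, applicable by separability of $X \times Y$, yields $p_n \times q_n \Rightarrow p \times q$ with $(p, q) \in \bigtriangleup$. Hence $\bigtriangleup$ is weakly sequentially compact. The main obstacle is precisely the weak sequential compactness of $\bigtriangleup_Y$, namely correctly passing from the weak* compactness granted by Banach--Alaoglu (or the tightness supplied by Prokhorov) to genuine sequential compactness, which is exactly the point where compactness and metric separability of $Y$ are essential.
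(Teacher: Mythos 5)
Your proof is correct and follows essentially the same route as the paper's: establish separability of $X \times Y$ (finite $X$, compact metric hence separable $Y$), establish weak sequential compactness of $\bigtriangleup_X$ (probability simplex in $\mathbb{R}^{|X|}$, where weak and strong convergence coincide) and of $\bigtriangleup_Y$ separately, and then combine via Lemma \ref{lem1} with a subsequence extraction. The only difference is that for the substantive step --- weak sequential compactness of $\bigtriangleup_Y$ --- the paper simply cites known results on mixed strategies in continuous games (sequential compactness and closedness of $\bigtriangleup_Y$, plus Proposition 1 of Adam et al.), whereas you supply a self-contained argument via Banach--Alaoglu together with weak* metrizability from separability of $C(Y)$ (equivalently, Prokhorov's theorem with tightness from compactness of $Y$), which correctly fills in exactly the step the paper outsources.
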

\begin{proof}
    Firstly, considering the condition in Lemma \ref{lem1}, we need to prove the product space $X \times Y$ is separable.
    The set $X$ is separable obviously, based on its finiteness and discreteness. And every compact metric space has a countable base, so separable. Hence the set $Y$ is separable. Then the product space $X \times Y$ is separable too.

    The next step is to prove weakly sequential compactness of set $\bigtriangleup_X$ and $\bigtriangleup_Y$. Due to the finiteness of $X$, $\bigtriangleup_X$ is a nonempty compact convex set on $\mathbb{R}^{|X|}$ where any element $p \in \bigtriangleup_X$ can be represented as $p = [p(x_1), ..., p(x_{|X|})]$ satisfying $\sum_{i=1}^{|X|} p(x_i) = 1, p_i \geq 0$. Since strong convergence is equivalent to weak convergence in the finite-dimensional normed space, the compact set $\bigtriangleup_X$ is weakly sequentially compact too.
    
    Besides, according to the properties of mixed strategies in continuous games mentioned in \cite{2007ContinuousGame}, $\bigtriangleup_Y$ is sequentially compact and closed, thus compact. Based on its compactness, proposition 1 of \cite{Adam2021DOcontin} guarantees that $\bigtriangleup_Y$ is weakly sequentially compact. 

    Therefore, due to Lemma \ref{lem1}, we can get $\bigtriangleup$ is weakly sequentially compact.
\end{proof}
\bigskip\bigskip\bigskip \bigskip\bigskip\bigskip\bigskip \bigskip

\begin{proposition}{\normalfont[Continuity of Expected Utility Function.]}
    The expected utility function $U(p,q) \triangleq \sum_{x\in X}\int_{y\in Y} p(x)u(x,y)dq$ is continuous on the joint mixed strategy space $\bigtriangleup$, $\forall p \in \bigtriangleup_X, q \in \bigtriangleup_Y$.
\end{proposition}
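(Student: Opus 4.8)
The plan is to establish continuity through sequential continuity, which is legitimate here because, as shown within the proof of Proposition \ref{prop1}, $Y$ is a separable metric space, so the weak topology on $\bigtriangleup_Y$ is metrizable (e.g.\ by the L\'evy--Prokhorov metric $\rho$). First I would equip $\bigtriangleup_X$ with the $\ell_1$ metric $\|p-p'\|_1 = \sum_{x\in X}|p(x)-p'(x)|$, which is legitimate since $X$ is finite and hence $\bigtriangleup_X \subset \mathbb{R}^{|X|}$; equip $\bigtriangleup_Y$ with $\rho$; and equip the product $\bigtriangleup = \bigtriangleup_X \times \bigtriangleup_Y$ with the sum metric $d\big((p,q),(p',q')\big) = \|p-p'\|_1 + \rho(q,q')$. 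Convergence $(p_n,q_n) \to (p,q)$ in this metric is then exactly $p_n \to p$ coordinatewise together with $q_n \Rightarrow q$ in the sense of Definition \ref{def3}, and because the product space is a metric (hence first-countable) space, it suffices to verify $U(p_n,q_n) \to U(p,q)$ along every such sequence.

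I would then record that $u$ is bounded: for fixed $x \in X$ the map $u(x,\cdot)$ is continuous on the compact set $Y$ and hence bounded, and since $X$ is finite this yields a uniform bound $M := \sup_{(x,y)\in X\times Y}|u(x,y)| < \infty$. The heart of the argument is a single triangle-inequality split,
\begin{equation*}
U(p_n,q_n) - U(p,q) = \big[U(p_n,q_n) - U(p,q_n)\big] + \big[U(p,q_n) - U(p,q)\big].
\end{equation*}
For the first bracket, since $\big|\int_Y u(x,y)\,dq_n\big| \le M$ for every $x$,
\begin{equation*}
\big|U(p_n,q_n) - U(p,q_n)\big| \le \sum_{x\in X}|p_n(x)-p(x)|\,\Big|\int_Y u(x,y)\,dq_n\Big| \le M\,\|p_n-p\|_1 \to 0.
\end{equation*}
For the second bracket, finiteness of $X$ lets me write it as a finite sum
\begin{equation*}
\big|U(p,q_n) - U(p,q)\big| \le \sum_{x\in X}p(x)\,\Big|\int_Y u(x,y)\,dq_n - \int_Y u(x,y)\,dq\Big|,
\end{equation*}
and for each fixed $x$ the function $y\mapsto u(x,y)$ is bounded and continuous, so $q_n \Rightarrow q$ forces $\int_Y u(x,y)\,dq_n \to \int_Y u(x,y)\,dq$ directly by the definition of weak convergence. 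A finite sum of terms each tending to zero tends to zero, so the second bracket vanishes as well, and combining the two gives $U(p_n,q_n) \to U(p,q)$.

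The main obstacle is the continuity in the second argument $q$: passing the limit inside the integral is precisely the content of weak convergence, and it is valid only because $\rho$ is chosen to metrize the weak topology rather than, say, total variation. This metrizability in turn rests on the separability of $Y$ established in Proposition \ref{prop1}, which also legitimizes the reduction of continuity to sequential continuity. By contrast, the dependence on $p$ is routine, since the finiteness of $X$ makes $\bigtriangleup_X$ a subset of Euclidean space and turns every sum over $X$ into a finite sum, so no interchange-of-limit issues ever arise on that factor. The full details, including verification that the sum metric induces the product topology, are carried out in Appendix \ref{Appendix_A1}.
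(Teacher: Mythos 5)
Your proof is correct, and its computational core is the same as the paper's: an add--and--subtract decomposition into a term where only $p$ varies and a term where only $q$ varies, the first bounded by $M\sum_{x\in X}|p_n(x)-p(x)|$ using the uniform bound $M$ on $u$, the second handled as a finite sum of integrals each converging by the very definition of weak convergence. Where you genuinely differ is the topological setup. The paper first equips $\bigtriangleup_Y$ with the metric $\rho_2(q,q')=\sup_{y\in Y}|q(y)-q'(y)|$ (a Kolmogorov-type sup distance between distribution functions) and runs an $\epsilon$--$\delta$ argument in that metric, relying on the bound $\bigl|\int_{Y} u(x,y)\,(dq_0-dq)\bigr|\le M\rho_2(q,q_0)$; this metric induces a topology strictly stronger than the weak topology, and that displayed bound is itself not justified as written (an integration-by-parts argument would produce the variation of $u(x,\cdot)$, not $M$, as the constant). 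The paper then compensates by appending a second, separate sequential argument verifying $U(p_n,q_n)\to U(p,q)$ under $p_n\Rightarrow p$, $q_n\Rightarrow q$, which is the statement actually invoked later (in Theorem \ref{existence} and Theorem \ref{CCDO}). You instead make the weak topology primary from the outset: since $Y$ is separable (as noted in Proposition \ref{prop1}), the weak topology on $\bigtriangleup_Y$ is metrizable by the L\'evy--Prokhorov metric, so continuity legitimately reduces to sequential continuity along weakly convergent sequences, and your single sequential computation suffices. This buys internal consistency --- the topology in which you prove continuity is exactly the one in which Proposition \ref{prop1} gives compactness and in which the convergence proofs take limits --- and it sidesteps the paper's unjustified intermediate estimate entirely; what the paper's version offers in exchange, a Lipschitz-type bound in its stronger metric, is never used elsewhere.
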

\begin{proof}
    First we denote the related distance mapping $\rho_1$ and $\rho_2$ on $\bigtriangleup_X$ and $\bigtriangleup_Y$ respectively.
    $$\rho_1(p, p') = \sum_{x \in X} |p (x) - p' (x)|, \forall p, p' \in \bigtriangleup_X,$$
    $$\rho_2(q, q') = \sup_{y \in Y} |q(y) - q'(y)|, \forall q, q' \in \bigtriangleup_Y.$$
    Afterwards, we prove the continuousness of $U$ on $\bigtriangleup$. $\forall (p_0, q_0) \in \bigtriangleup, \forall (p, q) \in O((p_0, q_0), \delta) \cap \bigtriangleup$, which means $d((p,q), (p_0, q_0)) \triangleq \sqrt{\rho_1^2(p, p_0) + \rho_2^2(q, q_0)} \leq \delta,$
    \begin{align}
        |U(p, q) - U(p_0, q_0)| &= |\sum_{x\in X} p(x) \int_{y \in Y} u(x,y) dq - \sum_{x\in X} p_0(x) \int_{y \in Y} u(x,y) d q_0| \\
        &= |\sum_{x\in X}\int_{y \in Y} u(x,y) (p(x)dq - p_0 (x)d q_0) |\\
        &= |\sum_{x\in X}\int_{y \in Y} u(x,y) [(p(x)- p_0 (x))dq + p_0 (x)(dq - d q_0)] | \\
        &\leq |\sum_{x\in X}\int_{y \in Y} u(x,y)(p(x)- p_0 (x))dq| + |\sum_{x\in X}\int_{y \in Y} u(x,y)p_0 (x)(d q_0 - dq)|
    \end{align}
    Because $u(x,y)$ is continuous on $Y$ when fixing $x \in  X$, $X$ is finite, and $Y$ is nonempty compact metric space, $u(x,y)$ is bounded. 
    
    Assume $|u(x,y)| \leq M$, $d((p,q), (p_0, q_0)) \triangleq \sqrt{\rho_1^2(p, p_0) + \rho_2^2(q, q_0)} \leq \delta = \frac{\epsilon}{2M}$, we can get that
    \begin{align}
        |\sum_{x\in X}\int_{y \in Y} u(x,y)(p(x)- p_0 (x))dq| &\leq |\sum_{x\in X} (p(x)- p_0 (x)) \int_{y \in Y} u(x,y)dq| \\ 
        &\leq M |\sum_{x\in X} (p(x)- p_0 (x))| \leq M \rho_1(p, p_0)
    \end{align}
    \begin{align}
        |\sum_{x\in X}\int_{y \in Y} u(x,y)p_0 (x)(d q_0 - dq)| &\leq |\sum_{x\in X} p_0 (x) \int_{y \in Y} u(x,y)(d q_0 - dq)| \\ 
        &\leq \sum_{x\in X} p_0 (x) |\int_{y \in Y} u(x,y)(d q_0 - dq)|  \\
        &\leq \sup_{x\in X} |\int_{y \in Y} u(x,y)(d q_0 - dq)|\\
        &\leq M \rho_2(q, q_0)
    \end{align}

    Then  we can get that 
    \begin{align}
        |U(p, q) - U(p_0, q_0)| &\leq M \rho_1(p, p_0) + \sum_{x\in X} p_0(x)|\int_{y \in Y} u(x,y) dq -\int_{y \in Y} u(x,y) d q_0| \label{contin9} \\ 
        &\leq M \delta + M\delta = 2M \delta = \epsilon.
    \end{align}
    
    When $p_n \Rightarrow p$ and $q_n \Rightarrow q$, the above inequality still holds.
    $p_n \Rightarrow p \Leftrightarrow \rho_1(p_n, p) \rightarrow 0$ because Strong convergence is equivalent to weak convergence in finite-dimension metric spaces. Additionally, $|\int_{y \in Y} u(x,y) dq -\int_{y \in Y} u(x,y) d q_n| \rightarrow 0$ when fixing $x$. From inequality (\ref{contin9}), we can infer 
    \begin{align}
        |U(p, q) - U(p_n, q_n)| &\leq M \rho_1(p, p_n) + \sum_{x\in X} p_n(x)|\int_{y \in Y} u(x,y) dq -\int_{y \in Y} u(x,y) d q_n| \\
        &\leq M\rho_1(p, p_n) + \sup_{x \in X} |\int_{y \in Y} u(x,y) dq -\int_{y \in Y} u(x,y) d q_n| \\
        &= M\rho_1(p, p_n) + |\int_{y \in Y} u(\hat{x},y) dq -\int_{y \in Y} u(\hat{x},y) d q_n| \rightarrow 0,
    \end{align}
    where $\hat{x} = argmax_{x\in X} |\int_{y \in Y} u(\hat{x},y) dq -\int_{y \in Y} u(\hat{x},y) d q_n|$.
\end{proof}

\bigskip\bigskip\bigskip\bigskip\bigskip\bigskip\bigskip \bigskip
\begin{proposition} {\normalfont[Approximate NE of ACCES.]}
    $\mathcal{G}'=\langle X, Y, \Tilde{u} \rangle$ is $\alpha$-approximation of $\mathcal{G}=\langle X, Y, u \rangle$, where $\mathcal{G}$ is an ACCES game. $(p^*, q^*)$ is an $\epsilon$-equilibrium of $\mathcal{G}'$, then $(p^*, q^*)$ is an $(\epsilon + 2\alpha)$-equilibrium of $\mathcal{G}$.
\end{proposition}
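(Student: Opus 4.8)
The plan is to reduce everything to a single uniform estimate: the expected utilities of the two games differ by at most $\alpha$ on the entire mixed strategy space. Write $U(p,q) = \sum_{x\in X}\int_{y\in Y} p(x)u(x,y)\,dq$ for the expected utility of $\mathcal{G}$, and $\tilde{U}(p,q) = \sum_{x\in X}\int_{y\in Y} p(x)\tilde{u}(x,y)\,dq$ for that of $\mathcal{G}'$. First I would establish the lemma that $|U(p,q) - \tilde{U}(p,q)| \leq \alpha$ for every $(p,q) \in \bigtriangleup$. This follows by pulling the absolute value inside the sum and integral, bounding $|u(x,y)-\tilde{u}(x,y)| \leq \alpha$ pointwise by the definition of $\alpha$-approximation, and using $\sum_{x\in X} p(x) = 1$ together with $\int_Y dq = 1$, since $p$ and $q$ are probability measures. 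Boundedness and measurability of $u,\tilde{u}$ (guaranteed in the definition of an $\alpha$-approximate game) ensure all integrals are finite and the manipulations are legitimate.

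With this uniform bound in hand, the rest is a three-term inequality chain on each side of the equilibrium condition. Starting from the $\epsilon$-equilibrium of $\mathcal{G}'$, which reads $\tilde{U}(p,q^*) \leq \tilde{U}(p^*,q^*)+\epsilon$ and $\tilde{U}(p^*,q^*) \leq \tilde{U}(p^*,q)+\epsilon$ for all $p,q$, I would translate each statement into one about $U$ by inserting the lemma twice. For Player 1's deviation bound, for any $p$,
$$U(p,q^*) \leq \tilde{U}(p,q^*) + \alpha \leq \tilde{U}(p^*,q^*) + \epsilon + \alpha \leq U(p^*,q^*) + \epsilon + 2\alpha,$$
where the outer steps use the lemma and the middle step uses the $\epsilon$-equilibrium inequality. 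The symmetric computation for Player 2 gives $U(p^*,q^*) \leq U(p^*,q) + \epsilon + 2\alpha$ for every $q$. Together these two bounds are exactly the definition of $(p^*,q^*)$ being an $(\epsilon+2\alpha)$-equilibrium of $\mathcal{G}$.

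There is no real obstacle here; the content is entirely in the uniform bound, and the only point requiring care is tracking the direction of each inequality — which player is improving, and whether $U$ over- or under-estimates $\tilde{U}$ — so that the two $\alpha$ corrections accumulate with the same sign rather than cancelling. Because each side of the equilibrium condition costs one $\alpha$ to pass from $U$ to $\tilde{U}$ and another $\alpha$ to pass back, the total degradation is $2\alpha$ per side, which is precisely why an $\epsilon$-equilibrium of the approximating game becomes an $(\epsilon+2\alpha)$-equilibrium of the original.
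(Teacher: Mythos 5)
Your proof is correct and follows essentially the same route as the paper: first the uniform bound $|U(p,q)-\tilde{U}(p,q)|\leq\alpha$ over all of $\bigtriangleup$, then a three-term chain through $\tilde{U}$ using the $\epsilon$-equilibrium of $\mathcal{G}'$. If anything, your one-sided inequality chains are slightly more careful than the paper's write-up, which phrases the second step as a bound on $|U(p^*,q^*)-U(p^*,q)|$ even though the equilibrium property only controls that difference in one direction (the deviation $U(p^*,q)-U(p^*,q^*)$ can be arbitrarily large when $q$ is a poor strategy for Player 2), so your version is the one that holds verbatim.
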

\begin{proof}
    Define $\Tilde{U}(p, q) = \sum_{x\in X}\int_{y\in Y} p(x)\Tilde{u}(x, y)dq$, for $\forall (p,q) \in \bigtriangleup$ we can get 
    $$|U(p, q)-\Tilde{U}(p, q)| = |\sum_{x\in X}\int_{y\in Y} p(x)(u(x,y)-\Tilde{u}(x, y))dq| \leq \alpha \sum_{x\in X}\int_{y\in Y} p(x)dq = \alpha$$
    Next for any $q \in \bigtriangleup_Y$,
    $$
        |U(p^*, q^*)-U(p^*, q)| = |U(p^*, q^*) - \Tilde{U}(p^*, q^*) + \Tilde{U}(p^*, q^*) - \Tilde{U}(p^*, q) + \Tilde{U}(p^*, q) - U(p^*, q)| \leq 2\alpha + \epsilon.
    $$
    Similarly, it can be proved that $|U(p^*, q^*)-U(p, q^*)|\leq 2\alpha + \epsilon$, $\forall p \in \bigtriangleup_X$.
\end{proof}

\begin{proposition}{\normalfont[Essentially Finite of ACCES.]}
    For an ACCES $\mathcal{G}$, $\forall \alpha >0$, there exists an essentially finite strategic game $\hat{\mathcal{G}}=\langle X, \hat{Y}, \hat{u}\rangle$, s.t. $\hat{\mathcal{G}}$ is $\alpha$-approximation of $\mathcal{G}$.
\end{proposition}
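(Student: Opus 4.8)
The plan is to exploit the two structural hypotheses of an ACCES game — finiteness of $X$ and compactness of $Y$ — by discretizing only the continuous player's space. Since $X$ is already finite, I would take the finite game to have first strategy space $\hat X = X$ with $f_1 = \mathrm{id}_X$, so that the entire construction reduces to producing a finite subset $\hat Y \subseteq Y$ together with a measurable quantization map $f_2 \colon Y \to \hat Y$ whose output is always $\alpha$-close in utility to its input.

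First I would establish a uniform modulus of continuity in $y$. For each fixed $x \in X$, the map $u(x,\cdot)$ is continuous on the compact metric space $Y$ and hence uniformly continuous, so there is some $\delta_x > 0$ with $|u(x,y)-u(x,y')| < \alpha$ whenever $d_Y(y,y') < \delta_x$. Because $X$ is finite, I can set $\delta = \min_{x \in X}\delta_x > 0$, which works simultaneously for every $x \in X$. This is precisely where finiteness of $X$ is indispensable: an infinite $X$ could drive the infimum of the $\delta_x$ to zero and break the argument.

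Next I would discretize $Y$. Compactness gives total boundedness, so $Y$ can be covered by finitely many open balls of radius $\delta/2$; disjointifying this cover yields a finite Borel partition $\{A_1,\dots,A_m\}$ of $Y$ with $\mathrm{diam}(A_j) < \delta$ for each $j$. Choosing a representative $\hat y_j \in A_j$, I set $\hat Y = \{\hat y_1,\dots,\hat y_m\}$ and define $f_2(y) = \hat y_j$ for $y \in A_j$. This $f_2$ is Borel measurable, since each $A_j$ is, and by construction $d_Y(y, f_2(y)) < \delta$ for every $y \in Y$. I would then define the finite game utility $\hat u = u|_{X \times \hat Y}$ and the induced utility $\tilde u(x,y) = \hat u(f_1(x), f_2(y)) = u(x, f_2(y))$ on $X \times Y$. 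By construction the game $\langle X, Y, \tilde u\rangle$ is essentially finite, witnessed by $\hat{\mathcal{G}} = \langle X, \hat Y, \hat u\rangle$ together with $f_1, f_2$, and for every $(x,y)$ the uniform modulus gives $|u(x,y) - \tilde u(x,y)| = |u(x,y) - u(x, f_2(y))| < \alpha$, so it is an $\alpha$-approximation of $\mathcal{G}$. I would also note in passing that $\tilde u$ is bounded and measurable — boundedness because $u$ is bounded on $X\times Y$ ($X$ finite, $u(x,\cdot)$ continuous on compact $Y$), and measurability because $f_2$ is — so that it legitimately qualifies as the utility of an $\alpha$-approximate game in the sense of the earlier definition.

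The argument is essentially routine, and I do not expect a deep obstacle. The only two technical points needing care are (i) extracting a single $\delta$ uniformly over all $x$, resolved by the finiteness of $X$, and (ii) converting the finite open cover into a measurable finite partition of small diameter so that $f_2$ is well-defined and measurable rather than merely a selection. Both are standard once total boundedness and uniform continuity are in hand, and together they are the crux of the proof.
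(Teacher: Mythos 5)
Your proposal is correct and follows essentially the same route as the paper's proof: uniform continuity of $u(x,\cdot)$ on the compact metric space $Y$, a single modulus $\delta = \min_{x\in X}\delta_x$ obtained from the finiteness of $X$, a finite cover of $Y$ by small balls, and a piecewise-constant quantization $\hat u(x,y) = u(x,y_j)$. Your only departure is the explicit disjointification of the cover into a Borel partition so that the quantization map $f_2$ is well-defined and measurable --- a point the paper's proof glosses over (its overlapping balls leave the assignment $y \mapsto y_j$ ambiguous), so your version is in fact slightly more careful.
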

\begin{proof}
    Due to that $\hat{\mathcal{G}}$ is $\alpha$-approximation of $\mathcal{G}$, then for any $x \in X, y \in Y$, $|u(x,y)-\hat{u}(x,y)| \leq \alpha$. On account that $u$ is continuous on $Y$ and $Y$ is a nonempty compact metric space, so $u$ is uniformly continuous on $Y$. According to the uniform continuousness of function $u$, $\forall \alpha >0$, $\exists \epsilon(x) >0$, when $|y - y'| < \epsilon(x)$, $|u(x,y)-u(x,y')|\leq \alpha$. Define $\epsilon = \min_{x \in X} \epsilon(x)$. Y is a nonempty compact metric space, hence it can be covered by finite open balls $O_j(y_j, \epsilon)$, i.e. $Y \subset \bigcup_{j}O_j(y_j, \epsilon)$. 

    Then for $\forall j$, $\forall y \in O_j(y_j, \epsilon), x \in X$, denote that $\hat{u}(x, y) = u(x, y_j)$.
    Hence, $\forall x \in X, \forall y \in Y$, $|u(x,y)-\hat{u}(x,y)| \leq \alpha$.
\end{proof}

\bigskip
\begin{theorem}\textbf{\emph{[Existence of NE]}}
    $\mathcal{G}=\langle X, Y, u \rangle$, where $X$ is finite space, $Y$ is nonempty compact metric space, $u: X \times Y \rightarrow \mathbb{R}$ is a continuous utility function on $Y$ when fixing $x \in X$. Game $\mathcal{G}$ has a mixed strategy Nash equilibrium.
\end{theorem}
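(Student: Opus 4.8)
The plan is to realize an exact equilibrium of $\mathcal{G}$ as a limit of equilibria of finite approximating games, leaning on the compactness and continuity already established in Propositions \ref{prop1} and \ref{prop2}. First I would fix an arbitrary sequence $\{\alpha_k\}$ of positive reals with $\alpha_k \to 0$. For each $k$, Proposition \ref{ess_appro_exist} supplies an essentially finite game $\hat{\mathcal{G}}_k = \langle X, \hat{Y}_k, \hat{u}_k\rangle$ that is an $\alpha_k$-approximation of $\mathcal{G}$, where $\hat{Y}_k \subset Y$ is the finite set of ball-centers covering the compact space $Y$ and $\hat{u}_k$ collapses $u$ onto those centers.

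Next I would produce an exact equilibrium of each $\hat{\mathcal{G}}_k$. Since both strategy sets are finite ($X$ is finite by hypothesis and $\hat{Y}_k$ is finite by construction), $\hat{\mathcal{G}}_k$ is an ordinary finite two-player zero-sum game, so the von Neumann minimax theorem (equivalently Nash's theorem) guarantees a mixed-strategy equilibrium $(p_k^*, q_k^*)$. The small but essential check here is that this equilibrium of the finite game is genuinely a $0$-equilibrium of $\hat{\mathcal{G}}_k$ regarded as a game on the full product space $X \times Y$: because $\hat{u}_k(x,y)$ depends on $y$ only through its representative center, no deviation to a point of $Y \setminus \hat{Y}_k$ can strictly improve a player's payoff, so the saddle-point property survives the embedding $q_k^* \in \bigtriangleup_Y$.

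With exact equilibria of the approximating games in hand, I would transfer them back to $\mathcal{G}$ via Proposition \ref{epsilon_alpha}: since $(p_k^*, q_k^*)$ is a $0$-equilibrium of the $\alpha_k$-approximation $\hat{\mathcal{G}}_k$, it is a $2\alpha_k$-equilibrium of $\mathcal{G}$. This yields a sequence $\{(p_k^*, q_k^*)\} \subset \bigtriangleup$ of $2\alpha_k$-equilibria with $2\alpha_k \to 0$. Invoking the weak sequential compactness of $\bigtriangleup$ from Proposition \ref{prop1}, I extract a weakly convergent subsequence $(p_{k_j}^*, q_{k_j}^*) \Rightarrow (p^*, q^*)$ for some $(p^*, q^*) \in \bigtriangleup$. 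Finally, Proposition \ref{epsi_equil_converge} applied to this subsequence with $\epsilon_{k_j} = 2\alpha_{k_j} \to 0$ shows that the limit $(p^*, q^*)$ is a $0$-equilibrium, i.e. an exact mixed-strategy Nash equilibrium of $\mathcal{G}$.

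The approximation bookkeeping is routine; the step I expect to demand the most care is the passage to the limit. The difficulty is that the equilibrium condition is a saddle-point requirement quantified over the infinite set $Y$, so it is not automatically preserved under weak convergence of strategies. This is precisely why the two structural results are indispensable: Proposition \ref{prop1} secures a convergent subsequence inside the infinite-dimensional space $\bigtriangleup_Y$, while the continuity of $U$ from Proposition \ref{prop2} (on which Proposition \ref{epsi_equil_converge} rests) guarantees that the approximate saddle-point inequalities pass to the limit as the approximation error vanishes. Together they close the gap between finite-game equilibria and the desired equilibrium of $\mathcal{G}$.
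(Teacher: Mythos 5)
Your proposal is correct and follows essentially the same route as the paper's own proof: essentially finite $\alpha_k$-approximations (Proposition \ref{ess_appro_exist}), Nash's theorem applied to the resulting finite games, transfer to $2\alpha_k$-equilibria of $\mathcal{G}$ (Proposition \ref{epsilon_alpha}), extraction of a weakly convergent subsequence (Proposition \ref{prop1}), and passage to the limit (Proposition \ref{epsi_equil_converge}). Your explicit verification that the finite-game equilibrium remains a genuine $0$-equilibrium when the second player is allowed to deviate anywhere in $Y$ (because $\hat{u}_k$ factors through the ball centers) is a small step the paper leaves implicit, and it is a worthwhile addition.
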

\begin{proof}
    Suppose sequence $\{\alpha_n\}$ converges to zero, i.e. $\alpha_n \rightarrow 0$.
    For any $\alpha_n$, there exists an essentially finite game $\mathcal{G}_n$, which is $\alpha_n$-approximation of $\mathcal{G}$ (Proposition \ref{ess_appro_exist}). Due to Nash's theorem, mixed equilibrium $(p_n, q_n)$ of $\mathcal{G}_n$ exists. So $(p_n, q_n)$ is $2\alpha_n$-equilibrium of $\mathcal{G}$ (Proposition \ref{epsilon_alpha}). By Proposition \ref{prop1}, $(p_n, q_n)$ has a convergent subsequence. For brevity, this convergent subsequence is denoted by $\{(p_n, q_n)\}$, which converges to $(p^*,q^*)$. Based on Proposition \ref{epsi_equil_converge}, we can know that $(p^*,q^*)$ is a mixed equilibrium of $\mathcal{G}$.
\end{proof}

\subsection{Analysis on the existence of NE in N-player ACCES games} \label{Appendix_A2}

Our propositions and Theorem 2 can be extended to the N-player ACCES games naturally. The key point of the existence of NE to N-player ACCES games is two fundamental properties we propose in ACCES games, weakly sequential compactness of the mixed strategy space and continuity of the expected utility function (Propositions 1 and 2), and the approximation idea by finite games. We introduce these as follows.

\begin{itemize}
    \item \textbf{Two Properties}: In Proposition 1, we transform the weakly sequential compactness of the joint mixed strategy space into the separability and weakly sequential compactness of each single player by Lemma 1. In Proposition 2, we scale the distance between two mixed strategies to the sum of distances between a single player’s mixed strategies while fixing other players. According to the proof of these two propositions, they are all independent of the number of players.
    \item \textbf{The Approximation idea by finite games}: The main idea is to approximate the infinite continuous strategy space by finite grids by definitions of approximate games and essentially finite games. The idea and definitions are not limited to the two-player setting.
\end{itemize}

\newpage
\section{Proofs in Section 5}\label{Appendix_B}
\begin{theorem}
    Given a two-player ACCESS game $\mathcal{G} = (X, Y,u)$, where $X$ is finite, $Y$ is a nonempty compact set, and the utility function $u$ is continuous in $Y$ when fixing the strategy in $X$, we have
    
    1. When $\epsilon = 0$, every weakly convergent subsequence in the subgame equilibrium pair sequence $\{(p_k^*, q_k^*)\}$ converges to the equilibrium of the whole game, possibly in an infinite number of iterations. 
    
    2. When $\epsilon > 0$, Algorithm \ref{alg:CCDO} converges to an $\epsilon$- equilibrium in a finite number of epochs.
\end{theorem}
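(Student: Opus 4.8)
The plan is to track the subgame equilibrium value $V_k := U(p_k^*, q_k^*)$ and to decompose the exploitability at iteration $k$ as $e_k = \delta_1(k) + \delta_2(k)$, where $\delta_1(k) := \max_{x\in X} U(x, q_k^*) - V_k$ is Player 1's deviation gain and $\delta_2(k) := V_k - \min_{y\in Y} U(p_k^*, y)$ is Player 2's. Since $(p_k^*, q_k^*)$ is an equilibrium of the subgame $(X_k, Y_k, u)$, we have $\max_{x\in X_k} U(x, q_k^*) = \min_{y\in Y_k} U(p_k^*, y) = V_k$, so both $\delta_i(k) \geq 0$ and the stopping test of the algorithm is exactly $e_k \leq \epsilon$. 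Writing $x_{k+1}, y_{k+1}$ for the best responses added at step $k$, the two structural facts I would reuse are: (i) $U(x_{k+1}, q_k^*) = V_k + \delta_1(k)$ and $U(p_k^*, y_{k+1}) = V_k - \delta_2(k)$; and (ii) once a best response enters a strategy set it stays, so for indices $j < j'$ one has $U(x_{k_j+1}, q_{k_{j'}}^*) \leq V_{k_{j'}} \leq U(p_{k_{j'}}^*, y_{k_j+1})$.

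For Item 2 ($\epsilon > 0$) I would first dispose of Player 1 using finiteness of $X$: the sets $X_k \subseteq X$ are nondecreasing, hence stabilize after finitely many steps, say at $K_1$; for $k \geq K_1$ the best response $x_{k+1}$ already lies in $X_k$, which by (i) forces $\delta_1(k) = 0$. Thus for $k \geq K_1$ the stopping test reduces to $\delta_2(k) \leq \epsilon$. Suppose for contradiction the algorithm never halts, so $\delta_2(k) > \epsilon$ for all $k \geq K_1$. I would then invoke compactness of $Y$ to extract $y_{k_j+1} \to y^*$, weak sequential compactness (Proposition~\ref{prop1}) to further assume $p_{k_j}^* \to p^*$ and $q_{k_j}^* \Rightarrow q^*$, and continuity of $U$ (Proposition~\ref{prop2} with its two corollary limit statements) to pass to the limit, setting $V^* := \lim V_{k_j} = U(p^*, q^*)$. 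Fixing $j$ and sending $j' \to \infty$ in the right half of (ii) gives $V^* \leq U(p^*, y_{k_j+1})$; then $j \to \infty$ gives $V^* \leq U(p^*, y^*)$. But $U(p_{k_j}^*, y_{k_j+1}) = V_{k_j} - \delta_2(k_j) < V_{k_j} - \epsilon$ passes to $U(p^*, y^*) \leq V^* - \epsilon$, a contradiction. Hence the algorithm halts in finitely many iterations, and at the halting step $e_k \leq \epsilon$ certifies an $\epsilon$-equilibrium via the best-response characterization of $\delta_1, \delta_2$.

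For Item 1 ($\epsilon = 0$) I would take any weakly convergent subsequence $(p_{k_j}^*, q_{k_j}^*) \Rightarrow (p^*, q^*)$, guaranteed by Proposition~\ref{prop1}, with $V^* := \lim V_{k_j} = U(p^*, q^*)$ by Proposition~\ref{prop2}, and verify the two equilibrium inequalities. For Player 2, passing to a further subsequence with $y_{k_j+1} \to y^*$, the limit of $U(p_{k_j}^*, y_{k_j+1}) \leq U(p_{k_j}^*, y)$ shows $y^*$ minimizes $U(p^*, \cdot)$, while (ii) yields $V^* \leq U(p^*, y^*) = \min_y U(p^*, y)$; since $\min_y U(p^*, y) \leq V^*$ always, equality holds, so $U(p^*, q) \geq V^*$ for every $q$. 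For Player 1, since $X$ is finite a further subsequence has $x_{k_j+1} = x^*$ constant; the limit of $U(x^*, q_{k_j}^*) = \max_x U(x, q_{k_j}^*) \geq U(x, q_{k_j}^*)$ shows $x^*$ maximizes $U(\cdot, q^*)$, and the left half of (ii) gives $U(x^*, q^*) \leq V^*$, whence $\max_x U(x, q^*) \leq V^*$ and $U(p, q^*) \leq V^*$ for all $p$. Together these are precisely the defining inequalities of a Nash equilibrium of $\mathcal{G}$.

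The main obstacle is Player 2's infinite strategy space: unlike classical double oracle, the expansion $Y_k$ need never terminate, so finiteness of the strategy set cannot drive the argument and the standard proof breaks. The resolution is the interplay between compactness of $Y$ (which supplies convergent subsequences of the added best responses), weak sequential compactness of the mixed-strategy space (Proposition~\ref{prop1}), and continuity of the expected utility under joint weak/strong convergence (Proposition~\ref{prop2}). The crux is the double-indexed comparison (ii): it encodes that a best response optimal against an earlier $p_{k_j}^*$ remains available — and hence a lower bound on later subgame values — against all later $p_{k_{j'}}^*$, which after taking limits pins the limiting strategy's value to $V^*$ and rules out any residual deviation gain.
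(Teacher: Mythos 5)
Your proposal is correct and follows essentially the same route as the paper's own proof: finiteness of $X$ forces the combinatorial side to stabilize (killing $\delta_1$), while compactness of $Y$, weak sequential compactness (Proposition \ref{prop1}), and continuity of the expected utility (Proposition \ref{prop2}) let you pass the best-response and subgame-equilibrium inequalities to the limit, with your double-indexed comparison (ii) being exactly the paper's device of testing later subgame equilibria against earlier added best responses. The differences are only organizational --- the paper establishes the $\epsilon=0$ limit statement first and deduces finite termination for $\epsilon>0$ from the vanishing duality gap, whereas you run the $\epsilon>0$ contradiction first and use a pigeonhole-constant subsequence in place of stabilization for Player 1 in Item 1.
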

\begin{proof}
    At every epoch $k$, denote the protagonist policy $x_{k+1} \in X$ and adversary policy output $y_{k+1} \in Y$ as best responses to mixed equilibrium $(p_k^*, q_k^*)$ in the $k_{th}$ subgame $(X_k, Y_k, U)$, i.e.
    $$x_{k+1} = \mathrm{argmax}_{x \in X} U(x, q_k^*), y_{k+1} = \mathrm{argmin}_{y\in Y} U(p_k^*, y),$$
    noting that all maximizers and minimizers exist due to the finiteness of $X$, compactness of $Y$, and continuity of $u$ when fixing variable $x$.
    
    First, prove the efficiency of the stopping criterion, i.e. output $(p_k^*, q_k^*)$, satisfying this criterion, must be $\epsilon$- mixed equilibrium of game $\mathcal{G}$. The stopping criterion $$U(x_{k+1}, q_k^*) - U(p_k^*, y_{k+1}) \leq \epsilon,$$
    implies that 
    \begin{eqnarray}\label{mini}
        \begin{aligned}
        U(p_k^*, q_k^*) \leq U(x_{k+1}, q_k^*) &\leq U(p_k^*, y_{k+1}) + \epsilon \\
        &= \min_{y \in Y} U(p_k^*, y) + \epsilon = \min_{q \in \bigtriangleup_{Y}} U(p_k^*, q) + \epsilon,
        \end{aligned}
    \end{eqnarray}
     which means that $\forall q \in \bigtriangleup_{Y}$, $U(p_k, q_k) \leq U(p_k, q) + \epsilon$. 
     
     Similarly, we can get that 
     \begin{eqnarray}\label{maxi}
        \begin{aligned}
        U(p_k^*, q_k^*) \geq U(p_k^*, y_{k+1}) &\geq U(x_{k+1}, q_k^*) - \epsilon \\
        &= \max_{x \in X} U(x, q_k^*) - \epsilon = \max_{p \in \bigtriangleup_{X}} U(p, q_k^*) - \epsilon,
        \end{aligned}
    \end{eqnarray}
    that is $\forall p \in \bigtriangleup_{X}, U(p_k^*, q_k^*) \geq u(p, q_k^*) - \epsilon$. Combined (\ref{mini}) and (\ref{maxi}), mixed equilibrium $(p_k^*, q_k^*)$, meeting the condition, is a $\epsilon$-mixed equilibrium.

    Next, we need to prove its convergence of mixed Nash equilibrium, in other words, this algorithm \ref{alg:CCDO} can reach the terminal condition. 
    
    When $\epsilon = 0$, due to Proposition \ref{prop1}, every sequence in $\bigtriangleup_{X \times Y}$ has its own weakly convergent subsequence. For the sequence $\{(p_k^*, q_k^*)\}$, whose element $(p_k^*, q_k^*)$ is the mixed equilibrium of the $k_{th}$ subgame, there exists a weakly convergent subsequence, for simplicity denoted the same indices, i.e. $\{(p_{k}^*, q_{k}^*)\}$ converges to $(p^*, q^*)$. 

    Due to that $(p_{k}^*, q_{k}^*)$ is an equilibrium of the subgame $(X_{k}, Y_{k}, U)$, so $\forall k$, $\forall y \in Y_{k}$, we can know that 
    $$U(p_{k}^*, q_{k}^*) \leq U(p_{k}^*, y).$$
    Take the limit on both sides of this inequality, based on the continuousness of U on $\bigtriangleup_{X \times Y}$, we can get that 
    \begin{align}\label{cl_y}
        U(p^*, q^*) \leq U(p^*,y), \forall y \in cl(\cup Y_{k}).
    \end{align}

    Because $Y$ is compact, so for sequence $\{y_{k}\}$ there exists $\hat{y} \in cl(\cup Y_{k})$ s.t. $y_{k} \rightarrow \hat{y}, k \rightarrow \infty$. 
    
    Besides, we can get that 
    \begin{align}\label{br_y}
        U(p_k^*, y_{k+1}) \leq U(p_k^*,y), \forall y \in Y,
    \end{align}
    based on the definition of best response, which can infer 
    \begin{align}\label{ine_y}
        U(p^*, q^*) \leq U(p^*, \hat{y}) \leq U(p^*,y), \forall y \in Y,
    \end{align}
    in which the left-hand inequality above follows inequality (\ref{cl_y}) and the right-hand gets by taking limits on inequality (\ref{br_y}).
    
    For the reason that the strategy space $X$ is finite and the number of iterations is infinite, in the weakly convergent subsequence,
    \begin{eqnarray}\label{finite_prf}
        \begin{aligned}
        \exists k_0, s.t. \forall k > k_0, U(p_{k}^*, q_{k}^*) &= \max_{x \in X_k } U(x, q_{k}^*)\\
        &= \max_{x \in X } U(x, q_{k}^*) = U(x_{k+1}, q_{k}^*),
        \end{aligned}
    \end{eqnarray}
    
    which means that $x_{k+1} \in X_{k}, \forall k > k_0$. Therefore we can get that, $\forall x \in X$,
    \begin{eqnarray}\label{ine_x}
       \begin{aligned}
        U(x, q^*_k) &\leq U(x_{k+1}, q^*_k) = U(p^*_k, q_k^*), \forall k > k_0,\\
        \Longrightarrow U(x, q^*) &\leq U(p^*, q^*).
       \end{aligned} 
    \end{eqnarray}
    So we have proven that $$U(x, q^*) \leq U(p^*, q^*) \leq U(p^*,y), \forall x \in X, \forall y \in Y.$$
    Due to an equivalent condition to NE, that is$(p^*, q^*)$ is an equilibrium if and only if it follows 
\begin{eqnarray}\label{NE_equ_def}
    \begin{aligned}
         U(x, q^*) \leq U(p^*, q^*) \leq U(p^*, y), \forall x \in X, y \in Y,
    \end{aligned}
\end{eqnarray}
    we can says that $(p^*, q^*)$ is an equilibrium of $\mathcal{G}$.

    If $\epsilon >0$, (\ref{ine_y}) imply that 
    \begin{align}
        U(p_k^*, y_{k+1}) \leq U(p_k^*, q_k^*) \Longrightarrow U(p^*, \hat{y}) \leq U(p^*, q^*).
    \end{align}
    Combined with inequality (\ref{ine_y}), we know that 
    \begin{align}\label{y_lim}
        U(p_k^*, y_{k+1}) \rightarrow U(p^*, \hat{y}) = U(p^*, q^*).
    \end{align}
    Due to the fact (\ref{finite_prf}), after $k_0$ iterations, the strategy space of $X_k$ will not be expanded.
    So we can get
    \begin{eqnarray}\label{x_lim}
        \begin{aligned}
            \lim_{k \rightarrow \infty} U(x_{k+1}, q_k^*) = \lim_{k \rightarrow \infty} U(p_k^*, q_k^*) = U(p^*, q^*).
        \end{aligned}
    \end{eqnarray}
    Therefore, utilizing (\ref{y_lim}) and (\ref{x_lim}), 
    \begin{eqnarray}\label{term_lim}
        \begin{aligned}
            U(x_{k+1}, q_k^*) - U(p^*_k, y_{k+1}) \rightarrow 0, k \rightarrow \infty.
        \end{aligned}
    \end{eqnarray}
    In other words, if $\epsilon > 0$, this iterated process must be terminated within limited rounds and the output is $\epsilon$- equilibrium with finite supports.
\end{proof}

\newpage
    


\begin{theorem}
    Given $\mathcal{G} = (X, Y, u)$, where $X$ is finite, $Y$ is a nonempty compact set, and the utility function $u$ is continuous in $Y$ when fixing the strategy in $X$, with $\epsilon_1$ best response oracle for Player 1 in $X$ and $\epsilon_2$ best response oracle for Player 2 in $Y$, we have
    
    1. When $\epsilon>0$, for any form of approximate best-response oracles, CCDOA and CCDO-RL can converge to a finitely supported $(\epsilon+\epsilon_1+\epsilon_2)$- equilibrium in a finite number of iterations.

    2. When $\epsilon=0$, if the approximate response oracle for Player 2 has a uniform lower bound for every mixed strategy in $\bigtriangleup_{X}$, i.e. 
    \begin{eqnarray}\label{y_br_condi_app}
        \begin{aligned}
            \forall p \in \bigtriangleup_{X}, \exists \epsilon_Y, s.t. U(p, BR^{\epsilon_2}_{2}(p)) \geq \min_{y\in Y}U(p, y)+\epsilon_Y,
        \end{aligned}
    \end{eqnarray}
    then CCDOA and CCDO-RL must converge to an $(\epsilon+\epsilon_1+\epsilon_2)$- equilibrium in a finite iterations.

    3.  When $\epsilon=0$, if CCDOA and CCDO-RL produce infinite iterations, every weakly convergent subsequence converges to an $\epsilon_1$- equilibrium.
\end{theorem}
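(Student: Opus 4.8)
The plan is to reuse the skeleton of the exact analysis (Theorem~\ref{CCDO}) but to carry the two oracle errors $\epsilon_1,\epsilon_2$ through every inequality and to exploit the two discriminants of Algorithm~\ref{alg:CCDORL} (lines 6--10), which guarantee that each appended response strictly improves on the current subgame value. Write $v_k \triangleq U(p_k^*,q_k^*)$ for the subgame value, $x_{k+1}^{\epsilon_1},y_{k+1}^{\epsilon_2}$ for the two approximate best responses at epoch $k$, and $D_k \triangleq U(x_{k+1}^{\epsilon_1},q_k^*)-U(p_k^*,y_{k+1}^{\epsilon_2})$ for the quantity tested in the stopping rule. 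First I would establish \emph{soundness}, which covers the output quality in all three items simultaneously: whenever $D_k\le\epsilon$ fires, chain $\max_x U(x,q_k^*)\le U(x_{k+1}^{\epsilon_1},q_k^*)+\epsilon_1$, the stopping inequality, $U(p_k^*,y_{k+1}^{\epsilon_2})\le \min_y U(p_k^*,y)+\epsilon_2$, and $\min_y U(p_k^*,y)\le v_k$ to get $\max_x U(x,q_k^*)\le v_k+\epsilon+\epsilon_1+\epsilon_2$; the symmetric chain bounds Player 2's deviation, so $(p_k^*,q_k^*)$ meets the $(\epsilon+\epsilon_1+\epsilon_2)$-equilibrium inequalities and is finitely supported because it is a subgame equilibrium over finite sets.

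Next I would prove finite termination for Item~1 ($\epsilon>0$). Since $X$ is finite and $\Pi_{1,k}$ is nondecreasing, the combinatorial set stabilizes after finitely many epochs at some $X_\infty$; thereafter $U(x_{k+1}^{\epsilon_1},q_k^*)\le v_k$ forces the IF-branch, so only adversary responses are appended, and a repeated adversary response would already give $D_k\le 0$ and stop. If the algorithm did not stop, I would extract a weakly convergent subsequence $(p_k^*,q_k^*)\Rightarrow(p^*,q^*)$ with $y_{k+1}^{\epsilon_2}\to\hat y$ via Proposition~\ref{prop1}, and use the continuity consequences of Proposition~\ref{prop2} together with the subgame inequality $v_k\le U(p_k^*,y_{j}^{\epsilon_2})$ (for any earlier appended $y_j^{\epsilon_2}$) to show $\lim_k U(p_k^*,y_{k+1}^{\epsilon_2})=U(p^*,\hat y)=\lim_k v_k$. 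Combined with $U(x_{k+1}^{\epsilon_1},q_k^*)\le v_k$ this yields $\limsup_k D_k\le 0$, contradicting $D_k>\epsilon>0$; hence termination in finitely many epochs.

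For the $\epsilon=0$ items the picture splits. For Item~3 I would reuse the limiting identities above: the IF-branch bound and $\epsilon_1$-optimality give $\max_x U(x,q^*)-U(p^*,q^*)\le\epsilon_1$ in the limit, while the limiting best-response identity $U(p^*,\hat y)=\lim_k v_k$ controls Player 2's side, so every weak limit is an $\epsilon_1$-equilibrium, with the protagonist's oracle slack the only term surviving. For Item~2 the uniform lower bound~(\ref{y_br_condi_app}) is the extra lever: it converts the event ``the true adversary gap $v_k-\min_y U(p_k^*,y)$ has dropped below $\epsilon_Y$'' into ``the approximate response can no longer beat $v_k$,'' i.e.\ $U(p_k^*,y_{k+1}^{\epsilon_2})\ge \min_y U(p_k^*,y)+\epsilon_Y\ge v_k$, which with the IF-branch bound on $U(x_{k+1}^{\epsilon_1},q_k^*)$ gives $D_k\le 0=\epsilon$ and stops; the remaining work is to show this event must occur in finitely many epochs.

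The hard part is exactly these finite-termination claims, and above all Item~2 at $\epsilon=0$. Unlike the exact case, the approximate values $U(x_{k+1}^{\epsilon_1},q_k^*)$ and $U(p_k^*,y_{k+1}^{\epsilon_2})$ need not converge to the true best-response values, so the clean ``gap $\to 0$'' mechanism of Theorem~\ref{CCDO} is unavailable; I must instead combine the weak-compactness/continuity limit (Propositions~\ref{prop1}--\ref{prop2}) with the discriminant bookkeeping and the uniform lower bound to rule out an infinite run of \emph{genuinely improving} adversary appends, and to pin down precisely which of $\epsilon_1,\epsilon_2$ survive in each limit. The asymmetry is what forces different tools on the two players—finiteness of $X$ delivers stabilization of $\Pi_{1,k}$, whereas compactness of $Y$ plus continuity of $U$ supplies the convergent adversary subsequence—and reconciling these two regimes is the delicate step I expect to consume most of the proof.
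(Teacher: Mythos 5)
Your stopping-time soundness chain and your Item 1 argument are correct. In fact, for Item 1 you take a cleaner route than the paper: the paper bounds the approximate gap by the exact gap and then invokes (\ref{term_lim}), a limit established in Theorem \ref{CCDO} for the run that appends \emph{exact} best responses, whereas you re-derive the contradiction entirely from objects that CCDOA actually produces --- the subgame inequality $v_k\le U(p_k^*,y_{j+1}^{\epsilon_2})$ for $j<k$, the post-stabilization bound $U(x_{k+1}^{\epsilon_1},q_k^*)\le v_k$, and Propositions \ref{prop1}--\ref{prop2}. This is the right way around, since in Algorithm \ref{alg:CCDOA} the exact best responses $x_{k+1},y_{k+1}$ are never added to the subgame.

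The genuine gaps are in Items 2 and 3, which is where the content of this theorem (as opposed to Theorem \ref{CCDO}) lies. For Item 2 you correctly reduce termination to the claim that $v_k-\min_{y\in Y}U(p_k^*,y)\le\epsilon_Y$ at some finite epoch, and then defer that claim; it is not a deferrable technicality, and for oracles that may vary across epochs (the CCDO-RL situation, since the adversary is retrained every iteration) it can fail outright. Take $X=\{x\}$, $Y=[0,1]$, $u(x,y)=y$, $Y_1=\{1\}$, $\epsilon_1=0$, and let the epoch-$k$ oracle return $y_{k+1}^{\epsilon_2}=\epsilon_2(\tfrac12+2^{-k-1})$: every output is a valid $\epsilon_2$-best response satisfying (\ref{y_br_condi_app}) with $\epsilon_Y=\epsilon_2/2$, every output lies strictly below the current subgame value, so it is always appended and $D_k>0$ at every epoch, while $v_k-\min_y U(p_k^*,y)$ stays strictly above $\epsilon_Y$ forever. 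The paper closes this step by transporting the identity (\ref{y_lim}) --- $U(p^*,\hat y)=U(p^*,q^*)$, where $\hat y$ is the limit of \emph{exact} best responses --- into the approximate run and contradicting (\ref{y_br_condi_app}); your own bookkeeping is precisely what shows this transport is nontrivial, because the derivation of (\ref{y_lim}) needs $\hat y\in cl(\cup_k Y_k)$, which holds in CCDO because exact responses are appended, and fails here (in the run above, $U(p^*,\hat y)=0\ne\epsilon_2/2=U(p^*,q^*)$). You supply nothing in place of that step.

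Item 3 has the same hole. The limit identity you establish, $\lim_k U(p_k^*,y_{k+1}^{\epsilon_2})=U(p^*,\hat y)=\lim_k v_k$, concerns \emph{your} $\hat y$, i.e.\ the limit of the appended approximate responses (the paper's $\hat y^{\epsilon_2}$), so it pins down $U(p^*,\cdot)$ at that single point only. Combined with the oracle property $U(p_k^*,y_{k+1}^{\epsilon_2})\le\min_y U(p_k^*,y)+\epsilon_2$ it yields $U(p^*,q^*)\le U(p^*,y)+\epsilon_2$ for all $y\in Y$, hence only a $\max\{\epsilon_1,\epsilon_2\}$-equilibrium --- exactly the paper's intermediate conclusion. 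The advertised $\epsilon_1$-equilibrium needs, in addition, that the adversary's oracle slack $\triangle_k^y=U(p_k^*,y_{k+1}^{\epsilon_2})-U(p_k^*,y_{k+1})$ vanish along the run (the paper's (\ref{y_lim_br})--(\ref{diff_eps_y})); you never formulate this, and in the run above $\triangle_k^y\to\epsilon_2/2$ while the weak limit $(\delta_x,\delta_{\epsilon_2/2})$ is exploitable by exactly $\epsilon_2/2$. A correct completion must therefore either prove that vanishing under extra hypotheses on the oracle (e.g.\ that it is a fixed deterministic map of $p$, which is what would make the paper's transport of (\ref{y_lim}) plausible) or weaken Items 2--3 to the $\max\{\epsilon_1,\epsilon_2\}$ statement that your machinery actually delivers.
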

\begin{proof}
    (1) Due to that 
    $$U(x_{k+1}^{\epsilon_1}, q_k^*) - U(p_k^*, y_{k+1}^{\epsilon_2}) \leq U(x_{k+1}, q_k^*) - U(p_k^*, y_{k+1}),$$
    combined with (\ref{term_lim}), the Algorithm \ref{alg:CCDOA} can stop in a finite number of iterations if $\epsilon > 0$.
    
    (2) Similarly with Theorem \ref{CCDO}, firstly we prove the output must be $(\epsilon + \epsilon_1 + \epsilon_2)$- equilibrium if satisfying the stopping termination. Based on the definition of $\epsilon$- best response, we can know that
    \begin{eqnarray}\label{eps_br}
    \begin{aligned}
        \forall q \in \bigtriangleup_{Y}, U(BR_1^{\epsilon_1}(q), q) \geq \max_{x \in X} U(x,q) - \epsilon_1, \\
        \forall p \in \bigtriangleup_{X}, U(p, BR_2^{\epsilon_2}(p)) \leq \min_{y \in Y} U(p, y) + \epsilon_2. 
    \end{aligned}
    \end{eqnarray}
    For simplicity, suppose $BR_1^{\epsilon_1}(q_k^*) \triangleq x_{k+1}^{\epsilon_1}, BR_2^{\epsilon_2}(p_k^*) \triangleq y_{k+1}^{\epsilon_2}$ for subgame equilibrium $(p_k^*, q_k^*)$. The iteration process stops means that $U(x_{k+1}^{\epsilon_1}, q_k^*) - U(p_k^*, y_{k+1}^{\epsilon_2}) \leq \epsilon$. So we can get that
    \begin{eqnarray}\label{termi_y}
    \begin{aligned}
        U(p_k^*, q_k^*) \leq \max_{x \in X} U(x, q_k^*) &\leq U(x_{k+1}^{\epsilon_1}, q_k^*) + \epsilon_1\\
        &\leq U(p_k^*, y_{k+1}^{\epsilon_2}) + \epsilon + \epsilon_1\\
        &\leq \min_{y \in Y} U(p_k^*, y) + \epsilon + \epsilon_1 + \epsilon_2\\
        &\leq U(p_k^*, y) + \epsilon + \epsilon_1 + \epsilon_2, \forall y \in Y.
    \end{aligned}
    \end{eqnarray}
    Similarly, we can prove the parallel results on $X$.
    \begin{eqnarray}\label{termi_x}
    \begin{aligned}
        U(p_k^*, q_k^*) \geq \min_{y \in Y} U(p_k^*, y) &\geq U(p_k^*, y_{k+1}^{\epsilon_2}) - \epsilon_2\\
        &\geq U(x_{k+1}^{\epsilon_1}, q_k^*) - \epsilon - \epsilon_1\\
        &\geq \max_{x \in X} U(x, q_k^*) - \epsilon - \epsilon_1 - \epsilon_2\\
        &\geq U(x, q_k^*) - \epsilon - \epsilon_1 - \epsilon_2, \forall x \in X.
    \end{aligned}
    \end{eqnarray}
    So combined with (\ref{termi_y}) and (\ref{termi_x}), 
    $$\forall x \in X, \forall y \in Y, U(x, q_k^*) - \Bar{\epsilon} \leq U(p_k^*, q_k^*) \leq U(p_k^*, y) + \bar{\epsilon},$$
    in which $\bar{\epsilon} = \epsilon + \epsilon_1 + \epsilon_2$. Hence $(p_k^*, q_k^*)$ is the $\bar{\epsilon}$- equilibrium of game $\mathcal{G}$.
    If the $\epsilon_2$- best response has a lower bound, which means that $\forall i$,
    \begin{eqnarray}
    \begin{aligned}
        U(p_k^*, y_{k+1}) + \epsilon_Y \leq U(p_k^*, y_{k+1}^{\epsilon_2}) \leq U(p_k^*, q_k^*),
    \end{aligned}
    \end{eqnarray}
    assume $y_{k+1}^{\epsilon_2} \rightarrow \hat{y}^{\epsilon_2}, k \rightarrow \infty$, if iterating infinitely, take limits on both sides, based on (\ref{y_lim}) we can get 
    \begin{eqnarray}
        \begin{aligned}
            U(p^*, q^*) + \epsilon_Y = U(p^*, \hat{y}) + \epsilon_Y \leq U(p^*, \hat{y}^{\epsilon_2}) \leq U(p^*, q^*).
        \end{aligned}
    \end{eqnarray}
    Obviously, it's a contradiction. Hence this iterated process must terminate in finite rounds.

    (3) When $\epsilon = 0$, considering that the algorithm \ref{alg:CCDOA} produces an infinite number of iterations, the stopping termination always stands up, i.e. $\forall k, U(x_{k+1}^{\epsilon_1}, q_k^*) - U(p_k^*, y_{k+1}^{\epsilon_2}) > 0$. Without consideration of the effect of approximate best responses' properties on the judgment of termination condition, resembling the proof of theorem 3, based on (\ref{ine_y}) and (\ref{ine_x}) we can get 
    \begin{eqnarray}
        \begin{aligned}
            U(p_k^*, q_k^*) \leq U(p_k^*, y_{k+1}^{\epsilon_2}) \leq U(p_k^*, y_{k+1}) + \epsilon_2 \leq U(p_k^*, y) + \epsilon_2, \forall y \in Y, \\
            U(x, q_k^*) - \epsilon_1 \leq \max_{x \in X}U(x, q_k^*) - \epsilon_1 \leq U(x_{k+1}^{\epsilon_1}, q_k^*) = U(p_k^*, q_k^*), k \leq K_0, \forall x\in X.
        \end{aligned}
    \end{eqnarray}
    hence taking limits on both sides
    \begin{eqnarray}
        \begin{aligned}
             U(x, q^*) - \epsilon_1 \leq U(p^*, q^*) \leq U(p^*, y) + \epsilon_2, \forall x \in X, y \in Y,
        \end{aligned}
    \end{eqnarray}
    according to (\ref{NE_equ_def}), \textbf{every weakly convergent subsequence converges to a $\bm{\max \{\epsilon_1, \epsilon_2\}}$- equilibrium}. 
    
    As the example taken in (\ref{y_br_condi_app}), not all forms of approximate best response can breach the termination condition in the whole iteration process.
    
    Because of the fact that the strategy space $X$ is finite, combined with (\ref{finite_prf}) we can get that 
    \begin{eqnarray}\label{xy_lim_eps}
        \begin{aligned}
            &U(p_k^*, q_k^*) - U(p_k^*, y_{k+1}^{\epsilon_2}) > 0, k > K_0.\\
            \Longrightarrow & U(p_k^*, y_{k+1}) \leq U(p_k^*, y_{k+1}^{\epsilon_2}) \leq U(p_k^*, q_k^*).
        \end{aligned}
    \end{eqnarray}
    Integrated with (\ref{finite_prf}), easily we can get 
    \begin{eqnarray}\label{y_lim_br}
        \begin{aligned}
            U(p_k^*, y_{k+1}^{\epsilon_2}) \rightarrow U(p^*, \hat{y}^{\epsilon_2}) = U(p^*, q^*) = U(p^*, \hat{y}).
        \end{aligned}
    \end{eqnarray}
    Define $\triangle_k^y \triangleq U(p_k^*, y_{k+1}^{\epsilon_2}) - U(p_k^*, y_{k+1})$. From (\ref{y_lim_br}) and (\ref{finite_prf}), the following result can be derived:
    \begin{eqnarray}\label{diff_eps_y}
    \begin{aligned}
        \triangle_k^y \rightarrow 0, k \rightarrow \infty.
    \end{aligned}
    \end{eqnarray}
    According to (\ref{xy_lim_eps}), 
    \begin{eqnarray}
        \begin{aligned}
            U(p_k^*, q_k^*) \leq U(p_k^*, y_{k+1}^{\epsilon_2}) = U(p_k^*, y_{k+1}) + \triangle_k^y \leq U(p_k^*, y) + \triangle_k^y,
        \end{aligned}
    \end{eqnarray}
    Take limits on both sides, 
    \begin{eqnarray}
        \begin{aligned}
            U(p^*, q^*) \leq U(p^*, y).
        \end{aligned}
    \end{eqnarray}
    Combined with (\ref{NE_equ_def}), we can get that in infinite iterations, \textbf{every weakly convergent subsequence will converge to $\bm{\epsilon_1}$- equilibrium}.
\end{proof}
\bigskip\bigskip\bigskip
\begin{theorem}\label{thm_complex}
    If the combinatorial optimization player employs the state-of-the-art approximate algorithm, whose computational complexity is polynomial with respect to the scale of the problem noted as $f(n, m, log K)$, the continuous adversarial player adopts LinUCB with $d$-dimension input, then the computational complexity of the algorithm is $O(p(f(n, m, log K) + Td^3))$ if $\epsilon > 0$.
\end{theorem}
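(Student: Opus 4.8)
The plan is to bound the total running time of Algorithm \ref{alg:CCDORL} by the product of the number of outer iterations and the worst-case cost of a single iteration, and then to show that this per-iteration cost is governed by the two oracle calls. First I would pin down the iteration count: since $\epsilon > 0$, Theorem \ref{CCDOA} (Item 1) guarantees that CCDO-RL terminates after finitely many rounds of the REPEAT loop; denote this number by $p$. The overall complexity is then $p$ times the maximum work done in any one iteration, so it suffices to analyze a single pass through Lines 3--13 together with the stopping test.

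Next I would decompose one iteration into four operations: (i) solving the mixed equilibrium $\sigma_k^*$ on the current subgame $(\Pi_{1,k}, \Pi_{2,k})$ in Line 3; (ii) computing the combinatorial player's approximate best response $\pi_{1,k}^*$ in Line 4; (iii) computing the adversary's approximate best response $\pi_{2,k}^*$ via LinUCB in Line 4; and (iv) the strategy-set updates and the discriminant and exploitability evaluations in Lines 5--13 and in the UNTIL test. For (ii), the assumed state-of-the-art polynomial approximation algorithm costs $f(n, m, \log K)$ by hypothesis. For (iii), LinUCB with a $d$-dimensional feature vector runs for $T$ rounds, and each round updates and inverts the $d \times d$ design matrix at cost $O(d^3)$, giving $O(Td^3)$ for the full best-response computation; this is the source of the $Td^3$ term.

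The remaining work in (i) and (iv) must be shown to be dominated by $f(n, m, \log K) + Td^3$ so that no additional factor is introduced. For (i), the support of each player grows by at most one strategy per round, so at iteration $k$ the subgame is a zero-sum matrix game of size at most $(k+1)\times(k+1) \le (p+1)\times(p+1)$; its payoff entries come from finitely many evaluations of $U$, and it is solved by support enumeration or linear programming in time polynomial in $p$. For (iv), each discriminant comparison and the exploitability test requires only a constant number of evaluations of $U$ against the current mixed strategies. I would argue that these evaluations and the matrix-game solve are lower order relative to the two oracle calls, so that the per-iteration cost is $O(f(n, m, \log K) + Td^3)$, and multiplying by the iteration count $p$ yields the claimed bound $O\bigl(p(f(n, m, \log K) + Td^3)\bigr)$.

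The main obstacle I anticipate is precisely the bookkeeping in the last step: making rigorous the claim that the subgame-equilibrium solve and the repeated utility evaluations are genuinely dominated by the oracle costs, rather than contributing an independent $\mathrm{poly}(p)$ factor that would inflate the leading term. Resolving this cleanly requires either an explicit assumption that the oracle costs dominate the game-solving linear program at every iteration, or absorbing a $\mathrm{poly}(p)$ game-solving term into the outer $O(p(\cdot))$ factor and stating the bound up to that convention.
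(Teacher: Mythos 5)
Your proposal is correct and follows essentially the same route as the paper: the paper's proof is just the one-line computation $\mathcal{O}(f(n,m,\log K))\cdot p + \mathcal{O}(Td^3)\cdot p = \mathcal{O}\bigl(p(f(n,m,\log K)+Td^3)\bigr)$, i.e., the two per-iteration oracle costs (polynomial approximate CO solver, and LinUCB with a $d\times d$ design-matrix update over $T$ rounds) multiplied by the finite iteration count $p$ guaranteed for $\epsilon>0$, under an added assumption that a representation network compresses the CO instance into the $d$-dimensional LinUCB input. The bookkeeping concern you raise about the subgame equilibrium solve contributing an independent $\mathrm{poly}(p)$ factor is not addressed in the paper's proof at all (the paper treats it as negligible, citing polynomial-time LP solvability only in its informal scalability discussion), so your treatment is, if anything, more careful than the original.
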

\begin{proof}
    Providing that there exists a representative network that can compress the CO problem into $d$-dimension vector with full information, then the overall computational complexity is 
    $$\mathcal{O}(f(n, m, log K)) \cdot p + \mathcal{O}(Td^3) \cdot p = \mathcal{O}(p(f(n, m, log K) + Td^3)),$$
    in which $T$ is the number of iterations in LinUCB, $K$ is the largest value of the single item in the CO problem.
\end{proof}

Note that among common algorithms for solving combinatorial optimization problems—namely, approximate algorithms, heuristic algorithms, and reinforcement learning (RL) methods—only approximate algorithms have a complexity analysis and performance guarantee. For heuristics and RL methods, complexity analysis remains an open challenge. Hence, we choose approximate algorithms as the approximate best response. Additionally, considering the continuous strategy space of the continuous player, LinUCB is an appropriate algorithm for computing its best response.

\newpage
\section{Pseudocode of Algorithms} \label{Pseudo}
The pseudocode for CCDO and CCDOA is presented in Algorithms \ref{alg:CCDO} and \ref{alg:CCDOA}, respectively.

\bigskip\bigskip\bigskip\bigskip

\begin{algorithm}[!h]
    \caption{Combinatorial-Continuous Double Oracle Algorithm}
    \label{alg:CCDO}
    \renewcommand{\algorithmicrequire}{\textbf{Input:}}
    \renewcommand{\algorithmicensure}{\textbf{Output:}}
    \begin{algorithmic}[1]
        \REQUIRE Game $\mathcal{G} = (X, Y, u)$, $\epsilon \geq 0$.
        \ENSURE $(p_k^*, q_k^*)$.   
        \STATE  Initialize strategy set $X_1$, $Y_1$.
        \REPEAT
            \STATE Solve the mixed equilibrium $(p_k^*, q_k^*)$ in the subgame $(X_k, Y_k)$.
            \STATE Find the best response $x_{k+1}, y_{k+1}$: 
               $x_{k+1} \in \mathbb{BR}_1(q_k^*), y_{k+1} \in \mathbb{BR}_2(p_k^*)$. \\
            \STATE $X_{k+1} = X_{k} \cup \{x_{k+1}\}$, $Y_{k+1} = Y_{k} \cup \{y_{k+1}\}$.
        \UNTIL{$U(x_{k+1}, q_k^*) - U(p_k^*, y_{k+1}) \leq \epsilon$}
    \end{algorithmic}
\end{algorithm}


\bigskip\bigskip\bigskip\bigskip\bigskip\bigskip\bigskip\bigskip

\begin{algorithm}[ht]
    \caption{Combinatorial-Continuous Double Oracle Approximate Algorithm}
    \label{alg:CCDOA}
    \renewcommand{\algorithmicrequire}{\textbf{Input:}}
    \renewcommand{\algorithmicensure}{\textbf{Output:}}
    \begin{algorithmic}[1]
        \REQUIRE Game $\mathcal{G} = (X, Y, u)$, $\epsilon \geq 0$.  
        \ENSURE $\sigma_k^*$.   
        \STATE  Initialize strategy set $\Pi_{1,0}$, $\Pi_{2,0}$.
        \REPEAT
            \STATE Solve the mixed equilibrium $(p_k^*, q_k^*)$ in the subgame $(X_k, Y_k)$.
            \STATE Find the $\epsilon_1$- best response, $\epsilon_2$- best response $x_{k+1}^{\epsilon_1}, y_{k+1}^{\epsilon_2}$: \\
               $x_{k+1} \in \mathbb{BR}_1^{\epsilon_1}(q_k^*), y_{k+1} \in \mathbb{BR}_2^{\epsilon_2}(p_k^*)$. \\
            \STATE $X_{k+1} = X_{k} \cup \{x_{k+1}\}$, $Y_{k+1} = Y_{k} \cup \{y_{k+1}\}$.
            \IF{$U(x_{k+1}^{\epsilon_1}, q_k^*) \leq U(p_k^*, q_k^*)$} 
                \STATE $x_{k+1}^{\epsilon_1} = x$ random in $p_k^*$, $X_{k+1} = X_{k}$.
            \ELSE
                \STATE $X_{k+1} = X_{k} \cup \{x_{k+1}^{\epsilon_1}\}$.
            \ENDIF
            \IF{$U(p_k^*, y_{k+1}^{\epsilon_2}) \geq U(p_k^*, q_k^*)$}
                \STATE $y_{k+1}^{\epsilon_2} = y$ random in $q_k^*$, $Y_{k+1} = Y_{k}$.
            \ELSE
                \STATE $Y_{k+1} = Y_{k} \cup \{y_{k+1}^{\epsilon_2}\}$.
            \ENDIF
        \UNTIL{$U(x_{k+1}^{\epsilon_1}, q_k^*) - U(p_k^*, y_{k+1}^{\epsilon_2}) \leq \epsilon$}
    \end{algorithmic}
\end{algorithm}

\newpage
\section{Experiments Parameters and Settings} \label{app_ex_para_set}
\subsection{Parameters of CCDO-RL} 

\begin{table}[h]
    \caption{Parameters of CCDO-RL}
    \vspace{\baselineskip}
    \label{tab_hyper_ccdorl}
    \centering
    \begin{tabular}{lllllll}
    \toprule
    parameter & ACVRP20 & ACVRP50 & ACSP20 & ACSP50 & PG20 & PG50 \\
    \midrule
    Iteration & 26 & 35 & 24 & 35 & 13 & 12 \\  
    Batchsize (prog/adv) & 512 & 512 & 512 & 512 & 1024 & 1024 \\
    Prog training epoch & 10 & 25 & 10 & 20 & 150 & 150 \\
    Prog training decoder & sampling & sampling & sampling & sampling & sampling & Top\_p-sampling\\
    Prog eval/test decoder & greedy & greedy & greedy & greedy & greedy & beam-search\\
    Learning rate (prog) & 1e-4 &1e-4&1e-4&1e-4&2e-4&2.5e-4 \\
    Adv BR training epoch & 5 & 20 & 20 & 20 & 20 & 50 \\
    Learning rate (adv) & 1e-4 & 5e-5 & 5e-5 & 5e-5 & 5e-5 & 5e-5 \\
    Clip range (PPO in adv) & 0.2 & 0.2 & 0.2  & 0.2 & 0.2 & 0.2 \\
    Value Func $\lambda$ (PPO) & 0.5 & 0.5 & 0.5 & 0.5 & 0.5  & 0.5 \\
    Entropy $\lambda$ (PPO) & 0.01 & 0.0  & 0.0 & 0.0 & 0.0 & 0.0 \\
    Max gradient (PPO) & 0.5 & 0.5 & 0.5 & 0.5 & 0.5 & 0.5 \\
    \bottomrule
    \end{tabular}
\end{table}

\subsection{Problem Setting}

\subsubsection{ACSP}

The adversarial covering salesman problem (ACSP) is one variant of the traveling salesman problem (TSP) with adversaries. The biggest difference is that each city $i$ has one coverage radius $r_i$. If some unvisited cities are covered by visited cities, then these unvisited are seen as being visited in the TSP. Hence the salesman in the ACSP aims to find the shortest path such that all cities have been visited or covered. However, due to some external factors like transportation situations, the coverage radius may be influenced. Similar to the influence model in (\ref{linrob}), the real coverage radius is
\begin{eqnarray}
    r_i = \hat{r}_i + \sum_{m,n=1}^K \phi_{i,m}\phi_{i,n}y_{mn}, 
\end{eqnarray}
where $\phi_{i,m}$ is the $m_{th}$ element of $\phi_i$, i.e. the transportation vector at the city $i$ correlated with the city's location, i.e. $\phi_i = \frac{1}{2} \phi \cdot loc_i$ where $\phi$ is one constant $K \times 2$ matrix. and $y_{mn}$ is the component of the environmental parameter matrix $y$ controlled by the adversary. Hence the objective function is like the ACVRP,
\begin{eqnarray}
    \begin{aligned}
        \min_{x \in X_{csp}} \max_{y \in [0, 1]^9} length(x),
    \end{aligned}
\end{eqnarray}
where $length$ is the summary distance from the start to end (two cities can be different), and the path $x$ should visit or cover all cities. 

\subsubsection{ACVRP}

The adversarial capacitated vehicle routing problem (ACVRP) is that there is one depot and one vehicle with constrained good capacity which starts and ends at the depot where the vehicle can supplement goods. The objective of this vehicle is to find the shortest path while satisfying all the demands of customers on the map. Each customer $i$ owns its two-dimension position $(x,y)$ and an estimated demand $\hat{d}_i$. According to the influence model (\ref{linrob}), the real demand $d_i$ is set as follows:
\begin{eqnarray}
    d_i = \hat{d_i} + \sum_{m,n=1}^K \omega_{i,m}\omega_{i,n}y_{mn}, 
\end{eqnarray}
in which $\omega_{i,m}$ is the $m_{th}$ element of $\omega_i$, i.e. the weather vector at the customer point $i$, and $\alpha_{mn}$ is the component of the environmental parameter matrix $\alpha$ controlled by the adversary. In our setting, we assume that the weather condition is related to the customer's location, i.e. $\omega_i = \frac{1}{2} \omega \cdot loc_i$ where $\omega$ is one constant $K \times 2$ matrix. Before the vehicle chooses the customer to deliver goods, it can only know the estimated demand $\hat{d}_i$. When arriving at the chosen customer, the vehicle knows the real demand of this customer. It should be noted that goods can't be split up. In other words, if the remaining capacity of goods can't satisfy the real demand of the chosen customer, the vehicle should come there again until its current capacity meets the real demand $d_i$. 

Hence the objective function of the vehicle is to minimize the maximal path $x$ under the changeable environment parameters $y \in [0, 1]^9$, meeting all the demands of customers
\begin{eqnarray}
    \begin{aligned}
        \min_{x \in X_{cvrp}} \max_{y \in [0, 1]^9} length(x),
    \end{aligned}
\end{eqnarray}
where $length$ is the summary distance from the start to end, adopting the Euclidean distance. 

\subsubsection{PG}

About the PG, we set it as the classical security patrolling game with two players: one defender and one attacker. There are $N$ targets to protect, each one has an individual prize $v_i$ and an estimated attack probability $\hat{p}_i$ defined by objective factors. The defender tries to find a path to maximize the cumulative prize attained from some targets prevented from attacks successfully under the total distance constraint. For the attacker, it will decide the real attack probability of each target based on the estimated probability $\hat{p}_i$ and its objective is to reduce the cost of being caught, equivalent to reducing the total patrolling revenue of the defender. Identically, the attack probability on each target also follows the influence model in (\ref{linrob}), 
\begin{eqnarray}
    p_i = \hat{p}_i + \sum_{m,n=1}^K \xi_{i,m}\xi_{i,n}y_{mn}, 
\end{eqnarray}
where $\xi_i$ is the attacker's preference vector to the target $i$, to keep consistent with the two settings before, we still set the preference vector as related to the location, i.e. $\xi_i = \frac{1}{2} \xi \cdot loc_i$ where $\xi$ is one constant $K \times 2$ matrix. The objective function of this security game is
\begin{eqnarray}
    \begin{aligned}
        \max_{x \in X_{op}} \min_{y \in [0, 1]^9} \sum_{i=1}^{N} p_i v_i \mathbb{I}_{\Pi},
    \end{aligned}
\end{eqnarray}
where $\Pi$ is the set of patrolled targets on the path $x$.

\subsection{Attention Model \& Hyperparameters}

\subsubsection*{Instance generation}
For ACVRP and ACSP, we use the default data generated from RL4CO \citep{berto2023rl4co}.
For PG, we use the instances generated from a generator provided in the AI for TSP competition hosted at IJCAI21 \footnote{https://github.com/paulorocosta/ai-for-tsp-competition}.  

\subsubsection*{Model Architecture}

For the protagonist of three COPs, we use the REINFORCE algorithm with the Attention network used by possessing the graph information in RL4CO. About decoders, we use the static embedding for ACVRP and PG provided in RL4CO \citep{berto2023rl4co}, and a dynamic embedding for ACSP from \citep{li2021deep}.

About the encoder, three instances all adopt the attention network with 8 heads while the number of the network layer is 3 for ACSP and ACVRP, and 5 for PG. Different COPs have their own state and context of the environment, input details for three COPs are as follows:
\begin{itemize}
    \item \textbf{ACVRP:} The environment context is each customer's location, demand, and weather vector. The state at time slot $t$ is the current location and remaining capacity of the vehicle.
    \item \textbf{ACSP:} The environment context is each city's location, estimated coverage, and transportation vector. The state at time slot $t$ is the current location( the chosen node to visit at the last time slot) and the start point.
    \item \textbf{PG:} The environment context is each target's location, estimated attack probability, and prize. The state at time slot $t$ is the start point and the current location of the defender.
\end{itemize}

\section{Discussion on scalability and potential optimization of CCDO-RL}

\subsection{Scalability Analysis of CCDO-RL}

In CCDO-RL, three components need to be trained or computed:
\begin{enumerate}
    \item The combinatorial player’s policy. This player solves a combinatorial optimization problem (COPs) under a specific strategy of the adversary.
    \item The continuous player (as the adversary) with an infinite continuous strategy space. 
    \item The computation of Mixed Nash Equilibria (NE) in the subgame.
\end{enumerate}

Next, we will analyze the computation time for each component individually, from both theoretical and experimental perspectives. For the experimental part, we will use the 50-node Patrolling Game (PG) scenario, which is the most challenging problem instance in our experiments, as an example.
\begin{enumerate}
    \item The combinatorial player is trained using Graph Neural Networks (GNN) and REINFORCE to find feasible and optimal solutions for NP-complete COPs. This complexity requires reinforcement learning to invest more time and data for effective model training. In the experiment, training a stable and high-performing combinatorial model takes 26 minutes (10000 data, 1024 batch size, 150 epochs) with the continuous player fixed.
    \item The continuous player is trained by PPO to tackle a one-step problem with a continuous objective function building on strategies of the combinatorial player. It still utilizes GNN to understand graph structure. One action per episode reduces training times compared to the combinatorial player while achieving similar approximate error levels. In our experiments, training a high-performing model takes only 4 to 5 minutes (10000 data, 1024 batch size, and 50 epochs), less than one-fifth of the time required for the combinatorial player. 
    \item For the NE solution, the mixed equilibria in a zero-sum game can be solved by the linear programming method which has polynomial complexity in the size of the game tree. From the perspective of theoretical complexity and experiment implementation, the computational time is negligible (less than 2s).
\end{enumerate}

From the statement above, we can conclude that more than five-sixths of the computation time is spent training the model or strategy of the combinatorial player. Therefore, a crucial aspect of addressing the scalability issue is to enhance the speed of solving the Combinatorial Optimization Problems (COPs) using Reinforcement Learning (RL).

\subsection{Potential Optimization of Scalability}

In this subsection, we briefly discuss two main aspects as potential ways of improvement.

\textbf{COPs Simplification Method}
\begin{enumerate}
    \item The pruning method: this one was introduced in the original scale of COPs to reduce the number of possibly useful actions. In this way, the computational burden will be decreased \citep{manchanda2019learning,lauri2023learning}.
    \item Broken down into subproblems: in some concrete COPs like TSP \citep{fu2021generalize}, and VRP \citep{hou2023generalize}, the originally large-scale problem can be broken down into smaller problems to solve, thereby reducing the solution difficulty.
\end{enumerate}

\textbf{RL algorithms}
\begin{enumerate}
    \item Learning Time Reduction: increase the sampling data quality by attaining good-performance data from pre-trained RL models or heuristic algorithms on COPs (seemingly like the model-based RL). 
    \item NN Model Adjustment: most constructive neural network fitting combinatorial optimization can not solve problems with large-scale instance sizes. One feasible way is to design an NN model with strong scalability which means that the trained model on small-scale problem instances can be used on large-scale ones, such as in influence maximization \citep{chen2023touplegdd}. 
    \item Distributed training: reduces the time required for training by splitting the computational workload across multiple devices.
\end{enumerate}

\subsection{Experiment results of CCDO-RL's scalability}

We test the CCDO-RL model (trained on 50-node graphs) on larger CSP and PG scenarios. On unseen 100-node and 200-node graphs (100 of each type), CCDO-RL outperformed other baselines while requiring significantly less test time compared to the heuristic algorithm (especially in CSP), as demonstrated in Tables \ref{tab_scale}.

\begin{table}[htb]
  \caption{Scalability results on ACSP and PG (smaller is better in ACSP, larger is better in PG)}
  \vspace{\baselineskip}
  \label{tab_scale}
  \centering
  \small
  \begin{tabular}{lllll}
    \toprule
    \multirow{2}{*}{method} & \multicolumn{2}{c}{CSP} & \multicolumn{2}{c}{PG} \\
    \cmidrule(r){2-3} \cmidrule{4-5} & 100 nodes & 200 nodes & 100 nodes & 200 nodes \\
    \midrule

    Heuristic & 7.38 (5h 46mins) & 6.95 (7h 16mins) & 7.71 (53s) & 11.01 (120s) \\
    RL against Stoc & 7.34  & 9.86  & 7.83  & 9.24 \\
    CCDO-RL   & $\pmb{4.61}$ & $\pmb{4.89}$ & $\pmb{8.42}$  & $\pmb{11.07}$ \\
    \bottomrule
  \end{tabular}
\end{table}

\clearpage\newpage
\section{Additional Experimental Results}\label{app_exp}
\subsection{Convergence to NE in 50-Node Graphs}

All experiments on three COPs are implemented in Python and conducted on two machines. One is NVIDIA GeForce RTX 4090 GPU and Intel 24GB 24-Core i9-13900K. The other is NVIDIA V100 GPU and Inter 256 GB 32-Core Xeon E5-2683 v4.

Illustrated by Fig. \ref{fig_exploit_50}, we observe that CCDO-RL also converges close to the real NE in 35 iterations for ACSP and ACVRP, and for PG it takes 12 iterations. Their runtimes are 10h 20mins, 4h 40 mins, and 9h 6mins respectively. Exploitability of ACSP, ACVRP, and PG are 0.06, 0.27, and 0.13 respectively. The phenomenon that exploitabilities on three COPs of 50 nodes are all larger than that on the 20-node map is reasonable and acceptable because the hardness of solving solutions grows exponentially on NP-hard problems, such as these three. 
\vspace{-3mm}
\begin{figure}[htbp]
	\centering
    \subfigure[ACSP50]{
    \label{csp50_nashconv}
    \includegraphics[scale=0.20]{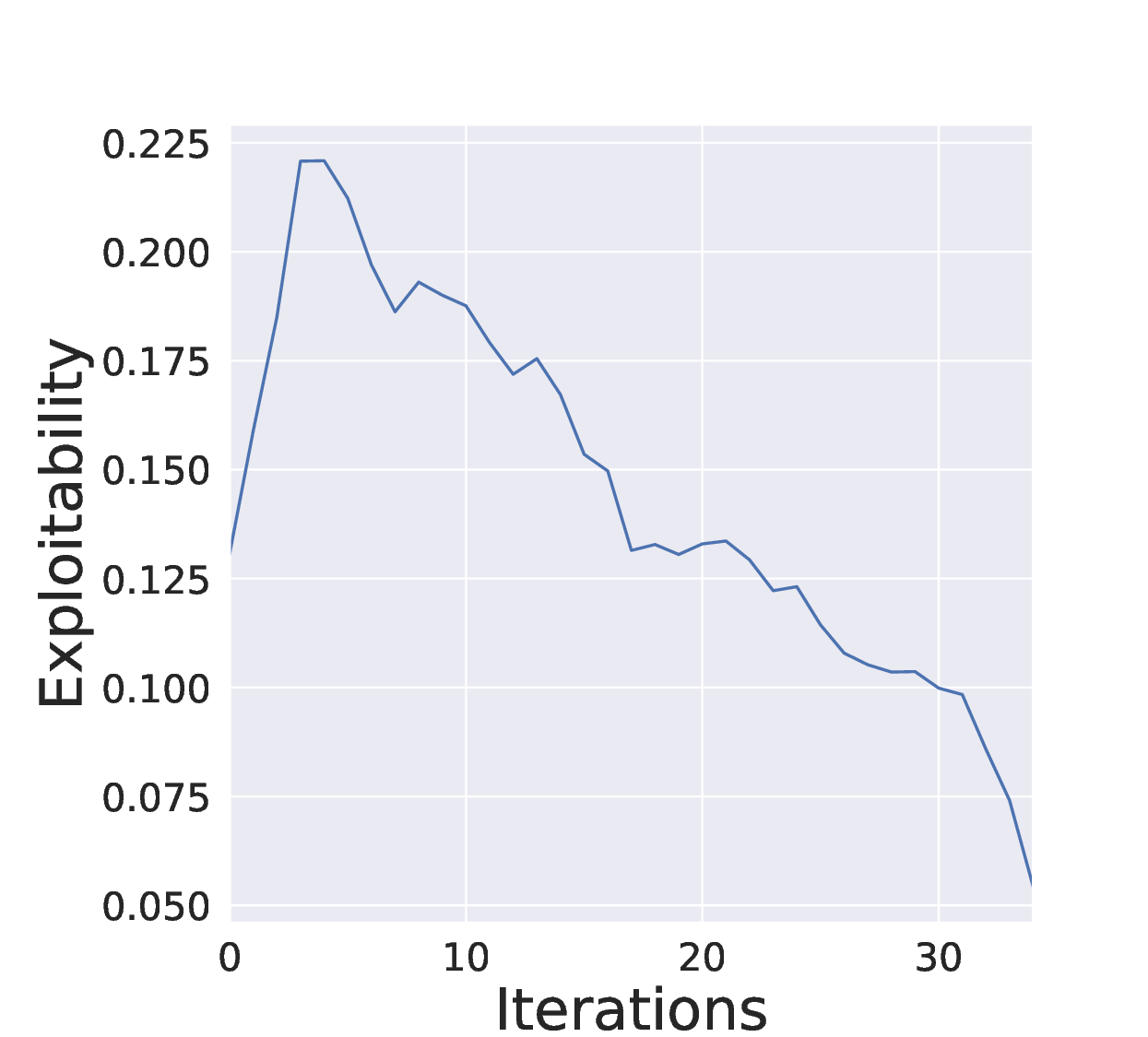}
    }
    \subfigure[ACVRP50]{
    \label{cvrp50_nashconv}
    \includegraphics[scale=0.20]{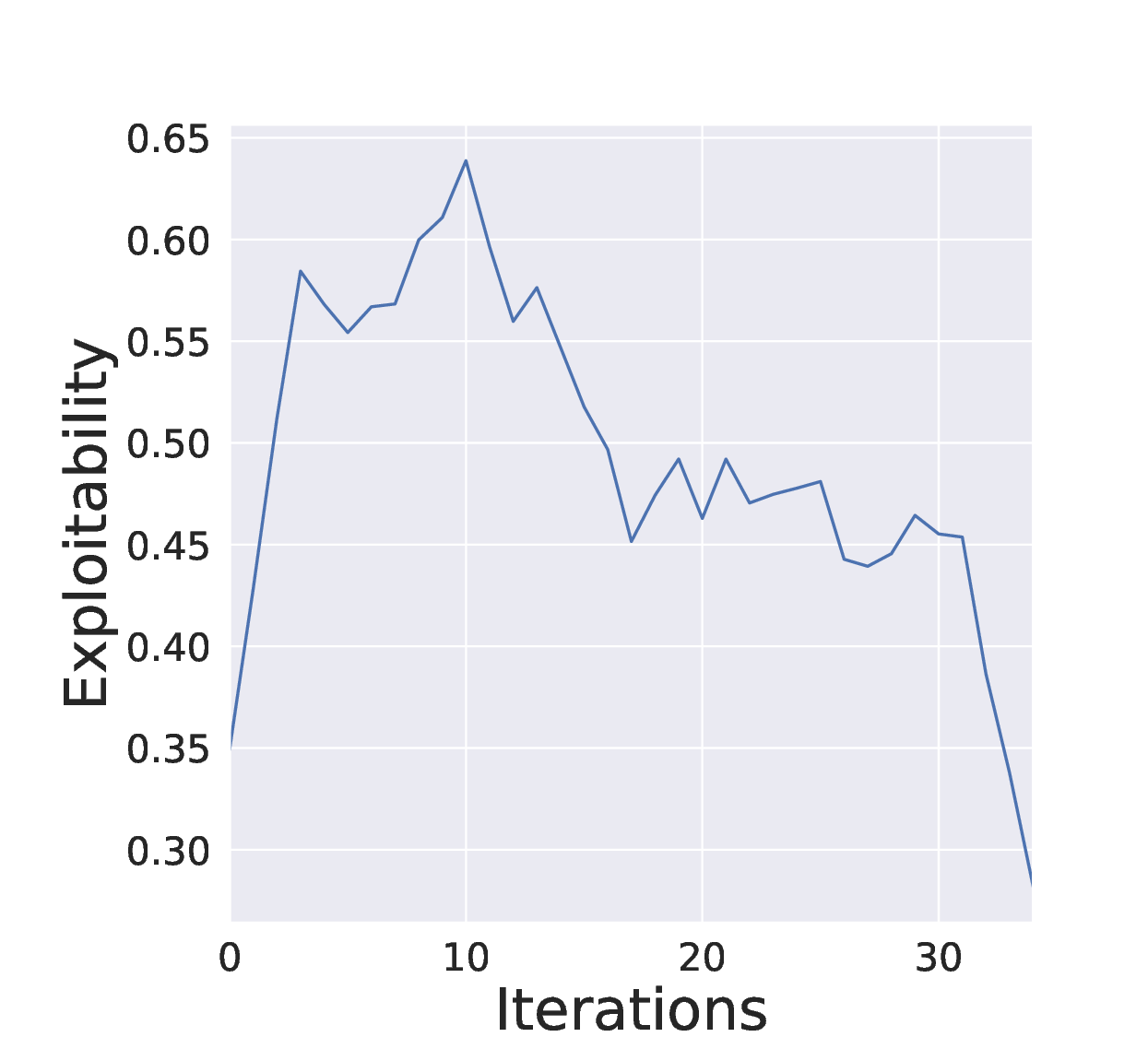}
    }
    \subfigure[PG50]{
    \label{opsa50_nashconv}
    \includegraphics[scale=0.20]{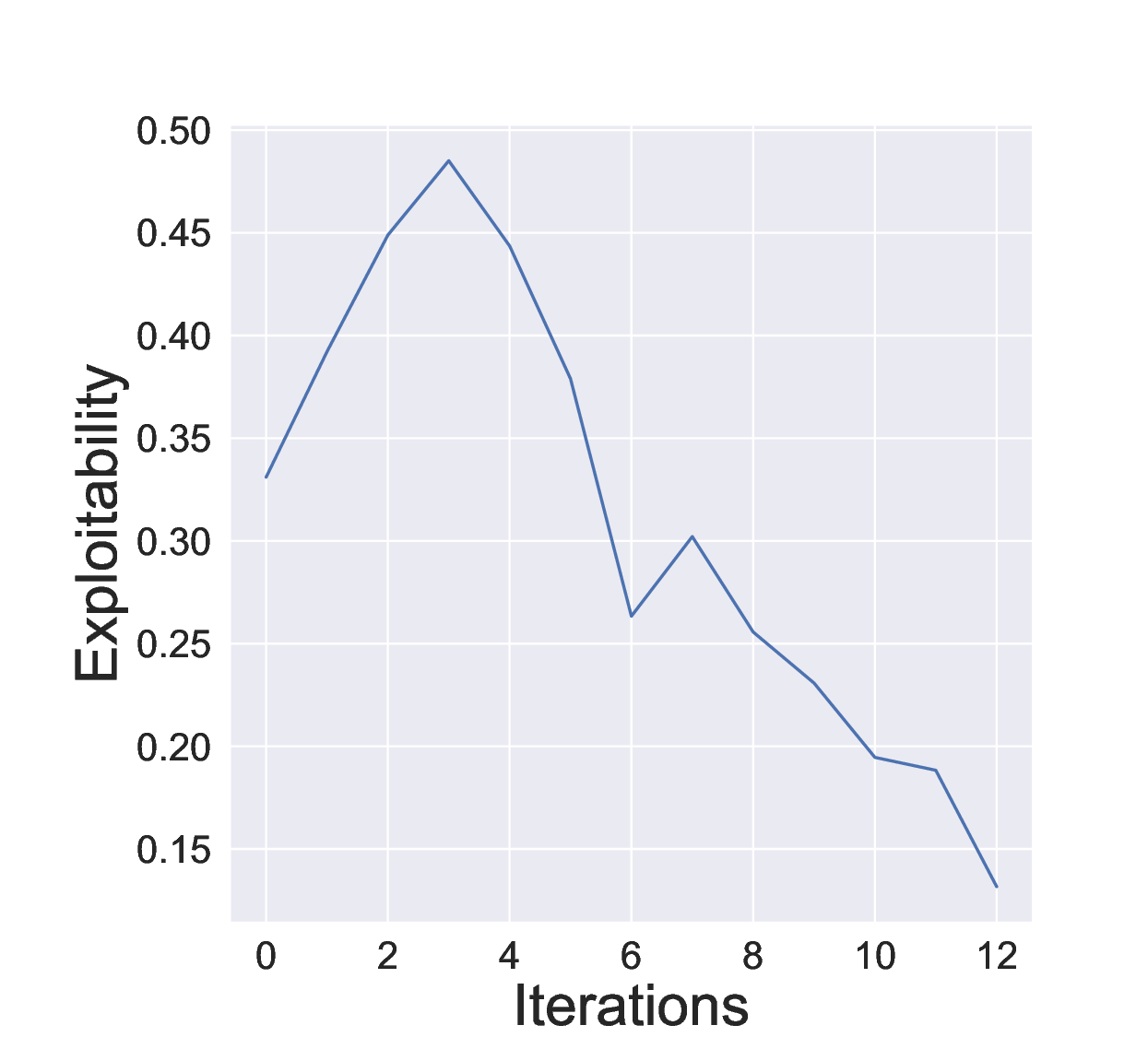}
    }
    \caption{Exploitability on Three COPs of 50 Nodes}
    \label{fig_exploit_50}
\end{figure}



\subsection{Full results of CCDORL} \label{app_rob_machine}
Here we replenish and analyze results against the stochastic adversary on three COPs. "Adv" or "no adv" columns in following tables indicates whether all instances are influenced by the learned adversary or a random adversary, respectively.

From "no adv" columns in Tables \ref{tab_csp_full_20} and \ref{tab_csp_full_50} (ACSP), we can see the average reward (seen graphs) and generalizability (unseen graphs) of the combinatorial player trained in CCDO-RL are both better than others, even though the RL baseline is trained against the stochastic adversary solely. The average improvement against RL baseline is 3.96\% on different types of graphs and different nodes. Similarly under the ACVRP and PG settings, average improvements against are 3.88\% and 2.72\% respectively. We can find CCDO-RL can also get the better reward under the usual stochastic setting, not just the adversarial setting.

\begin{table}[htb]
  \caption{Full results on ACSP in 20 nodes (smaller values are better)}
  \vspace{\baselineskip}
  \label{tab_csp_full_20}
  \centering
  \small
  \begin{tabular}{lllll}
    \toprule
    \multirow{2}{*}{method} & \multicolumn{2}{c}{seen graphs} & \multicolumn{2}{c}{unseen graphs} \\
    \cmidrule(r){2-3} \cmidrule{4-5} & no adv & adv & no adv & adv \\
    \midrule

    Heuristic & 6.17$\pm$1.23 & 6.13$\pm$1.20   & 6.03$\pm$1.30  & 6.20$\pm$1.33 \\
    RL against Stoc & 3.39$\pm$0.46  & 3.50$\pm$0.47  & 3.39$\pm$0.46  & 3.56$\pm$0.37 \\
    CCDO-RL   & $\pmb{3.18}$$\pm$0.44 & $\pmb{3.25}$$\pm$0.42  & $\pmb{3.19}$$\pm$0.41  & $\pmb{3.31}$$\pm$0.35 \\
    \bottomrule
  \end{tabular}
\end{table}

\begin{table}[h]
  \caption{Full results on ACSP in 50 nodes (smaller values are better)}
  \vspace{\baselineskip}
  \label{tab_csp_full_50}
  \centering
  \small
  \begin{tabular}{lllll}
    \toprule
    \multirow{2}{*}{method} & \multicolumn{2}{c}{seen graphs} & \multicolumn{2}{c}{unseen graphs} \\
    \cmidrule(r){2-3} \cmidrule{4-5} & no adv & adv & no adv & adv \\
    \midrule

    Heuristic & 7.50$\pm$1.50 & 7.55$\pm$1.42 & 7.57$\pm$1.49 & 7.60$\pm$1.37 \\
    RL against Stoc & 4.29$\pm$0.61 & 4.55$\pm$0.62 & 4.20$\pm$0.56 & 4.57$\pm$0.58 \\
    CCDO-RL  & $\pmb{4.16}$$\pm$0.48 & $\pmb{4.31}$$\pm$0.51 & $\pmb{4.17}$$\pm$0.48 & $\pmb{4.39}$$\pm$0.52 \\
    \bottomrule
  \end{tabular}
\end{table}
\vspace{\baselineskip}

\begin{table}[htb]
  \caption{Full results on ACVRP in 20 nodes (smaller values are better)}
  \vspace{\baselineskip}
  \label{tab_cvrp_full_20}
  \centering
  \small
  \begin{tabular}{lllll}
    \toprule
    \multirow{2}{*}{method} & \multicolumn{2}{c}{seen graphs} & \multicolumn{2}{c}{unseen graphs}  \\
    \cmidrule(r){2-3} \cmidrule{4-5} 
                            & no adv & adv & no adv & adv \\
    \midrule
   
    Heuristic & 7.50$\pm$1.36 &  7.65$\pm$1.23  & 7.74$\pm$1.30  & 7.64$\pm$1.30  \\
    RL against Stoc & 7.68$\pm$1.32  & 7.55$\pm$1.16  & 7.70$\pm$1.30  & 7.67$\pm$1.30 \\
    CCDO-RL   & $\pmb{7.43}$$\pm$1.26 & $\pmb{7.42}$$\pm$1.21  & $\pmb{7.62}$$\pm$1.30  & $\pmb{7.55}$$\pm$1.28 \\
    \bottomrule
  \end{tabular}
\end{table}
\vspace{\baselineskip}

\begin{table}[htb]
  \caption{Full results on ACVRP in 50 nodes (smaller values are better)}
  \vspace{\baselineskip}
  \label{tab_cvrp_full_50}
  \centering
  \small
  \begin{tabular}{lllll}
    \toprule
    \multirow{2}{*}{method} & \multicolumn{2}{c}{seen graphs} & \multicolumn{2}{c}{unseen graphs}  \\
    \cmidrule(r){2-3} \cmidrule{4-5} 
                            & no adv & adv & no adv & adv \\
    \midrule
   
    Heuristic & 13.22$\pm$1.75 & 13.38$\pm$1.70 & 13.45$\pm$1.67 & 13.27$\pm$1.87 \\
    RL against Stoc & 13.89$\pm$1.85 & 13.90$\pm$1.63 & 13.95$\pm$1.70 & 13.85$\pm$1.53 \\
    CCDO-RL  & $\pmb{13.14}$$\pm$1.72 & $\pmb{13.28}$$\pm$1.52 & $\pmb{13.14}$$\pm$1.72 & $\pmb{13.15}$$\pm$1.59 \\
    \bottomrule
  \end{tabular}
\end{table}
\vspace{\baselineskip}

\begin{table}
\centering
  \caption{Full results on PG in 20 nodes (larger values are better)}
  \vspace{\baselineskip}
  \label{tab_op_full_20}
  \small
  \begin{tabular}{lllll}
    \toprule
    \multirow{2}{*}{method} & \multicolumn{2}{c}{seen graphs} & \multicolumn{2}{c}{unseen graphs}  \\
    \cmidrule(r){2-3} \cmidrule{4-5} 
                            & no adv & adv & no adv & adv \\
    \midrule
    Heuristic & 2.70$\pm$1.15 & 2.64$\pm$1.03   & 2.64$\pm$1.18  & 2.43$\pm$0.98 \\
    RL against Stoc & $\pmb{2.81}$$\pm$1.25  & 2.71$\pm$0.90  & 2.71$\pm$1.35  & 2.50$\pm$0.95 \\
    CCDO-RL  & 2.75$\pm$1.06 & $\pmb{2.75}$$\pm$0.87  & $\pmb{2.77}$$\pm$1.19  & $\pmb{2.56}$$\pm$0.92 \\
    \bottomrule
  \end{tabular}
\end{table}
\vspace{\baselineskip}

\begin{table}
\centering
  \caption{Full results on PG in 50 nodes (larger values are better)}
  \vspace{\baselineskip}
  \label{tab_op_full_50}
  \small
  \begin{tabular}{lllll}
    \toprule
    \multirow{2}{*}{method} & \multicolumn{2}{c}{seen graphs} & \multicolumn{2}{c}{unseen graphs}  \\
    \cmidrule(r){2-3} \cmidrule{4-5} 
                            & no adv & adv & no adv & adv \\
    \midrule
    Heuristic & 4.69$\pm$1.81 & 4.53$\pm$1.84 & 4.47$\pm$2.02 & 4.19$\pm$1.69 \\
    RL against Stoc & 4.87$\pm$2.75 & 4.80$\pm$2.18 & 4.58$\pm$2.42 & 4.26$\pm$2.17 \\
    CCDO-RL & $\pmb{5.12}$$\pm$1.97 & $\pmb{5.01}$$\pm$1.91 & $\pmb{4.84}$$\pm$2.16 & $\pmb{4.70}$$\pm$1.94 \\
    \bottomrule
  \end{tabular}
\end{table}

\end{document}